\PassOptionsToPackage{cleveref}{jmlrutils} 
\documentclass[final,12pt]{colt2026} 


\usepackage{graphicx} 
\usepackage[utf8]{inputenc}
\usepackage{tikz}
    \usetikzlibrary{shapes,decorations}
    \usetikzlibrary{positioning}
    \usetikzlibrary{external} 
\usepackage{amsfonts, mathrsfs, mathtools}
\usepackage{nicefrac}
\usepackage{MnSymbol}
\usepackage{algorithm}
\usepackage{algorithmicx}
\usepackage[noend]{algpseudocode}
\algrenewcommand\algorithmicindent{1em}
    
\usepackage{enumitem}
\usepackage{hyperref}
\usepackage{placeins}


\newcommand{\cE}{\mathcal{E}}
\newcommand{\cF}{\mathcal{F}}
\newcommand{\cG}{\mathcal{G}}
\newcommand{\cH}{\mathcal{H}}

\newcommand{\cO}{\mathcal{O}}

\newcommand{\cR}{\mathcal{R}}
\newcommand{\cS}{\mathcal{S}}

\newcommand{\cX}{\mathcal{X}}


\newcommand{\E}{\mathbb{E}}

\newcommand{\I}{\mathbb{I}}

\newcommand{\N}{\mathbb{N}}

\newcommand{\Pb}{\mathbb{P}}
\newcommand{\bbR}{\mathbb{R}}
\newcommand{\R}{\mathbb{R}}


 \newcommand{\e}{\varepsilon}

\newcommand{\lrb}[1]{\left(#1\right)}
\newcommand{\brb}[1]{\bigl(#1\bigr)}
\newcommand{\Brb}[1]{\Bigl(#1\Bigr)}
\newcommand{\bbrb}[1]{\biggl(#1\biggr)}
\newcommand{\lsb}[1]{\left[#1\right]}
\newcommand{\bsb}[1]{\bigl[#1\bigr]}
\newcommand{\Bsb}[1]{\Bigl[#1\Bigr]}

\newcommand{\Bbsb}[1]{\Biggl[#1\Biggr]}
\newcommand{\lcb}[1]{\left\{#1\right\}}
\newcommand{\bcb}[1]{\bigl\{#1\bigr\}}

\newcommand{\one}[1]{\mathrm{1}\{#1\}}

\newcommand{\lce}[1]{\left\lceil#1\right\rceil}

\newcommand{\lfl}[1]{\left\lfloor#1\right\rfloor}

\newcommand{\labs}[1]{\left\lvert#1\right\rvert}

\newcommand{\lno}[1]{\left\lVert#1\right\rVert}

\newcommand{\lan}[1]{\left\langle#1\right\rangle}
\newcommand{\ban}[1]{\bigl\langle#1\bigr\rangle}
\newcommand{\Ban}[1]{\Bigl\langle#1\Bigr\rangle}


\DeclareMathOperator*{\argmax}{argmax}
\newcommand{\dif}{\,\mathrm{d}}
\newcommand{\lip}{Lipschitz}

\newcommand{\UCB}{\mathrm{UCB}}

\newcommand{\KL}{\mathrm{KL}}

\newcommand{\iop}{\infty}
\newcommand{\Var}{\mathrm{Var}}
\newcommand{\ceq}{\coloneqq}

\DeclareSymbolFont{extraup}{U}{zavm}{m}{n}
\DeclareMathSymbol{\clubsuit}{\mathalpha}{extraup}{84}
\DeclareMathSymbol{\spadesuit}{\mathalpha}{extraup}{81}
\DeclareMathSymbol{\varheartsuit}{\mathalpha}{extraup}{86}
\DeclareMathSymbol{\vardiamondsuit}{\mathalpha}{extraup}{87}

\newcommand{\sA}{\mathscr{A}}

\renewcommand{\tilde}{\widetilde}

\newtheorem{assumption}{Assumption}    
\newtheorem{Corollary}{Corollary}

\crefalias{learningprotocol}{algocf}
\crefname{learningprotocol}{learning protocol}{learning protocols}
\Crefname{learningprotocol}{Learning protocol}{Learning protocols}

\newenvironment{proofsketch}{\paragraph{Proof sketch}}{\hfill$\blacksquare$}

\title[Online Budget Allocation with Censored Semi-Bandit Feedback]{Online Budget Allocation with Censored Semi-Bandit Feedback}
\usepackage{times}



\coltauthor{%
	\Name{François Bachoc} \Email{francois.bachoc@univ-lille.fr}\\
	\addr Université de Lille, Institut universitaire de France (IUF), Lille, France
	\AND
	\Name{Nicolò Cesa-Bianchi} \Email{cesa-bianchi@di.unimi.it}\\
	\addr Università degli Studi di Milano,
	Department of Computer Science, Milano, Italy
	\AND
	\Name{Tommaso Cesari} \Email{tcesari@uottawa.ca}\\
	\addr EECS,
	University of Ottawa,
	Ottawa, Canada, K1N 6N5
	\AND
	\Name{Roberto Colomboni} \Email{roberto.colomboni@polimi.it}\\
	\addr Politecnico di Milano, DEIB, Milano, Italy\\
	Università degli Studi di Milano, Department of Computer Science, Milano, Italy
}
\begin{document}

\maketitle

\begin{abstract}%
We study a stochastic budget-allocation problem over $K$ tasks.
At each round $t$, the learner chooses an allocation $X_t \in \Delta_K$.
Task $k$ succeeds with probability $F_k(X_{t,k})$, where $F_1,\dots,F_K$ are nondecreasing \emph{budget-to-success} curves, and upon success yields a random reward with unknown mean $\mu_k$.
The learner observes which tasks succeed, and observes a task's reward \emph{only} upon success (\emph{censored} semi-bandit feedback).
This model captures, for instance, splitting payments across crowdsourcing workers or distributing bids across simultaneous auctions, and subsumes stochastic multi-armed bandits and semi-bandits.

We design an optimism-based algorithm that operates under censored semi-bandit feedback.
Our main result shows that in \emph{diminishing-returns} regimes, the regret of this algorithm scales polylogarithmically with the horizon $T$ without any \emph{ad hoc} tuning.
For general nondecreasing curves, we prove that the same algorithm (with the same tuning) achieves a worst-case regret upper bound of $\tilde O(K\sqrt{T})$.
Finally, we establish a matching worst-case regret lower bound of $\Omega(K\sqrt{T})$ that holds even for \emph{full-feedback} algorithms, highlighting the intrinsic hardness of our problem outside diminishing returns.%
\end{abstract}

\begin{keywords}%
Cumulated regret, stochastic bandit, minimax bounds, asymptotic bounds, censored feedback, semi-bandit feedback.%
\end{keywords}

\section{Introduction}

The multi-armed bandit, a mathematical model for sequential decision-making with partial feedback, has been applied to a wide range of domains, including clinical trials, product recommendation, online advertising, digital markets, and process optimization. A multi-armed bandit consists of an action space and an unknown reward process associated with each action. At every decision round, a bandit algorithm selects an action and observes the reward produced at that round by the process associated with the chosen action \citep{lattimore2020bandit}. 
In this work we focus on a setting where rewards are stochastic and actions have a linear structure. 
The expected reward is $\ban{\mu,F(x)}$, where $x$ is the selected action, $\mu\in [0,1]^K$ are unknown coefficients, and $F\colon\Delta_K\to [0,1]^K$ is some (possibly nonlinear) function defined on the probability simplex $\Delta_K$, with $F(x)$ to be interpreted as the means of $K$ Bernoulli random variables.
The realizations of these random variables, which the algorithm observes after each round, reveal which of the 
$K$ components of the chosen action actually contributed to the total reward; for each such contributing component, the algorithm observes the corresponding reward.
The total reward is then the sum of the rewards associated with the completed tasks.
Hence, compared to the standard bandit feedback in linear bandits \citep{dani2008stochastic}, the feedback we receive---what we call \emph{censored semi-bandit}---is more informative.
This natural censored-feedback model goes beyond the settings treated in prior analyses and leads to stronger learning guarantees.
To illustrate the scope of the model, we highlight two representative scenarios that fit our framework (with action $x \in \Delta_K$, completion probabilities $F(x)$, and expected reward $\ban{\mu,F(x)}$):
\begin{itemize}[wide]
    \item In crowdsourcing, a platform allocates in each round a fixed budget across $K$ workers.
    Paying worker $k$ a share $x_k$ induces task completion with probability $F_k(x_k)$, naturally increasing with $x_k$.
    Thus $F(x) = \brb{ F_1(x_1),\ldots,F_K(x_K) }$ models how payments translate into completion.
    The platform's expected reward in the round is $\ban{ \mu, F(x) }$, where $\mu_k\in[0,1]$ captures the (unknown) expected value of the work produced upon completion by the $k$-th worker.

    \item An autobidding agent spends in each round a fixed budget to bid in $K$ simultaneous auctions. Auctions are won with probabilities $F(x) = \brb{(F_1(x_1),\ldots,F_K(x_K)}$ depending on the agent's bids $x = (x_1,\ldots,x_K)$ and the competing bids in each auction. In each round, the agent's expected reward $\ban{\mu,F(x)}$ depends on the unknown value of each auctioned item, which is modeled via $K$ random variables with means $\mu\in [0,1]^K$.
\end{itemize}
Other applications include online model selection, where a computing time budget is split among several instances of a learning system (e.g., a LLM trained with different hyperparameter values) and the expected reward is computed based on the quality of the output produced by those instances that terminated training (or inference) within the allocated time---see, e.g., \cite{xia2024llm} for work along these lines.

\subsection{Setting}
\label{s:setting}

For any $n\in\N$, we let $[n]$ be the set $\{1,\dots,n\}$.
The action space is the $(K-1)$-dimensional simplex $\Delta_K \coloneqq \{ x \in [0,1]^K : \sum_{k \in [K]} x_k = 1 \}$, for some natural number $K \ge 2$, which we fix for the remainder of this work.
Let $G_{t,k}$ be the stochastic reward associated with the completion of task $k \in [K]$ at time $t$.
For each $k \in [K]$, we model $(G_{t,k})_{t \in \N}$ as an unknown i.i.d.\ sequence of $[0,1]$-valued random variables with unknown expectation $\mu_k$.
If we allocate a budget $x_k \in [0,1]$ on a task $k \in [K]$, then the probability of completing the task $k$ is $F_k(x_k)$, where $F_1,\dots,F_K \colon [0,1] \to [0,1]$ are known\footnote{In Appendix~\ref{s:necessity}, 
we investigate the case where the budget-to-success curves are unknown.
In this case, we obtain a \emph{strong} impossibility result: even for $K=2$, when the budget-to-success curves $F_1$ and $F_2$ are indicator functions, and the random variables $G_t$ are deterministically equal to $(1,1)$, any algorithm must incur worst-case linear regret.
} non-decreasing functions, that we call \emph{budget-to-success} curves.
Also, we let $B_{t,k}(x_k)$ be the Bernoulli random variable that indicates the completion of the task $k$ at time $t$, so that $\E\bsb{ B_{t,k}(x_k) } = F_k(x_k)$.
We model the family of random functions $(B_{t,k})_{t\in \N, k \in [K]}$ as an independent family, 
independent of $(G_{t,k})_{t \in \N,k \in [K]}$.\footnote{Technically, we need the function $(x,\omega) \mapsto B_{t,k}(x)(\omega)$---as a joint function of the index $x$ and the realization $\omega$---to be jointly measurable as well.
This is necessary to avoid measure theoretic pathologies.
On the other hand, most common constructions naturally satisfy this assumption. For example, $B_{t,k}(x) \coloneqq \I\{V_{t,k} \le F_k(x)\}$, where $(V_{t,k})_{t \in \N, k\in [K]}$ is an i.i.d.\ family of uniform random variables on $[0,1]$.
For the sake of simplicity, we shall not discuss measurability issues further in the rest of the paper.}
For notational convenience, we define: 
$\mu \coloneqq (\mu_1,\dots,\mu_K)$, 
$F \colon \Delta_K \to [0,1]^K, (x_1,\dots,x_K) \mapsto \brb{F_1(x_1), \dots, F_K(x_K)}$, and $G_t \coloneqq \lrb{G_{t,1},\dots,G_{t,K}}$. 
Moreover, for each $t \in \N$, we let $B_t$ be the function that maps each budget allocation $x \ceq (x_1,\dots,x_K) \in \Delta_K$ to the vector of random variables $B_t(x) \coloneqq \brb{ B_{t,1}(x_1),\dots,B_{t,K}(x_K) }$.
Finally, if $a,b \in \R^K$, we denote by $\langle a,b \rangle$ their scalar product $\sum_{k \in [K]} a_k b_k$.

Given these definitions, the online learning interaction is summarized in \Cref{a:learning-protocol}.

\begin{algorithm2e}[t]
\renewcommand*{\algorithmcfname}{Learning protocol}
\For{time step $t=1,2,\dots$}{
  The learner chooses $X_t \in \Delta_K$\;
  The learner gains $\ban{G_t,\, B_t(X_t)}$\;
  The learner observes some feedback $\mathscr{F}_t$\;
}
\caption{%
    \label[learningprotocol]{a:learning-protocol}
}
\end{algorithm2e}

We focus on the following two feedback models.\footnote{%
Except in \Cref{t:LB:linear:bandit}, where we introduce the third model of \emph{bandit} feedback to highlight the gap with the standard linear-bandit literature.
Outside of \Cref{t:LB:linear:bandit}, we establish all \emph{upper bounds} under the realistic and challenging censored semi-bandit model, and we prove the corresponding \emph{lower bounds} under the simpler full-feedback model.
In particular, our minimax lower bounds under full feedback apply \emph{a fortiori} to censored feedback, showing that our worst-case guarantees under censored semi-bandit feedback match what is achievable even by powerful algorithms with full-feedback access.
}
\begin{itemize}[wide]
    \item \textbf{Censored semi-bandit:}  $\mathscr{F}_t$ is $B_t(X_t)$ and $\brb{B_{t,1}(X_{t,1})\,G_{t,1}, \, \dots, \, B_{t,K}(X_{t,K})\,G_{t,K}}$.
    In words, the learner observes which tasks have been completed and the rewards of \emph{only} these completed tasks.
    \item \textbf{Full feedback:} $\mathscr{F}_t$ is $B_t(X_t)$ and $G_t$. In this case, the learner observes the rewards of \emph{all} tasks.
\end{itemize}
The performance of the learner at any time horizon $T \in \N$ is evaluated with the \emph{regret}, defined as
\begin{equation}
\label{e:regret}
    R_T
\coloneqq
   T \left( \sup_{x \in \Delta_K} \ban{ \mu, F(x) } \right) - \sum_{t=1}^T \E\Bsb{ \ban{ \mu, F(X_t) } }    
\end{equation}
i.e., the cumulative difference between the expected reward obtained by the best allocation of the budgets and the expected  performance of the learning algorithm generating the sequence $X_1, X_2, \dots$.
If $\sup_{x \in \Delta_K} \ban{ \mu, F(x) }$ is attained, we will denote one of its maximizers by  $x^\star(\mu)$ to stress the dependence on $\mu$, or simply by $x^\star$ if this dependence is clear from context.
To give guarantees that are uniform over all possible problem instances, we will also consider the worst-case regret $R^\star_T \ceq \sup R_T$, where the supremum is taken over all choices of $F_1,\dots,F_K$, and joint distributions of $(G_{t,1},\dots, G_{t,K})$.

We conclude this section by noting that our model subsumes the stochastic combinatorial semi-bandit model \citep{chen2013combinatorial} with $M$-sets and $M\in[K]$ (and, in particular, $K$-armed bandits when $M=1$).
Indeed, let $\I_A$ denote the indicator of a set $A$, and let $\boldsymbol e_1,\dots,\boldsymbol e_K$ be the standard basis of $\R^K$.
If $F_1 \coloneqq \cdots \coloneqq F_K \coloneqq \I_{[1/M,1]}$, then task $k$ is completed if and only if it receives at least a $1/M$ budget share.
In this case, the Pareto-optimal allocations are exactly those of the form $x=\frac{1}{M}\sum_{m=1}^M \boldsymbol e_{i_m}$ with $i_1,\ldots,i_M$ distinct in $[K]$, so in each round the learner selects $M$ arms and earns and observes their individual rewards, recovering the standard semi-bandit feedback model with $M$ plays per round.

\subsection{Our contributions}


We introduce a flexible online learning setting that generalizes the classic semi-bandit framework and models, in particular, crowdsourcing, multi-platform autobidding, and the allocation of compute resources for training large language models (LLMs). 
We make the following contributions.
\begin{itemize}[wide]
\item We design an intuitive optimism-based algorithm (\Cref{algo:ucbowski}) that operates using only censored semi-bandit feedback observations.
\item For a practically relevant and technically challenging \emph{diminishing-returns} regime, we prove that \Cref{algo:ucbowski} achieves a fast polylogarithmic regret rate \emph{without any ad hoc tuning} (\Cref{t:upper-bound-speed-up}).

\item We prove that this fast rate \emph{cannot} be obtained from \emph{bandit} feedback alone: any (linear) bandit algorithm incurs $\Omega(\sqrt{T})$ regret even in the diminishing-returns regime (\Cref{t:LB:linear:bandit}).
\item For general nondecreasing curves, we prove that \Cref{algo:ucbowski} achieves a worst-case regret of $\tilde{\cO}\brb{K\sqrt{T}}$ (again, without any \emph{ad hoc} tuning; \Cref{t:upper-bound-worst-case}).
\item For general nondecreasing curves, we prove that no algorithm can guarantee worst-case regret better than $\Omega\brb{K\sqrt{T}}$, not even with full-feedback access (\Cref{t:lower-bound-worst-case}).
\end{itemize}
These results not only establish the worst-case optimality of \Cref{algo:ucbowski}, but also highlight its ability to adaptively exploit structure, illustrating its broad applicability in real-world domains.

\subsection{Techniques and challenges}

In this section, we highlight the main challenges of our setting and the techniques we use to address them.

\paragraph{Why off-the-shelf bandit techniques fail.}
Under censored semi-bandit feedback, the learner can compute the realized \emph{total} reward in each round, and therefore has access to standard bandit feedback.
Thus, in principle, one could attempt to apply bandit algorithms directly over the (infinite) action set $\Delta_K$.

However, classical bandit guarantees scale with the number of arms $H$ (typically as $\sqrt{HT}$), and in our problem $H$ is not finite.
A standard workaround in $\cX$-armed bandits is to discretize the action space, provided the expected reward function is sufficiently regular (e.g., Lipschitz).
In our setting, this regularity is \emph{not} guaranteed: the budget-to-success curves $F_k$ need not be Lipschitz, and consequently the induced expected reward may exhibit arbitrarily steep variations.

Moreover, even under an $L$-Lipschitz assumption, discretization suffers from the curse of dimensionality.
Since the $(1/n)$-covering number of the unit simplex $\Delta_K$ scales as $\Theta(n^{K-1})$, an $(1/n)$-net requires $\Theta(n^{K-1})$ grid points to ensure that every point (in particular, any optimum) lies within distance $1/n$ of the grid.
This yields regret guarantees of the form
$
    O\lrb{\frac{L}{n}T + \sqrt{n^{K-1}T}},
$
which, after optimizing over $n$, give the nonparametric rate
$
    O\brb{L^{(K-1)/(K+1)}\,T^{K/(K+1)}}.
$

One might hope to do better by exploiting bandit feedback together with the \emph{linear} structure of the realized reward, which can be written as $\langle G_t, B_t(X_t)\rangle$.
Yet, as discussed in \Cref{s:related_works} and further shown in \Cref{t:LB:linear:bandit}, even linear-bandit techniques are suboptimal in our setting.

\textbf{Diminishing-returns regime.}
Proving (poly-)logarithmic regret when the budget-to-success curves exhibit strong diminishing returns raises three main challenges.

\emph{(i) From estimation error to second-order regret.}
We first develop a self-bounding argument showing that, when the optimal allocation lies in the interior of the simplex, optimizing the objective computed from estimated task means yields a second-order \emph{instantaneous} regret that scales quadratically with the estimation error in the mean-reward vector.

\emph{(ii) Interior optimality without assuming it.}
A subtler issue is that interior optimality cannot be taken for granted.
We show that it is in fact an implication of our regime assumptions: when every task has strictly positive mean reward, the optimal allocation is forced away from the boundary, and no component is starved.

\emph{(iii) Quantifying the feedback rate under optimism.}
Finally, we must control how quickly the mean estimates concentrate under the optimistic allocation rule.
Using concentration inequalities, we prove that once the cumulative mass allocated to a task is large enough (logarithmic in $T$), the number of observed completions is of the same order with high probability.
Consequently, each mean can be estimated at the $1/\sqrt{t}$ rate (the same as in full-feedback regimes, despite only having access to censored semi-bandit feedback), which combined with the second-order regret property yields the desired (poly-)logarithmic guarantee.

\textbf{Worst-case upper bound.}
We obtain optimal worst-case regret by exploiting two key features of our model.
First, the per-round reward decomposes as a sum of contributions from $K$ tasks.
Second, allocating more budget to a task increases its chance of completion and hence the probability of observing its reward; in expectation, the rate at which we collect observations from a task is coupled with the reward we can obtain from it.
Together, these properties naturally suggest an optimistic strategy: as soon as a component looks promising, we funnel more budget into it.
This increases the flow of feedback from that component, which either confirms the choice or quickly reveals it to be suboptimal, allowing the learner to reallocate before incurring large regret.

We formalize this intuition via an argument reminiscent of the elliptical potential lemma \cite[Lemma~19.4]{lattimore2020bandit}.
In particular, under our optimistic allocation rule, the regret attributable to each task can be controlled by the square root of the number of times its reward is observed, which yields an overall worst-case regret of order $K\sqrt{T}$ (\Cref{t:upper-bound-worst-case}).

\textbf{Worst-case lower bound.}
To show that $K\sqrt{T}$ is the best possible rate in general---\emph{even with full-feedback access}---we construct a hard instance with $2K$ tasks, arranged into $K$ disjoint pairs.
Within each pair, one task is slightly better than the other.
The budget-to-success curves $F_1,\ldots,F_{2K}$ are designed so that, in each round, any allocation effectively selects a subset of tasks that are \emph{active} (success probability~$1$) while the remaining tasks are \emph{inactive} (success probability~$0$).
This structure allows us to embed $K$ independent two-armed bandit problems, one per pair.

A key technical difficulty is making this embedding rigorous for \emph{arbitrary} algorithms over the simplex.
Indeed, the learner is not restricted to activating exactly one task per pair: it may allocate budget so as to activate none in some pairs and both in others.
We prove that, nevertheless, any strategy can be converted into (or compared against) a strategy that induces $K$ independent two-armed decisions without decreasing regret.
We are not aware of an analogous reduction in the existing literature.

Once this reduction is established, the remaining argument is somewhat closer to the existing literature than for the reduction mentioned above, and follows a classical bandit lower-bound blueprint:
since each of the $K$ embedded two-armed problems incurs $\Omega(\sqrt{T})$ regret in the worst case, the overall regret is at least $\Omega(K\sqrt{T})$.
However, making this step precise requires a careful information-theoretic analysis tailored to our feedback model, relying on tools from statistical hypothesis testing and information theory.

We conclude this section by remarking that our lower bound is consistent with a standard observation from combinatorial semi-bandits: in the $M$-set semi-bandit, when $M=\lfl{K/2}$ (so that half of the coordinates is observed each round), the minimax worst-case rate is of the same order under semi-bandit and full-feedback.
However, this connection is not entirely plug-and-play in our model: semi-bandits arise as a special case only after restricting attention to Pareto-optimal allocations (which, under the threshold curves, correspond exactly to uniform $1/M$ allocations over an $M$-subset), and our proof makes this reduction explicit.

\subsection{Related works}
\label{s:related_works}

Our model is related to stochastic linear bandits; see, e.g., \citet[Chapter~19]{lattimore2020bandit}. 
In the linear bandit setting, the learner chooses an action $X_t$ and observes a scalar reward
$\langle \mu, X_t\rangle + \eta_t$, where $\mu$ is an unknown parameter and $\eta_t$ is a zero-mean subgaussian noise term, so that $\langle \mu, X_t\rangle$ is the expected reward.
In our model, the expected reward has the analogous linear form $\langle \mu, F(X_t)\rangle$, where $X_t\in\Delta_K$ is chosen by the learner and $F$ is known.
Equivalently, the learner chooses the feature vector $F(X_t)$, and the dependence on $\mu$ is linear as in stochastic linear bandits.

The feedback, however, is fundamentally different.
Linear bandits provide exactly one scalar observation per round, which is an unbiased estimate of a linear functional of $\mu$.
In contrast, under censored semi-bandit feedback our learner observes a \emph{random number} of scalar rewards per round (one per completed task), and this number can have expectation much larger than one depending on $F_1,\dots,F_K$.
Thus the feedback can be substantially more informative than in linear bandits (and the full-feedback model is richer still).
This difference underlies some of our main results.
In particular, our fast regret bound (\Cref{t:upper-bound-speed-up}) arises in regimes where the feedback is ``rich'' with high probability; see \Cref{lemma:lb:cum:budget} in Appendix~\ref{s:appe-speed-ups}.
In \Cref{t:LB:linear:bandit}, we show that an analogous fast rate does \emph{not} occur in the corresponding linear bandit instance.

For our worst-case upper bound (\Cref{t:upper-bound-worst-case}), another subtle, but fundamental distinction from linear bandits arises.
In our model the per-round reward lies in $[0,K]$, and we obtain worst-case regret of order $K\sqrt{T}$.
In the linear bandit literature (e.g., \citet{bartlett2008high,dani2008stochastic}), the corresponding worst-case regret scales as $K\sqrt{T}$ under a standardized reward range such as $[-1,1]$.
If one rescales that range to length $\Theta(K)$, then those bounds incur an additional factor $K$ relative to ours.
This gap intuitively reflects the different feedback: when the budget-to-success curves $F_1,\dots,F_K$ are small, both the expected reward $\langle \mu, F(X_t)\rangle$ and the effective regret range are small, whereas when they are large the regret range grows but the expected number of observed scalar rewards per round (one per completed task) can exceed one, unlike in linear bandits.

Stochastic linear bandits are traditionally formulated with $X$ and $\mu$ lying in Euclidean balls (``$(\ell_2,\ell_2)$ geometry''). 
More generally, regret bounds for arbitrary $(\ell_p,\ell_q)$ geometries (with $p$ and $q$ being conjugate exponents) were obtained in \citet{bubeck2012towards,bubeck2018sparsity}.
In the adversarial \emph{bandits with experts} model of \citet{auer2002nonstochastic}, the geometry is $(\ell_1,\ell_\infty)$ as in our setting.
However, the feedback there is a single coordinate $\theta_{t,I_t}$, where $\theta_t$ is the adversarial reward vector at time $t$ and $I_t\in[K]$ is drawn according to the chosen distribution $X_t\in\Delta_K$.
\citet{lattimore2014optimal} study a special case of our model where, for each $k\in [K]$, $F_k(x_k) = \min\big\{1,\frac{x_k}{\nu_k}\big\}$, $\nu_k$ is an unknown parameter determining the difficulty of job $k$, and $\mu_k = 1$ (i.e., the player simply seeks to maximize the number of completed jobs). They show a regret bound of order $(\log T)^2$ against the optimal allocation. This matches the rate of our \Cref{t:upper-bound-speed-up}, although the linear behavior of $F_1,\ldots,F_K$ does not fulfill our \Cref{ass:speed-up}.
A related setting has been investigated by \citet{thananjeyan2021resource} for best arm identification. They assume the player has access to a divisible unit resource which is used to pull the arms in each round. The time taken to execute a single pull using a fraction $\alpha$ of this resource is $\lambda\big(\frac{1}{\alpha}\big)$, where $\lambda$ is a monotone increasing known function. For example, the player can pull $m$ arms in parallel in time $\lambda(m)$. Their goal is to probabilistically identify the best arm while minimizing the total time.
Another related model is combinatorial bandits with probabilistically triggered arms \citep{chen2016combinatorial}. Here, for each base arm $i$ and combinatorial super-action $S$, there is an unknown probability $p_i(S)$ that $i$ is triggered when $S$ is played (they assume $p_i(S)=1$ for all $i \in S$).
If $i$ is triggered, then a reward $X_{i,t}$ is generated with unknown mean $\mu_i$. In our model, we select a super-action $x$ that triggers each base arm $k$ with probability $F_k(x_k)$. However, our problem has a continuous set of super-actions and a sublinear regret only when the triggering function $F$ is known, while theirs is combinatorial with unknown triggering probabilities. It is thus unclear whether their results could be specialized to our setting.
Our model is also reminiscent of the binary-classification-with-paid-experts setting studied in \citet{van2023trading} (see also \citealt{zhang2015bandit} for earlier work).
That line considers payments $x\in[0,1]^K$ and unknown Lipschitz functions $F_1,\dots,F_K$ mapping them to accuracies $(\varepsilon_1,\dots,\varepsilon_K)$ via $\varepsilon_k = F_k(x_k)$, where $\varepsilon_k$ is the probability that expert $k$ correctly predicts the $t$-th bit of an unknown bit sequence.
The resulting objective is not of the form $\langle \mu, F(x)\rangle$, so their guarantees do not directly apply to our problem.

Prior bandit formulations for crowdsourcing \citep{tran2014efficient}, multi-platform autobidding \citep{aggarwal2025multi}, and multi-channel advertising \citep{avadhanula2021stochastic,deng2023multi} typically impose \emph{global} (across-round) budget constraints.
By contrast, we consider \emph{per-round} budget constraints; replacing them with knapsack-style constraints changes the nature of the problem and makes it closer to constrained online optimization with bandit feedback.
Moreover, per-round constraints better capture some applications.
For example, in cost-efficient inference for LLMs \citep{li2025llm,nguyen2024metallm}, one selects which model to query for each input prompt by trading off generation cost against output quality on a per-prompt basis.
In crowdsourcing, stochastic models of acceptance probability and reliability were investigated by \citet{ul2016efficient} without providing performance guarantees.
A setting closer to ours is studied in \citet{kang2020task}, where $F_1,\dots,F_K$ are restricted to be linear.
In that work, these linear functions are known only for some $k\in[K]$, and the learner cannot observe whether tasks with unknown $F_k$ complete successfully.
They prove a $\sqrt{T}$-type regret bound for a notion of reward different from ours.


\section{Algorithm and polylogarithmic regret rate under diminishing returns}
\label{s:speed-ups}

First, we introduce \Cref{algo:ucbowski},\footnote{%
Note that some minimal regularity of $F$ is necessary so that the $\argmax$ in the first line of the \texttt{for} loop is nonempty and one can select $X_t$ measurably.
We implicitly assume this to be true.
To keep the presentation simple, we also assume to have access to an oracle that returns such a maximizer exactly.
In practice, such an oracle can be replaced by an $\varepsilon$-approximate solver under mild regularity assumptions on $F$; e.g., if each $F_k$ is Lipschitz (or concave, as in our diminishing-returns setting), standard optimization techniques can compute measurable $\varepsilon$-optimal solutions.%
} 
a UCB-style algorithm designed to operate under censored semi-bandit feedback. 
At each round, the learner selects an allocation that optimistically maximizes estimated rewards based on past feedback. 
Since rewards are observed only for completed tasks, the algorithm maintains task-wise upper confidence bounds (UCBs).
The confidence width evolves over time as a function of the cumulative completion count and required confidence level. 

To lighten the notation, from \Cref{algo:ucbowski} onward, we adopt the convention that $0/0\ceq0$.

\begin{algorithm2e}[t]
\DontPrintSemicolon
  \textbf{Input:}
  Time horizon $T\in\N$,
  confidence parameter $\delta\in (0,1)$\;
  \textbf{Initialization:} Let $\UCB_{0}^\delta \ceq (\UCB_{0,1}^\delta, \dots, \UCB_{0,K}^\delta)$ where, for all $k\in[K]$, $\UCB_{0,k}^\delta \ceq \sqrt{\log \frac{2}{\delta}} \!\!\!$\;
    \For{time step $t=1,\dots,T$}
    {
        Play $X_t \in \Delta_K$ such that $
        X_t \in 
        \argmax_{x \in \Delta_K} \lan{\UCB_{t-1}^\delta , F(x)} 
        $
        \label{state:ucb-pick}\;
        Update $\UCB_{t}^\delta \ceq (\UCB_{t,1}^\delta,\dots, \UCB_{t,K}^\delta)$, where, for all $k\in[K]$,
        \[
            \UCB_{t,k}^\delta
        \coloneqq
            \frac{ \sum_{s=1}^{t} B_{s,k}(X_{s,k})G_{s,k}}{\sum_{s=1}^{t} B_{s,k}(X_{s,k})}
            +
            \sqrt{\frac{\log\frac{2}{\delta}}{1+\sum_{s=1}^{t} B_{s,k}(X_{s,k})}}
        \]
        with the understanding that $\UCB_{t,k}^\delta \ceq \UCB_{0,k}^\delta$ if $B_{1,k}(X_{1,k}) =  \dots =  B_{t,k}(X_{t,k}) = 0$ 
    }
\caption{%
    \label{algo:ucbowski}
}
\end{algorithm2e}

We now investigate the realistic setting where the functions $F_1, \dots, F_K$ exhibit diminishing returns, i.e., are concave. We show that concavity, coupled with the assumption that the components of $\mu$ are non-zero, results in
polylogarithmic rates.
We flesh out the general case in Appendix~\ref{s:appe-speed-ups}, but for the sake of readability, in this section, we will restrict our attention to the special case of \emph{concave power functions}, i.e., of functions $f \colon [0,1]\to[0,1]$ for which there exists $a\in(0,1)$ such that, for all $x\in[0,1]$, $f(x)\ceq x^a$.
This case suffices to present some of the main ideas of the general analysis in Appendix~\ref{s:appe-speed-ups} while keeping the presentation relatively light.
Furthermore, this is a special case of particular interest as it models many important behaviors like \emph{scaling laws} of LLMs.
\begin{assumption}
    \label{ass:speed-up} For any $k\in[K]$, $\mu_k$ is non-zero and $F_k$ is a concave power function $x \mapsto x^{a_k}$ for $a_k \in (0,1)$.
\end{assumption}
\begin{theorem}
    \label{t:upper-bound-speed-up}
    Under \Cref{ass:speed-up}, for any time horizon $T \ge 2$, if we run \Cref{algo:ucbowski} with parameters $T$ and $\delta \coloneqq \frac{1}{(KT)^2}$, its regret satisfies
    \[
        R_T
    \le 
    c_K
 (\log T)^2 \;,
    \]
    where $c_K$ is a polynomial function of $K$, which powers and coefficients depend on $a_1,\dots,a_K$ and $\mu_1,\dots,\mu_K$ (see \Cref{prop:explicit:constant} in Appendix~\ref{s:appe-speed-ups} for more details).
\end{theorem}
\begin{proofsketch}
Fix any $T\ge2$ and $a_1,\dots,a_K\in(0,1)$ such that, for all $k\in[K]$ and $x_k\in(0,1)$, $F_k(x_k) = x_k^{a_k}$, and let $a_{\max} \ceq \max_{k\in[K]} a_k$.
We prove this result in three steps.

\emph{Step 1: \eqref{e:lemma:LB:bernoullis}.} The first key observation is that, with high probability, at any time step, the number of times any given task has been completed is at least proportional to a deterministic function of the budget invested in the task so far, assuming that a sufficient amount of budget has been invested in it.
More precisely, there exist two numerical constants $c_1,c_2>0$ (independent of $T$ and $K$) such that, with probability at least $1-\nicefrac{K}{T}$, for all $t\in[T]$ and $k\in[K]$, the following event $\cG_T$ holds:
\begin{equation}
\label{e:lemma:LB:bernoullis}
    \sum_{s=1}^t X_{s,k}^{a_k}
\ge 
    c_1 (\log T)^2
\implies
    \sum_{s=1}^t
    B_{s,k}(X_{s,k}) 
\ge 
    c_2 \sum_{s=1}^t X_{s,k}^{a_k} \;.
\end{equation}
We prove \eqref{e:lemma:LB:bernoullis} in 
\Cref{lemma:LB:bernoullis} in Appendix~\ref{s:appe-prelim}, 
for the general diminishing-return setting, leveraging the fact that, for all $k\in[K]$,  
$
\brb{ 
X_{t,k}^{a_k}
-
B_{t,k}( X_{t,k} )
}_{t \in \N}
$
is a martingale difference sequence, together with some specific properties of our setting and a concentration argument (a version of Bernstein's inequality for martingales).

To be able to exploit the consequent in \eqref{e:lemma:LB:bernoullis}, we must show that its antecedent is satisfied. 
To this end, we prove in 
\Cref{cor:Hoeffding:gains} in 
Appendix~\ref{s:appe-prelim}
that one can
define a ``good event'' $\cE^\delta_T$, 
as the event that all the $\UCB_{t,k}^\delta$ estimators of all the $\mu_k$ maintained by \Cref{algo:ucbowski} are sufficiently accurate approximations of $\mu_k$ (see also \eqref{eq:good:event:ET-main} in \Cref{s:upper-bound} for the precise expression of $\cE^\delta_T$), 
such that $\cE^\delta_T$ occurs with probability at least $1-\delta$.
Consequently, we prove that we can find a $T_0 = \Theta\brb{ (\log T)^2 }$ such that, under $\cE^\delta_T$ and $\cG_T$, it holds that  
$
    \sum_{s=1}^t X_{s,k}^{a_k}
\ge 
    c_1 (\log T)^2
$
for all $t\ge T_0$.
Fix such a $T_0$ and $\cE^\delta_T$.

\emph{Step 2: \eqref{e:quadratic-loss}.}
For any $m\ceq(m_1,\dots,m_K)\in[0,\iop)^K$, define the function $\Phi_m \colon \Delta_K \to [0,\infty)$, for all $x \ceq (x_1,\dots,x_K)\in\Delta_K$, by $\Phi_m(x) \ceq \sum_{k \in [K]} m_k x_k^{a_k}$.
In words, $\Phi_m$ is the expected reward function when the underlying mean reward of each task $k\in[K]$ is $m_k$.
Hence, denoting, for any $m\in [0,\iop)^K$, a maximizer (chosen arbitrarily in case of ties) of $\Phi_m$ by $x^\star(m)$, we have that $x^\star(\mu)$ is the optimal budget allocation and $x^\star(\UCB_{t-1}^\delta)$ is the action played by \Cref{algo:ucbowski} at time $t$.


The next claim is the keystone to obtaining a polylogarithmic regret rate.
It states that the instantaneous regret due to maximizing a slightly misspecified reward function $\Phi_{\tilde{m}}$ when the correct one is $\Phi_m$ is only quadratic in the distance between $m$ and $\tilde m$, as long as the components of $m$ and $\tilde m$ are bounded away from zero.
More precisely, for any $m_{\inf}\in(0,1)$, there exists a constant $c_3>0$ (independent of $T$) such that, for any two vectors $m,\tilde m \in [m_{\inf},2]$, it holds that
\begin{equation}
\label{e:quadratic-loss}
    \Phi_m \brb{ x^\star(m) } - \Phi_m \brb{ x^\star (\tilde{m}) }
\le 
    c_3 \lno{m-\tilde{m}}_2^2 \;.
\end{equation}
We prove \eqref{e:quadratic-loss} in 
\Cref{lemma:stability} in
Appendix~\ref{s:appe-speed-ups}
for the general diminishing-return setting, leveraging the strong concavity of $F_1, \dots, F_K$, as well as their smoothness on $[x_{\inf},1]$, for any $x_{\inf}>0$.

Now, to be able to exploit \eqref{e:quadratic-loss}, we must show that all components of $x^\star(\mu)$ and $x^\star(\UCB_{t}^\delta)$, for $t$ large enough, are uniformly bounded below by a constant $c_4>0$ (independent of $T$, and playing the role of $x_{\inf}$ above).
We prove this in 
\Cref{lemma:lb:cum:budget} in
Appendix~\ref{s:appe-speed-ups} 
for the general diminishing-return setting.

\emph{Step 3: Bounding the regret.}
Recalling that $\delta \coloneqq \frac{1}{(KT)^2}$, $T_0 = \Theta\brb{  (\log T)^2 }$ and exploiting \eqref{e:lemma:LB:bernoullis} and \eqref{e:quadratic-loss}, we have, under $\cE^\delta_T$ and $\cG_T$ (thus, with probability at least $1-\nicefrac{K}{T}-\delta$), that
\begin{align*}
    R_T 
&
\overset{\phantom{(\circ)}}{\le}
    K T_0
    +
    \sum_{t=T_0}^T 
    \Brb{
        \Phi_{\mu} \brb{ x^\star( \mu ) }
        -
        \Phi_{\mu} \brb{ x^\star( \UCB^\delta_{t-1} ) }
    }
\overset{\eqref{e:quadratic-loss}}{\le} 
    K T_0
    +
    \sum_{t=T_0}^T 
    c_3 \lno{\mu - \UCB_{t-1}^\delta}_2^2
\\ 
& 
\overset{(\circ)}{\le}
    K T_0
    +
    4 c_3
    \sum_{t=T_0}^T 
    \sum_{k=1}^K 
    \frac{\log(\frac{2}{\delta})}{1+\sum_{s=1}^{t-1}
    B_{s,k}(X_{s,k})
    }
\overset{\eqref{e:lemma:LB:bernoullis}}{\le} 
    K T_0
+
    \frac{4 c_3}{c_2}
    \log \lrb{ \frac{2}{\delta} }
    \sum_{t=T_0}^T
    \sum_{k=1}^K
    \frac{1}{\sum_{s=1}^{t-1}
    X_{s,k}^{a_k}
} 
\\
& \overset{(*)}{\le} 
    K T_0
    + 
    \frac{4 c_3 }{ c_2 c_4^{a_{\max}}}
    K
    \log \lrb{ \frac{2}{\delta} }
    \sum_{t=T_0}^T
    \frac{1}{ t-1} 
\le 
    K T_0
    + 
    \frac{4 c_3 }{ c_2 c_4^{a_{\max}}}
    K
    \log \lrb{ \frac{2}{\delta} }
    \log(T),
\end{align*}
where $(\circ)$ follows from being under the good event $\cE_T^\delta$, and $(*)$ from $X_{s,k}^{a_k} \ge c_4^{a_k} \ge c_4^{a_{\max}}$.
\end{proofsketch}

\paragraph{Comparison with linear bandits}
The reader may wonder whether the fast rates achieved by our algorithm can also be obtained by existing methods from the bandit literature (including standard linear bandit algorithms), such as those reviewed in \Cref{s:related_works}.
We now show that this is \emph{not} the case: even in extremely benign instances among those from diminishing-return regimes, \emph{any} method that relies solely on bandit feedback must incur regret of order $\Omega(\sqrt{T})$.
Specifically, the following holds.


\begin{theorem} \label{t:LB:linear:bandit}
    Suppose $K=2$ and for every $b \in [0,1]$, the budget-to-success curves are defined by $F_1(b) \coloneqq \sqrt{b} \eqqcolon F_2(b)$.
    Consider any algorithm running on bandit feedback, that is, in \Cref{a:learning-protocol}, the feedback $\mathscr{F}_t$ is the gain $\ban{G_t,\, B_t(X_t)}$. Then, for any time horizon $T \ge 4$, there exists a valid instance of our problem with expected reward vector $\mu$ such that $\min \lrb{\mu_1, \mu_2} \ge \frac{3}{8}$ and $R_T \ge \frac{1}{140}\cdot\sqrt{T}$ over that instance.
\end{theorem}
We present here a proof sketch and defer the full proof of \Cref{t:LB:linear:bandit} to Appendix~\ref{s:bandit:counter}.
\begin{proofsketch}
Under $F_1(b)=\sqrt{b}=F_2(b)$ and $K=2$, any budget allocation $x\in\Delta_{2}$ corresponds to the unit vector $(\sqrt{x_1},\sqrt{x_2})$ in the first quadrant. Hence, we can parameterize actions by angles $\vartheta\in[0,\pi/2]$ via the relation $(\sqrt{x_1},\sqrt{x_2})=(\cos\vartheta,\sin\vartheta)$.
Thus, the expected reward is $\langle \mu,(\cos\vartheta,\sin\vartheta)\rangle$, and the optimal action corresponds to the angle whose direction is
aligned with the (unknown) mean vector $\mu$.

We construct two hard instances by perturbing $(1/2,1/2)$:
\[
    \mu^+ \coloneqq \lrb{\tfrac12+\e,\tfrac12-\e},
    \qquad
    \mu^- \coloneqq \lrb{\tfrac12-\e,\tfrac12+\e}.
\]
Their optimal directions lie on opposite sides of $\pi/4$ (indeed, they are aligned with $\mu^+$ and $\mu^-$, respectively).
If the learner plays near $\vartheta=\pi/4$ (the direction associated to $(1/2,1/2)$), then bandit feedback is essentially
uninformative: the one-step information to distinguish $\mu^+$ from $\mu^-$ is
$O\!\brb{\e^2\labs{\vartheta-\pi/4}^2}$.
Consequently, up to constants, the total KL is controlled by
$\e^2\sum_{t=1}^T \E\!\lsb{\labs{\vartheta_t-\pi/4}^2}$.

Therefore, to gather enough information to distinguish the two instances, the learner must play at angles that are non-negligibly away from $\pi/4$. However, such angles are necessarily far from the optimal direction in at least one of the two instances, and the incurred regret is lower bounded by a quantity proportional to the squared angular distance from that instance's optimum.
This creates an ``apple-tasting''-type exploration-information tradeoff, with the additional difficulty of a
continuum of actions.
Balancing the two effects with $\e=\Theta(T^{-1/4})$ yields a minimax regret lower bound of order
$\Omega(\sqrt{T})$.
\end{proofsketch}

We stress that, since the instance in \Cref{t:LB:linear:bandit} belongs to the diminishing-return class analyzed in \Cref{t:upper-bound-speed-up}, then, in contrast to bandit algorithms, our algorithm achieves polylogarithmic regret guarantee in this setting: 
when $K=2$ and $F_1,F_2$ are as in \Cref{t:LB:linear:bandit}, there exist universal constants $c,C>0$ such that for any valid instance with $\min(\mu_1,\mu_2)\ge \frac{3}{8}$, the regret of our algorithm is at most $c+C \cdot (\log T)^2$.
For a precise formulation of this claim, 
see \Cref{lemma:explicit:Ktwo} in Appendix~\ref{s:appe-speed-ups}.

\section{Worst-case analysis of Algorithm \ref{algo:ucbowski} and matching lower bound}
\label{s:upper-bound}

We now analyze the worst-case performance of \Cref{algo:ucbowski} for any family of budget-to-success curves and reward distributions.
We show that \Cref{algo:ucbowski} is worst-case optimal, both in $K$ and $T$,
up to a logarithmic factor. Due to space constraints, we present here a proof sketch of the next result while deferring the full proof to Appendix~\ref{s:appe-upper-bound-distribution-free}.

\begin{theorem}
    \label{t:upper-bound-worst-case}
    For any time horizon $T$, if we run \Cref{algo:ucbowski} with parameters $T$, $\delta \coloneqq \frac{1}{(KT)^2}$, its regret satisfies
    \[
        R_T
    =
        \widetilde{\cO} \lrb{ 
        \sqrt{K}
        \sqrt{
        1+ \sum_{t=1}^{T}  \E \lsb{ \sum_{k=1}^K F_k(X_{t,k}) } } }
        \;.
    \]
    Therefore, upper bounding $F_k\le 1$ for all $k\in[K]$, the worst-case regret of \Cref{algo:ucbowski} satisfies
    $
        R_T^\star
    =
        \widetilde{\cO} \brb{ K \sqrt{T} }
    $.
\end{theorem}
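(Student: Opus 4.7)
The plan is to follow the standard optimism template adapted to our censored-feedback estimators, in three stages. First, I would define the good event $\cE$ on which, for every $t\in[T]$ and $k\in[K]$, both $\UCB_{t-1,k}^\delta\ge\mu_k$ and $\UCB_{t-1,k}^\delta-\mu_k\le 2\sqrt{\log(2/\delta)/(1+N_{t-1,k})}$ hold, where $N_{t-1,k}\ceq\sum_{s=1}^{t-1}B_{s,k}(X_{s,k})$. The key remark is that, conditionally on which rounds $s$ are completions of task $k$, the observed $(G_{s,k})$ restricted to those rounds is an i.i.d.\ $[0,1]$ sample of mean $\mu_k$ (because $(G_{s,k})_s$ is i.i.d.\ and independent of the family $B_{s,k}$). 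Hoeffding's inequality plus a double union bound over $k\in[K]$ and the $T+1$ possible values of the completion counter yields $\Pb(\cE^c)=O\bigl(KT\delta^2\bigr)$. With $\delta=1/(KT)^2$, the contribution of $\cE^c$ to $R_T$ is negligible (the per-round regret is at most $1$).

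On $\cE$, by the optimistic choice in \Cref{state:ucb-pick} and the UCB validity at $x^\star$, the instantaneous regret reduces to a sum of confidence widths:
\[
    \lan{\mu,F(x^\star)}-\lan{\mu,F(X_t)}
\le \lan{\UCB_{t-1}^\delta,F(X_t)}-\lan{\mu,F(X_t)}
\le 2\sqrt{\log(2/\delta)}\,\sum_{k=1}^{K}\frac{F_k(X_{t,k})}{\sqrt{1+N_{t-1,k}}}.
\]
It therefore suffices to control $\E[S]$, where $S\ceq\sum_{t=1}^{T}\sum_{k=1}^{K} F_k(X_{t,k})/\sqrt{1+N_{t-1,k}}$.

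The crux is an elliptical-potential-style bound on $S$. Splitting $F_k(X_{t,k})/\sqrt{1+N_{t-1,k}}=\sqrt{F_k(X_{t,k})}\cdot\sqrt{F_k(X_{t,k})/(1+N_{t-1,k})}$ and applying Cauchy--Schwarz over the $(t,k)$ pairs gives
\[
    S\le\sqrt{\sum_{t,k}F_k(X_{t,k})}\cdot\sqrt{\sum_{t,k}\frac{F_k(X_{t,k})}{1+N_{t-1,k}}}.
\]
To handle the second factor in expectation I would use the conditional identity $\E[B_{t,k}(X_{t,k})\mid\cH_{t-1},X_t]=F_k(X_{t,k})$ (where $\cH_{t-1}$ is the history and $X_t$ is $\cH_{t-1}$-measurable), which lets me replace $F_k(X_{t,k})$ by $B_{t,k}(X_{t,k})$ inside the expectation. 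The resulting sum collapses pathwise to the harmonic series $\sum_{i=1}^{N_{T,k}}1/i\le 1+\log(1+T)$. A final Cauchy--Schwarz on the outer expectation yields
\[
    \E[S]\le\sqrt{\E\Bsb{\sum_{t,k}F_k(X_{t,k})}\cdot K\brb{1+\log(1+T)}},
\]
and combining with the regret decomposition of the previous paragraph gives the claim
$R_T=\widetilde{\cO}\brb{\sqrt{K\,\E[\sum_{t,k}F_k(X_{t,k})]}}$. The worst-case bound $\widetilde{\cO}(K\sqrt T)$ then follows from $\sum_k F_k(X_{t,k})\le K$.

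The main obstacle is this elliptical-potential step: the denominators $1+N_{t-1,k}$ come from the widths indexed by rounds, while telescoping is natural only along the stochastic completion sequence of each task. The fix is exactly the double use of Cauchy--Schwarz together with the identification of $F_k(X_{t,k})$ as the conditional mean of $B_{t,k}(X_{t,k})$; without it, one could at best obtain a pessimistic $\sqrt{K\cdot KT}$ bound and would miss the data-dependent $\sqrt{K\,\E[\sum F_k]}$ form that is needed later for the diminishing-returns speed-up.
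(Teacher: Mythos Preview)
Your plan is essentially the paper's: high-probability good event, optimism to reduce instantaneous regret to a sum of widths, then a potential argument combined with the tower rule $\E[B_{t,k}(X_{t,k})\mid\cF_{t-1}]=F_k(X_{t,k})$. Two points need correcting or comparing.

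Your justification of the good event is wrong as stated. The claim that ``conditionally on which rounds $s$ are completions of task $k$, the observed $(G_{s,k})$ restricted to those rounds is an i.i.d.\ sample'' is false: the completion indicators $B_{s,k}(X_{s,k})$ depend on the actions $X_{s,k}$, which depend on previously observed rewards, so conditioning on the full completion pattern leaks information about earlier $G$'s (a simple two-round example where $X_{2,k}$ is chosen based on $G_{1,k}$ already breaks it). What is true---and what the paper proves as a separate lemma---is that the sequence $\tilde G_{1,k},\tilde G_{2,k},\dots$ of rewards at successive completion times is \emph{unconditionally} i.i.d.\ with law that of $G_{1,k}$; the argument conditions on $\cF_{t-1}$ and $B_{t,k}(X_{t,k})$ and uses that $G_{t,k}$ is independent of both. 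This yields $\Pb(\cE^c)\le KT\delta$ (not $KT\delta^2$), which with $\delta=1/(KT)^2$ is still negligible.

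Your potential step is correct but differs from the paper and is slightly lossier. The paper applies the tower rule directly to $F_k(X_{t,k})/\sqrt{1+N_{t-1,k}}$ (replacing the numerator by $B_{t,k}(X_{t,k})$), telescopes $\sum_{\tau=1}^{N_{T,k}}1/\sqrt{\tau}\le 2\sqrt{1+N_{T,k}}$, and then applies Jensen over $k$ and over the expectation to reach $\sqrt{K\sum_{t,k}\E[F_k(X_{t,k})]}$. Your double Cauchy--Schwarz routes through the harmonic sum $\sum_\tau 1/\tau$ and therefore picks up an extra $\sqrt{\log T}$; both deliver the stated $\widetilde\cO$ bound, but the paper's route is shorter and tighter.
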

\begin{proofsketch}
Fix any time horizon $T\ge 2$, and let $\delta \ceq \frac{1}{(KT)^2}$.
For conciseness, we introduce, for each $t\in [T]$ and $k\in[K]$, 
$
    N_{t,k}
\coloneqq
    \sum_{s=1}^t B_{s,k}(X_{s,k})\;,
$
and, for all $\tau \in [N_{t,k}]$, we define $\tilde{G}_{\tau,k} \coloneqq G_{S_\tau,k}$, where $S_\tau$ is the random time step $s\in \N$ such that $B_{s,k}=1$ for the $\tau$-th time,
i.e., such that $\sum_{s<S_\tau} B_{s,k} = \tau-1$ and $\sum_{s \le S_\tau} B_{s,k} = \tau$. 
Note that, with this notation, we have that, for each $t\in [T]$ and  $k\in[K]$,
\begin{equation} 
\label{eq:intro:N:Gtilde}
    \UCB^\delta_{t,k}
=
    \frac{ \sum_{\tau=1}^{N_{t,k}} \tilde{G}_{\tau,k}}{N_{t,k}}
    +
    \sqrt{\frac{\log\frac{2}{\delta}}{1+N_{t,k}}}\;.
\end{equation}
Define the ``good event'' as 
\begin{equation} 
\label{eq:good:event:ET-main}
    \cE_{T}^\delta
\coloneqq
    \bigcap_{k \in [K]}
    \bigcap_{t \in [N_{T,k}]}
    \lcb{ 
        \labs{ 
        \frac{1}{t}
        \sum_{\tau=1}^t 
        \tilde{G}_{\tau,k}
        -
        \mu_k
        }
    \le 
        \sqrt{
        \frac{\log\frac{2}{\delta}}{2t}
        }
    }.
\end{equation}
As we prove in \Cref{cor:Hoeffding:gains} in Appendix~\ref{s:appe-prelim}
(note that a simple Hoeffding's inequality does not suffice because the random variables $\tilde G_{\tau,k}$ depend on random time steps chosen as a function of the learner's actions)
, we have
$
    \Pb \bsb{\cE_{T}^\delta}
\ge
    1- K T \delta
$. 
Note that under the good event, for all $t \in [T]$, for all $k \in [K]$,
\begin{equation} \label{eq:UCB:good}
    \left| 
    \frac{\sum_{\tau=1}^{N_{t,k}}
	\tilde{G}_{\tau,k}
    }{N_{t,k}}
    -
    \mu_k
    \right| 
\le 
    \sqrt{
	\frac{\log\frac{2}{\delta}}{1+N_{t,k}}}.
\end{equation}
When $N_{t,k} = 0$, under the convention $0/0 = 0$, this is indeed equivalent to $| \mu_k| \le \sqrt{\log\frac{2}{\delta}}$ which is true because $\log\frac{2}{\delta} \ge \log 4 >1$. When $N_{t,k} \ge 1$, the above display holds because $\sqrt{
	\frac{\log\frac{2}{\delta}}{1+N_{t,k}}} \ge \sqrt{
	\frac{\log\frac{2}{\delta}}{2N_{t,k}}}$ and because the good event holds.
Now, fix an arbitrary $x^\star \in \Delta_K$.
We have
{\allowdisplaybreaks
\begin{align*}
&
    \sum_{t=1}^T
    \E \Bsb{ 
    \ban{\mu,F(x^\star)} 
    - 
    \ban{\mu, 
    F(X_t)}
    }
\le 
    K T \underbrace{ \brb{ 1-\Pb[ \cE_{T}^\delta ] }}_{\le KT\delta}
+
    \E \lsb{ 
    \I_{\cE_{T}^\delta}
    \sum_{t=1}^T  
    \ban{\mu,F(x^\star) - F(X_t)} 
    }
\\
&\quad=
    \underbrace{K^2 T^2 \delta}_{\le 1}
+
    \E \Bbsb{ \I_{\cE_{T}^\delta} \sum_{t=1}^T \underbrace{\ban{ \UCB^\delta_{t-1}, F(x^\star) - F(X_t) }}_{\le 0 \text{ by definition of $X_t$}} }
+
    \E \Bbsb{ \I_{\cE_{T}^\delta} \sum_{t=1}^T \ban{ \underbrace{\UCB^\delta_{t-1} - \mu}_{\ge 0 \text{ in } \cE_{T}^\delta}, F(X_t) - F(x^\star) }}
\\
&\quad\le
    1
+
    \E \lsb{ \I_{\cE_{T}^\delta} \sum_{t=1}^T \Ban{ \UCB^\delta_{t-1} - \mu, \brb{F(X_t) - F(x^\star)}_+}}
=
    1
+
    (\star)\;.
\end{align*}
}
Hence, to complete the proof, it is sufficient to control $(\star)$ as follows
\begin{align*}
&
    (\star)
\le
    \E \lsb{ \I_{\cE_{T}^\delta} \sum_{t=1}^T \sum_{k=1}^K 2
    \sqrt{\frac{\log \frac{2}{\delta}}{1+\sum_{s=1}^{t-1}B_{s,k}(X_{s,k})}}\brb{F_k(X_{t,k}) - F_k(x^\star_k)}_+}
\\
&\qquad\le
  2  \sqrt{\log \frac{2}{\delta}} \cdot \E \lsb{ \sum_{t=1}^T \sum_{k=1}^K 
    \frac{F_k(X_{t,k})}{\sqrt{1+\sum_{s=1}^{t-1}B_{s,k}(X_{s,k})}}}
=
  2  \sqrt{\log \frac{2}{\delta}} \cdot \E \lsb{ \sum_{k=1}^K \sum_{t=1}^T  
    \frac{B_{t,k}(X_{t,k})}{\sqrt{1+\sum_{s=1}^{t-1}B_{s,k}(X_{s,k})}}}
\\
&\qquad=
 2   \sqrt{\log \frac{2}{\delta}} \cdot \E \lsb{ \sum_{k=1}^K\sum_{\tau=1}^{N_{T,k}} 
    \frac{1}{\sqrt{1+(\tau-1)}}}
\le
 2   \sqrt{\log \frac{2}{\delta}} \cdot \E \lsb{ \sum_{k=1}^K
    2\sqrt{1+N_{T,k}}}
\\
&\qquad\le
    4\sqrt{K \log \frac{2}{\delta}} \cdot 
    \sqrt{1+\sum_{k=1}^K \sum_{t=1}^T \E \bsb{ B_{t,k}(X_{t,k})}}
=
    4\sqrt{K \log \frac{2}{\delta}} \cdot 
    \sqrt{1+ \sum_{t=1}^T \E \lsb{ \sum_{k=1}^K F_k\lrb{X_{t,k}}}}\;,
\end{align*}
where, the first inequality holds because we are in the good event $\cE^\delta_T$ (using~\eqref{eq:UCB:good}), the first equality follows from an application of the tower rule after we condition to the history up to the beginning or round $t$ (bringing the summation over $t$ out and in from the expectation), and the last inequality follows from Jensen's inequality.
\end{proofsketch}

As an immediate corollary of \Cref{t:upper-bound-worst-case}, we recover the best known worst-case rate $\widetilde{\cO} \brb{  \sqrt{ KMT} }$ for combinatorial semi-bandits, see 
\citet{chen2013combinatorial} and
\citet[Theorem 30.2]{lattimore2020bandit}, as discussed in \Cref{s:setting}.

\begin{Corollary}
In \Cref{t:upper-bound-worst-case}, if there is $M \in [K]$ such that $F_k =  \I_{[1/M,1]} $ for $k \in [K]$, then we have $R_T^\star = \widetilde{\cO} \brb{  \sqrt{ KMT} }$. 
\end{Corollary}

%
We now state a 
worst-case lower bound on the regret,
not only for algorithms 
operating under censored semi-bandit feedback like \Cref{algo:ucbowski} but also for those operating under the richer full feedback.
This lower bound is tight up to a logarithmic factor.  
\begin{theorem}
\label{t:lower-bound-worst-case}
The worst-case regret of any full-feedback algorithm satisfies
$
    R_T^\star = \Omega \brb{ K \sqrt{T} }
$.
\end{theorem}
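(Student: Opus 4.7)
The plan is to construct a family of instances indexed by $\sigma \in \{-1,+1\}^{K/2}$, in which $K/2$ pairs of arms encode $K/2$ independent hard binary hypothesis-testing problems, and to lower-bound the expected regret pair by pair. Assume without loss of generality that $K$ is even, and pair the arms as $\{2j-1, 2j\}$ for $j \in [K/2]$. For every $\sigma \in \{-1,+1\}^{K/2}$, set $F_k(x) \ceq \one{x \ge 2/K}$ for all $k \in [K]$, and let $G_{t,k}$ be i.i.d.\ $\mathrm{Bern}\brb{\mu_k(\sigma)}$ with $\mu_{2j-1}(\sigma) = 1/2 + \sigma_j \varepsilon$ and $\mu_{2j}(\sigma) = 1/2 - \sigma_j \varepsilon$, where $\varepsilon = \Theta(1/\sqrt{T})$ is tuned below. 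Because each $F_k$ is a deterministic threshold at $2/K$, the budget constraint $\sum_k X_{t,k} = 1$ forces the activation set $A_t \ceq \bcb{k : X_{t,k} \ge 2/K}$ to have cardinality at most $K/2$, and the optimum---attained by activating one winner per pair---equals $K/4 + K\varepsilon/2$.

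A direct algebraic manipulation---using $\labs{A_t} \le K/2$ to absorb any negative contributions coming from over-allocating to a single pair---shows that the per-round regret satisfies
\[
    r_t(\sigma) \ge 2\varepsilon\,\lrb{K/2 - w_t}, \qquad w_t \ceq \babs{\bcb{j \in [K/2] : \mathrm{winner}(j,\sigma_j) \in A_t}}.
\]
Letting $Y_{t,k} \ceq \one{X_{t,k} \ge 2/K}$, I decompose $w_t = \sum_j \lsb{(Y_{t,2j-1}+Y_{t,2j})/2 + \sigma_j(Y_{t,2j-1}-Y_{t,2j})/2}$. This is the key step: the joint budget constraint enters only through the symmetric part $\sum_j(Y_{t,2j-1}+Y_{t,2j}) = \labs{A_t} \le K/2$, while the $\sigma_j$-dependence is isolated in the antisymmetric part and can be controlled per-coordinate.

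Drawing $\sigma$ uniformly on $\{-1,+1\}^{K/2}$ and letting $P_j^{\pm}$ denote the law of the observed trajectory with $\sigma_j = \pm 1$ fixed (other coordinates uniform), the elementary total-variation inequality yields $\E_\sigma[\sigma_j (Y_{t,2j-1}-Y_{t,2j})] \le d_{\mathrm{TV}}\brb{P_j^+,P_j^-}$. Under full feedback the observations are i.i.d.\ Bernoullis that do not depend on the actions, so the tensorisation of KL together with $\KL\brb{\mathrm{Bern}(1/2+\varepsilon) \,\|\, \mathrm{Bern}(1/2-\varepsilon)} = O(\varepsilon^2)$ and Pinsker give $d_{\mathrm{TV}} \le c\sqrt{T}\,\varepsilon$ for some absolute $c$. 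Summing over $j$ and using $\labs{A_t} \le K/2$ gives $\E_\sigma[w_t] \le K/4 + (K/4)\,d_{\mathrm{TV}}$, hence $\E_\sigma[K/2 - w_t] \ge (K/4)(1 - d_{\mathrm{TV}})$. Tuning $\varepsilon = \Theta(1/\sqrt{T})$ small enough that $d_{\mathrm{TV}} \le 1/2$ forces $\E_\sigma[K/2 - w_t] \ge K/8$ for every $t \le T$, so $\E_\sigma[R_T^\sigma] \ge 2\varepsilon \cdot T \cdot K/8 = \Omega(K\sqrt{T})$; since the worst case dominates the average, $R_T^\star = \Omega(K\sqrt{T})$.

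The main obstacle is that the joint budget constraint ties all $K$ actions together, preventing a direct product-of-experts lower bound. The linear decomposition of $w_t$ in the $\sigma_j$'s is what makes this work: pair-wise Le Cam handles the antisymmetric part, while the budget constraint only needs to control the innocuous symmetric one. A secondary simplification is that full feedback decouples the observations from the actions, so the KL between different instances is computed on a plain product-Bernoulli law rather than on an algorithm-dependent trajectory distribution.
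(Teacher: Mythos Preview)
Your proof is correct and uses the same hard-instance skeleton as the paper (step-function $F_k$'s with threshold at the budget limit, $K/2$ pairs of near-indistinguishable Bernoulli arms, average over the $2^{K/2}$ sign assignments), but it handles the combinatorial core differently and more directly. The paper first performs two reductions: (i) force the algorithm to activate exactly $K/2$ arms each round, and (ii) force it to activate exactly one arm per pair. Step (ii) is somewhat delicate---it introduces a randomized reshuffling of over- and under-allocated pairs and checks that the expected reward is preserved under every $\Pb_{j_1,\ldots,j_K}$---and only after it does the paper apply Bretagnolle--Huber pair by pair. Your identity $w_t=\sum_j\bigl[(Y_{t,2j-1}+Y_{t,2j})/2+\sigma_j(Y_{t,2j-1}-Y_{t,2j})/2\bigr]$ bypasses both reductions: the budget constraint enters only through the symmetric part (bounded by $\labs{A_t}/2\le K/4$), while each antisymmetric term is controlled by a one-coordinate Le~Cam/Pinsker bound on a product-Bernoulli law, which is especially clean here because full feedback makes the observation law action-independent. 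The price is that Pinsker becomes vacuous for large KL, whereas the paper's Bretagnolle--Huber remains nontrivial; at $\varepsilon=\Theta(1/\sqrt{T})$ both approaches give the same rate, so your route is the shorter one for this statement.
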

Due to space constraints, we defer the 
proof of this result to Appendix~\ref{s:appe-lower-bound-worst-case}.

\section{Conclusions and future work}
\label{s:conclusion}
We introduce a new stochastic bandit framework, which captures applications ranging from crowdsourcing markets to multi-platform autobidding, and to compute allocations in large-scale model training.
We design an algorithm and establish its minimax optimality with a regret guarantee of $\tilde{\Theta}\brb{ K \sqrt{T} }$. The same algorithm with no \emph{ad-hoc} tuning enjoys a polylogarithmic regret when the budget-to-success curves satisfy a natural diminishing-returns property. To the best of our knowledge, these are the first tight guarantees for sequential budget allocation with censored semi-bandit feedback.

Beyond matching the leading-order rates, our analysis yields three structural insights.  
First, the censored semi-bandit feedback lets the learner decouple exploration across coordinates from reward estimation, giving a clean $\sqrt{T}$ frontier in the worst case.  Second, strong concavity acts as an implicit regularizer that yields logarithmic learning rates.
Third, even though our logarithmic rate is reminiscent of those in bandits with strongly convex losses, our result is proved through novel, non-trivial techniques. 
Indeed, we provide a counterexample showing that any bandit algorithm (including linear bandit algorithms) has to suffer $\Omega(\sqrt{T})$ regret even in benign instances of the diminishing-return setting, illustrating that standard bandit techniques are not sufficient in this setting.

Other interesting variants, which are not captured by our theory, are when the per-round budget varies with time (stochastically or adversarially), or when tasks (with deterministic or random completion times) can run for multiple time steps, and the feedback is only observed with some delay, at the completion of the tasks.
Lastly, one could consider a contextual version of this problem where some context related to the tasks is revealed to the learner before they are asked to allocate budgets at each time step.
We leave these 
extensions open for future work.

\acks{
FB has benefitted from the AI Interdisciplinary Institute ANITI. ANITI is funded by the France 2030 program under the Grant agreement n°ANR-23-IACL-0002.
NCB and RC acknowledge the MUR PRIN grant 2022EKNE5K (Learning in Markets and Society), funded by the NextGenerationEU program within the PNRR scheme, and the the EU Horizon CL4-2021-HUMAN-01 research and innovation action under grant agreement 101070617, project ELSA (European Lighthouse on Secure and Safe AI).
TC gratefully acknowledges the support of the University of Ottawa through grant GR002837 (Start-Up Funds) and that of the Natural Sciences and Engineering Research Council of Canada (NSERC) through grants RGPIN-2023-03688 (Discovery Grants Program) and DGECR2023-00208 (Discovery Grants Program, DGECR - Discovery Launch Supplement).}

\appendix


\section{What if the budget-to-success curves are unknown?}
\label{s:necessity}
In this section we investigate the setting in which the learner does not know the budget-to-success curves in advance.
We prove a particularly strong impossibility result: even when $K=2$, the learner knows that $F_1$ and $F_2$ are indicator functions, and the rewards are deterministic with $G_{t}\equiv (1,1)$, the worst-case regret is linear.

\begin{theorem}[Unknown budget-to-success curves imply linear minimax regret]
\label{t:unknown-F-linear-lb}
Let $K\coloneqq 2$ and deterministic rewards $G_{t,1}=G_{t,2}\equiv 1$ for all $t$.
Let $\Delta_2=\{x\in[0,1]^2:x_1+x_2=1\}$.
For each $b\in[0,1]$, define the budget-to-success curves
\[
    F_1^{(b)}
\coloneqq
    \I_{[b,1]}\;,
\qquad
    F_2^{(b)}
\coloneqq
    \I_{[1-b,1]}\;.
\]
Suppose that the learner knows that $(F_1,F_2)$ belongs to the family $\brb{(F_1^{b},F_2^{b})}_{b \in [0,1]}$ and that $G_{t,1}=G_{t,2}\equiv 1$ for all $t$.
Then any (randomized) learning algorithm incurs linear worst-case regret:
\[
    \sup_{b\in[0,1]} R_T(F_1^{b},F_2^{b})
=
    T,
\qquad\text{for all }T\in\N.
\]
\end{theorem}

\begin{proof}
Fix a (randomized) learning algorithm $\alpha$.
We assume that $\alpha$ has access sequentially to i.i.d.\ $[0,1]$-valued uniform random variables $U_1,\dots,U_T$ as random seeds.

Fix $b\in[0,1]$ and consider the instance with deterministic rewards $G_t\equiv (1,1)$ and budget-to-success curves $F_1^{(b)}=\I_{[b,1]}$ and $F_2^{(b)}=\I_{[1-b,1]}$.
If at time $t$ the learner plays an allocation $x\in\Delta_2$, then the completion feedback is
\[
    B_t(x)=\lrb{B_{t,1}(x_1),B_{t,2}(x_2)}\;,
\]
and the reward feedback is
\[
    \lrb{G_{t,1}B_{t,1}(x_1),G_{t,2}B_{t,2}(x_2)}\;.
\]
Since $G_t\equiv (1,1)$, these two vectors coincide, so the reward part carries no additional information beyond the completion vector.
Moreover, since $F_1^{(b)}$ and $F_2^{(b)}$ are indicator functions, the corresponding Bernoulli variables are degenerate, and we have
\begin{equation}
\label{eq:01_10_11}
    B_{t,1}(x_1)
=
    \I\lcb{b\le x_1}\;,
\qquad
    B_{t,2}(x_2)
=
    \I\lcb{1-b\le x_2}
=
    \I\lcb{x_1\le b}\;.
\end{equation}
In particular,
\[
    B_t(x)=\lrb{B_{t,1}(x_1),B_{t,2}(x_2)}\in\lcb{(0,1),(1,0),(1,1)}\;.
\]

Hence, without loss of generality, we can assume that $\alpha$ is represented by a sequence of functions $\alpha_1,\dots,\alpha_T$ such that
\[
    \alpha_t:([0,1]\times\lcb{(0,1),(1,0),(1,1)})^{t-1}\times[0,1]\to\Delta_2\;,
\]
and that for each $b\in[0,1]$ the generated allocations $(X_t^{(b)})_{t\in[T]}$ satisfy $X_1^{(b)}\coloneqq\alpha_1(U_1)$ and, for $t\ge 2$,
\[
    X_t^{(b)}
\coloneqq
    \alpha_t\lrb{U_1,B_{1}(X_{1}^{(b)}),\dots,U_{t-1},B_{t-1}(X_{t-1}^{(b)}),U_t}\;.
\]

From \eqref{eq:01_10_11} it follows that for the instance indexed by $b$ the unique allocation for which both tasks are completed is
\[
    \bar{x}(b)\coloneqq (b,1-b)\;.
\]
In particular, the optimal (expected) reward per round equals $2$, because
\[
    \sup_{x\in\Delta_2}\ban{\mu,F^{(b)}(x)}=\max_{x\in\Delta_2}\ban{\mu,F^{(b)}(x)}=\ban{(1,1),(1,1)}=2\;.
\]
For a generic allocation $x\in\Delta_2$, \eqref{eq:01_10_11} implies that exactly one task is completed unless $x_1=b$, in which case both are completed.
Therefore the reward at time $t$ as a function of the allocated budget $x\in\Delta_2$ is
\[
    r_t^{(b)}(x)
\coloneqq
    G_{t,1}B_{t,1}(x_1)+G_{t,2}B_{t,2}(x_2)
=
    1+\I\lcb{x_1=b}\;,
\]
and for the sequence $(X_t^{(b)})_{t\in[T]}$ we obtain that the regret, when the underlying instance has parameter $b \in [0,1]$, is
\begin{equation}
\label{e:decomposition}
    R_T^b
\coloneqq
    2T-\sum_{t=1}^T \E\lsb{r_t^{(b)}\lrb{X_t^{(b)}}}
=
    T-\sum_{t=1}^T \E\lsb{\I\lcb{X_{t,1}^{(b)}=b}}\;.
\end{equation}
Consequently, using Fubini/Tonelli's theorem, we have
\[
    \sup_{b\in[0,1]} R_T^b
\ge
    \int_0^1 R_T^b \dif b
=
    T-\sum_{t=1}^T \E\lsb{\int_0^1 \I\lcb{X_{t,1}^{(b)}=b}\dif b}\;.
\]
Thus it suffices to prove that for every fixed $t\in[T]$ ($\Pb$-almost surely),
\begin{equation}
\label{eq:key_integral_zero}
    \int_0^1 \I\lcb{X_{t,1}^{(b)}=b}\dif b=0\;.
\end{equation}

Fix $t\in[T]$ and fix arbitrary seeds $u_1,\dots,u_T\in[0,1]$.
With these seeds fixed, the entire trajectory is deterministic for each parameter $b$.
We denote by $x_t^{(b)}(u_1,\dots,u_t)\in\Delta_2$ the allocation at time $t$ when the instance parameter is $b$.
Define the set of fixed points
\[
    A_t(u_1,\dots,u_t)
\coloneqq
    \lcb{b\in[0,1]:x_{t,1}^{(b)}(u_1,\dots,u_t)=b}\;.
\]
We claim that $A_t(u_1,\dots,u_t)$ is finite.
This would imply
\[
    \int_0^1 \I\lcb{x_{t,1}^{(b)}(u_1,\dots,u_t)=b}\dif b=0\;,
\]
and since the seeds $(u_1,\dots,u_T)$ were arbitrary,  \eqref{eq:key_integral_zero} would follow from
\[
    \int_0^1 \I\lcb{X_{t,1}^{(b)}=b}\dif b
=
    \int_0^1 \I\lcb{x_{t,1}^{(b)}(U_1,\dots,U_t)=b}\dif b=0\;.
\]

It remains to prove that $A_t(u_1,\dots,u_t)$ is finite.
We do so by induction on $t$.
For $t=1$, $x_{1,1}^{(b)}(u_1)=\alpha_1(u_1)$ does not depend on $b$, hence $A_1(u_1)$ is a singleton.

Assume now that the claim holds up to time $t-1$ and fix $(u_1,\dots,u_t)$.
Define
\[
    S_{t-1}(u_1,\dots,u_{t-1})
\coloneqq
    \bigcup_{s=1}^{t-1} A_s(u_1,\dots,u_s)\;.
\]
Since each $A_s(u_1,\dots,u_s)$ is finite by the induction hypothesis, the set $S_{t-1}(u_1,\dots,u_{t-1})$ is finite as well.

Consider any interval $I$ contained in $[0,1]\setminus S_{t-1}(u_1,\dots,u_{t-1})$.
For every $b\in I$ and every $s\le t-1$ we have $b\ne x_{s,1}^{(b)}(u_1,\dots,u_s)$, hence by \eqref{eq:01_10_11} the completion vector
\[
    B_{s}\lrb{x_{s}^{(b)}(u_1,\dots,u_s)}
=
    \lrb{\I\lcb{b\le x_{s,1}^{(b)}(u_1,\dots,u_s)},\I\lcb{x_{s,1}^{(b)}(u_1,\dots,u_s)\le b}}
\]
takes values in $\lcb{(1,0),(0,1)}$ and is constant over $I$ (as a function of $b$).
Therefore the entire history
\[
    \lrb{u_1,B_{1}\lrb{x_{1}^{(b)}(u_1)},\dots,u_{t-1},B_{t-1}\lrb{x_{t-1}^{(b)}(u_1,\dots,u_{t-1})}}
\]
is constant over $b\in I$, and since $\alpha_t$ is deterministic given this history and $u_t$, it follows that $x_{t,1}^{(b)}(u_1,\dots,u_t)$ is constant over $b\in I$.

We have shown that the map $b\mapsto x_{t,1}^{(b)}(u_1,\dots,u_t)$ is piecewise constant on $[0,1]$, with possible breakpoints contained in the finite set $S_{t-1}(u_1,\dots,u_{t-1})$.
On any piece where $b\mapsto x_{t,1}^{(b)}(u_1,\dots,u_t)$ is constant, the equation $x_{t,1}^{(b)}(u_1,\dots,u_t)=b$ can hold for at most one value of $b$.
Since there are finitely many pieces, this shows that $A_t(u_1,\dots,u_t)$ is finite, completing the induction.

Combining the above with \eqref{eq:key_integral_zero}, we obtain $\int_0^1 R_T^b \dif b=T$, and therefore $\sup_{b\in[0,1]} R_T^b\ge T$.
To prove that this is in fact an equality, we notice that Equation~\ref{e:decomposition} immediately implies that for every $b \in [0,1]$
\[
    R_T^b \le T\;,
\]
and hence $\sup_{b \in [0,1]} R_T^b \le T$, from which it follows
\[
   \sup_{b \in [0,1]}  R_T^b = T\;.
\]
\end{proof}


\section{Preliminary results}
\label{s:appe-prelim}

We define, for $t = 1 , \ldots,T$,
\begin{equation} \label{eq:Gt}
\cF_t \coloneqq 
\sigma
\Brb{ 
\brb{
B_{s,1}(X_{s,1}),\ldots,B_{s,K}(X_{s,K}),B_{s,1}(X_{s,1})G_{s,1},\ldots,B_{s,K}(X_{s,K}) G_{s,K}
}_{s=1}^t
}.
\end{equation}


The next lemma and its two corollaries show that successive rewards observed for a given task can be controlled as if they were i.i.d.\ variables, despite the rewards themselves depending on past choices of the algorithm.

\begin{lemma} \label{lemma:observed:gains:iid}
Fix $k \in [K]$. Let $N \in [T]$ be deterministic and such that 
\[
\mathbb{P}
\left[
\sum_{t=1}^\infty
B_{t,k}(X_{t,k})
\ge N
\right]
=1.
\]
Let $(Y_{1,k},\ldots,Y_{N,k}) \ceq (G_{t_1,k} , \ldots , G_{t_N,k})$
where $t_1 < \cdots < t_N$ are the (random) $N$ first indices $t \in \N$ such that $B_{t,k}(X_{t,k}) =1$. Then $Y_{1,k},\ldots,Y_{N,k}$ are i.i.d.\ and distributed as $G_{1,k}$.
\end{lemma}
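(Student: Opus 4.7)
The plan is to induct on $n \in [N]$, showing that $(Y_{1,k},\ldots,Y_{n,k})$ is i.i.d.\ with the common distribution of $G_{1,k}$. The key structural fact I would establish first is that, for each fixed $t \in \N$, the reward $G_{t,k}$ is independent of $\sigma\brb{\cF_{t-1},\, B_{t,k}(X_{t,k})}$. This would follow from the problem setup, which stipulates that the reward family $(G_{s,k'})_{s\in\N,k'\in[K]}$ is independent of the Bernoulli family $\brb{B_{s,k'}(x)}_{s\in\N,k'\in[K],x\in[0,1]}$: by a routine induction on $s$, each allocation $X_s$, and hence each variable in $\cF_{t-1}$ as well as $B_{t,k}(X_{t,k})$, is a deterministic function of the past rewards $(G_{r,k'})_{r<t,k'\in[K]}$ and of the full Bernoulli family, all of which are independent of $G_{t,k}$.

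Given this independence, the inductive step would proceed by fixing $t \in \N$, noting that the event $\{t_n = t\}$ is measurable with respect to $\sigma\brb{\cF_{t-1},\, B_{t,k}(X_{t,k})}$ (it is determined by $B_{s,k}(X_{s,k})$ for $s \le t$), and that on this event we have $t_1 < \ldots < t_{n-1} < t$, so $Y_{1,k},\ldots,Y_{n-1,k}$ are $\cF_{t-1}$-measurable (each $G_{t_i,k}$ is recoverable from $B_{t_i,k}(X_{t_i,k}) G_{t_i,k}$ together with $B_{t_i,k}(X_{t_i,k}) = 1$). Combining these measurability statements with the independence above would give, for any bounded measurable $\psi,\phi$ and any $t \in \N$,
\[
    \E\bsb{\psi(Y_{1,k},\ldots,Y_{n-1,k})\,\phi(G_{t,k})\,\I\{t_n=t\}}
    =
    \E\bsb{\phi(G_{1,k})}\cdot\E\bsb{\psi(Y_{1,k},\ldots,Y_{n-1,k})\,\I\{t_n=t\}}.
\]
Summing over $t \in \N$ and using the hypothesis $\Pb\bsb{\sum_{t}B_{t,k}(X_{t,k}) \ge N} = 1$ (so that the events $\{t_n = t\}$ almost surely partition the sample space for every $n \le N$) would collapse this to $\E\bsb{\psi(Y_{1,k},\ldots,Y_{n-1,k})\,\phi(Y_{n,k})} = \E[\phi(G_{1,k})]\cdot\E[\psi(Y_{1,k},\ldots,Y_{n-1,k})]$, which is simultaneously the identical-distribution and the mutual-independence statement needed to close the induction (the base case $n=1$ is obtained by taking $\psi \equiv 1$).

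The main obstacle will be the sigma-algebraic bookkeeping around the adaptive dependence of $X_{t,k}$ on the past $G$'s: one cannot claim directly that $G_{t,k}$ is independent of $B_{t,k}(X_{t,k})$ without invoking the global independence of the $G$-family and the $B$-family, because $X_{t,k}$ itself is a function of past realized rewards. Once the independence $G_{t,k} \perp \sigma\brb{\cF_{t-1}, B_{t,k}(X_{t,k})}$ is proven cleanly, the measurability of $\{t_n=t\}$ and of $Y_{1,k},\ldots,Y_{n-1,k}$ with respect to that sigma-algebra is straightforward, and the rest of the argument reduces to the summation step above.
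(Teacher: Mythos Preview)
Your proposal is correct and follows essentially the same route as the paper: both arguments hinge on the independence of $G_{t,k}$ from $\sigma\brb{\cF_{t-1},\, B_{t,k}(X_{t,k})}$, decompose the expectation over the values of the stopping times, and close by induction. The only cosmetic difference is that the paper sums over the full tuple $(t_1,\ldots,t_{n+1}) = (s_1,\ldots,s_{n+1})$, whereas you partition only over $\{t_n = t\}$; your single-index decomposition is slightly more economical but otherwise identical in spirit.
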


\begin{proof}
Let $\psi$ be a bounded continuous function on $[0,1]$. 
Let $A_1 = \bcb{ 
B_{1,k}(X_{1,k}) = 1
} = \{ t_1 = 1\}$
and for $t \ge 2$, $A_t = \bcb{ 
B_{1,k}(X_{1,k}) = 0 , \ldots, B_{t-1,k}(X_{t-1,k}) = 0,
B_{t,k}(X_{t,k}) = 1
} = \{ t_1 = t\}$.  We have, using the law of total expectation,
\begin{equation}
\label{eq:decom:Epsi}
\E \bsb{
\psi(Y_{1,k})
}
=
\sum_{t=1}^{\infty}
\E \bsb{
\psi(Y_{1,k})
\mid
A_t
}
\Pb [A_t]
=
\sum_{t=1}^{\infty}
\E \bsb{
\psi(G_{t,k})
\mid
A_t
}
\Pb [A_t].
\end{equation}

Then, using $\cF_t$ from \eqref{eq:Gt}, noting that $A_t$ is a measurable event of $B_{t,k}(X_{t,k})$ and the variables in $\cF_{t-1}$, and using the law of total expectation,
\[
\E \bsb{
\psi(G_{t,k})
\mid
A_t
}
=
\E \Bsb{
\E \bsb{ 
\psi(G_{t,k})
\mid
\cF_{t-1}
,
B_{t,k}(X_{t,k})
}
\mid
A_t
}.
\]
Since $G_{t,k}$ is independent of $B_{t,k}(X_{t,k})$ and $\cF_{t-1}$, we have
\[
\E \bsb{
\psi(G_{t,k})
\mid
\cF_{t-1}
,
B_{t,k}(X_{t,k})
}
=
\E \bsb{ 
\psi(G_{1,k})
}.
\]
Hence
\[
\E \bsb{ 
\psi(G_{t,k})
\mid
A_t
}
=\E \bsb{ 
\psi(G_{1,k})
}\;.
\]
It follows from \eqref{eq:decom:Epsi} that 
\[
    \E \bsb{ 
    \psi(Y_{1,k})
    }
=
    \sum_{t=1}^{\infty}
    \E \bsb{ 
    \psi(G_{1,k})
    }
    \Pb [A_t]
=
    \E \bsb{
    \psi(G_{1,k})
    }.
\]
Hence $Y_{1,k}$ is distributed as $G_{1,k}$. 

Next, for $\ell \in [N]$, let \[
\cS_{\ell}
=
\lcb{ (s_1 , \dots , s_{\ell}) \in \N^\ell :
s_1 < \dots < s_\ell
}.
\]
Note that $\cS_{n+1}$
is the set of values for $(t_1,\ldots,t_{n+1})$ that can be reached. Consider any two functions $\phi \colon [0,1]^n \to \bbR$ and $\psi \colon [0,1] \to \bbR$. Define
\[
    T(s_1 , \ldots,s_{n+1})
\coloneqq
    \lcb{
    (t_1,\ldots,t_{n+1})
    =
    (s_1,\ldots,s_{n+1} )}.
\]
Then we have, using the law of total expectation, 
\begin{align*}
&
    \E \bsb{ 
    \phi(Y_{1,k} , \ldots ,Y_{n,k})
    \psi(Y_{n+1,k})
    }
\\ 
&= 
    \sum_{(s_1,\ldots,s_{n+1}) \in \cS_{n+1}}
    \E \bsb{ 
    \phi(Y_{1,k} , \ldots ,Y_{n,k})
    \psi(Y_{n+1,k})
    \mid
    T(s_1,\ldots,s_{n+1})
    }
    \Pb \bsb{ T(s_1,\ldots,s_{n+1}) }
\\
&=
    \sum_{(s_1,\ldots,s_{n+1}) \in \cS_{n+1}}
    \E \bsb{ 
    \phi(G_{s_1,k} , \ldots ,G_{s_n,k})
    \psi(G_{s_{n+1},k})
    \mid
    T(s_1,\ldots,s_{n+1})
    }
    \Pb \bsb{ T(s_1,\ldots,s_{n+1}) }
\\ 
& =
    \sum_{(s_1,\ldots,s_{n+1}) \in \cS_{n+1}}
    \E \Bsb{ 
    \E \bsb{ 
    \phi(G_{s_1,k} , \ldots ,G_{s_n,k})
    \psi(G_{s_{n+1},k})
    \mid
    \cF_{s_{n+1}-1},
    B_{s_{n+1},k}(X_{s_{n+1},k})
    }
    \mid
    T(s_1,\ldots,s_{n+1})
    }
    \\
    &
    \hspace{3cm} \times
    \Pb \bsb{ T(s_1,\ldots,s_{n+1}) },
\end{align*}
where the last equality holds by the law of total expectation since
$T(s_1,\ldots,s_{n+1})$ is a measurable event of $B_{s_{n+1},k}(X_{s_{n+1},k})$ and the variables in $\cF_{s_{n+1}-1}$.

Then, conditionally to $\cF_{s_{n+1}-1}$ and  $B_{s_{n+1},k}(X_{s_{n+1},k})$, only $G_{s_{n+1},k}$ remains random (and keeps its unconditional distribution). Hence, we obtain
\begin{align*}
&
    \E \bsb{ 
    \phi(Y_{1,k} , \ldots ,Y_{n,k})
    \psi(Y_{n+1,k})
    }
    \\ 
&= 
    \sum_{(s_1,\ldots,s_{n+1}) \in \cS_{n+1}}
    \E \Bsb{ 
    \phi(G_{s_1,k} , \ldots ,G_{s_n,k})
    \E \bsb{ 
    \psi(G_{s_{n+1},k})
    \mid
    \cF_{s_{n+1}-1},
    B_{s_{n+1},k}(X_{s_{n+1},k})
    }
    \mid
    T(s_1,\ldots,s_{n+1})
    }
    \\
&
    \hspace{3cm} \times
    \Pb \bsb{ T(s_1,\ldots,s_{n+1}) }
\\
& =
    \E \bsb{
    \psi(G_{1,k})
    }
    \sum_{(s_1,\ldots,s_{n+1}) \in \cS_{n+1}}
    \E \bsb{
    \phi(G_{s_1,k} , \ldots ,G_{s_n,k})
    \mid
    T(s_1,\ldots,s_{n+1})
    }
    \Pb \bsb{ T(s_1,\ldots,s_{n+1}) }
\\
& =
    \E \bsb{
    \psi(G_{1,k})
    }
    \sum_{(s_1,\ldots,s_{n+1}) \in \cS_{n+1}}
    \E \bsb{
    \phi(Y_{1,k} , \ldots ,Y_{n,k})
    \mid
    T(s_1,\ldots,s_{n+1})
    }
    \Pb \bsb{ T(s_1,\ldots,s_{n+1}) }
\\
& =
    \E \bsb{
    \psi(G_{1,k})
    }
    \E \bsb{ 
    \phi(Y_{1,k} , \ldots ,Y_{n,k})
    },
    \end{align*}
where the last equality follows from the law of total expectation.

Hence, by induction it follows that for any functions $\psi_1 , \ldots , \psi_N$ from $[0,1]$ to $\bbR$,
\[
\E \left[ 
\prod_{i=1}^N
\psi_i(Y_{i,k})
\right]
=
\prod_{i=1}^N
\E \left[ 
\psi_i(G_{1,k})
\right].
\]
This concludes the proof.
\end{proof}

\begin{Corollary} \label{cor:as:iid}   Fix $N,T \in \N$ with $N \le T$. Fix $k \in [K]$. Let $E_{N,T}$ be the event 
\[
E_{N,T} \coloneqq
\left\{ 
\sum_{t=1}^T 
B_{t,k}(X_{t,k}) 
\ge N 
\right\}.
\]
Under the event $E_{N,T}$, define $t_1 , \ldots , t_N$ as the sequence of the $N$ first random times $t \in [T]$ where $B_{t,k}(X_{t,k}) =1$, with $1 \le t_1 < \dots < t_N \le T$.
Then, for any subset $A \subset [0,1]^N$, 
\[
    \Pb \bsb{ 
    E_{N,T}
    \cap 
    \left\{
    \left( 
    G_{t_1,k},
    \dots,
    G_{t_N,k}
    \right)
    \in A
    \right\}
    }
\le 
    \Pb \bsb{ 
    \left( 
    G_{1,k},
    \dots,
    G_{N,k}
    \right)
    \in A
    }. 
    \]
\end{Corollary}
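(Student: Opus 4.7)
The plan is to reduce \Cref{cor:as:iid} to \Cref{lemma:observed:gains:iid} by continuing the learner's policy past time $T$ in a way that guarantees infinitely many successes almost surely for the index $k$. If $F_k\equiv 0$, then $B_{t,k}(X_{t,k})=0$ almost surely for every $t$, so $\Pb[E_{N,T}]=0$ for every $N\ge 1$ and the inequality is trivial. From here on I therefore assume $F_k\not\equiv 0$, which, by monotonicity of $F_k$, gives $F_k(1)>0$.

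Next, I would define an extended allocation sequence $(\widetilde X_t)_{t\in\N}$ by $\widetilde X_t \ceq X_t$ for $t\le T$ and $\widetilde X_t \ceq e_k$ for $t>T$, where $e_k$ denotes the $k$-th vertex of $\Delta_K$ (so $\widetilde X_{t,k}=1$ for every $t>T$). Since the random variables $\brb{B_{t,k}(1)}_{t>T}$ are i.i.d.\ Bernoulli of parameter $F_k(1)>0$ and, by the independence assumptions in \Cref{s:setting}, independent of all variables indexed by $t\le T$, with probability one infinitely many successes occur after time $T$; in particular $\sum_{t=1}^{\infty} B_{t,k}(\widetilde X_{t,k})\ge N$ almost surely. \Cref{lemma:observed:gains:iid} therefore applies to the extended policy: denoting by $\widetilde t_1 <\widetilde t_2 <\dots$ the successive indices $t\in\N$ at which $B_{t,k}(\widetilde X_{t,k})=1$, the vector $(Y_{1,k},\dots,Y_{N,k})\ceq (G_{\widetilde t_1,k},\dots,G_{\widetilde t_N,k})$ is i.i.d.\ with the law of $G_{1,k}$, and hence has the same joint distribution as $(G_{1,k},\dots,G_{N,k})$.

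To conclude, observe that on the event $E_{N,T}$ the first $N$ successes of the extended process already occur within $[T]$ (because they coincide with those of the original one, since the policies agree on $[T]$), so $\widetilde t_i = t_i$ for every $i\in[N]$ and therefore $E_{N,T}\cap\lcb{(G_{t_1,k},\dots,G_{t_N,k})\in A} = E_{N,T}\cap\lcb{(Y_{1,k},\dots,Y_{N,k})\in A}$. Monotonicity of $\Pb$ then yields
\[
\Pb\bsb{E_{N,T}\cap \lcb{(G_{t_1,k},\dots,G_{t_N,k})\in A}} \le \Pb\bsb{(Y_{1,k},\dots,Y_{N,k})\in A} = \Pb\bsb{(G_{1,k},\dots,G_{N,k})\in A}.
\]
The only delicate point worth double-checking is that the setting already supplies an independent family $\brb{B_{t,k}(x)}_{t\in\N,\,x\in[0,1]}$ for each $k$, jointly independent of the $G_{\cdot,\cdot}$'s, so appending a deterministic extension of the learner's policy beyond $T$ introduces no new randomness and preserves all the independence assumptions required by \Cref{lemma:observed:gains:iid}.
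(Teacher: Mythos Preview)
Your argument is correct and follows essentially the same approach as the paper: both extend the process past time $T$ so that arm $k$ is almost surely explored infinitely often, and then invoke \Cref{lemma:observed:gains:iid}. The only cosmetic difference is that the paper modifies the budget-to-success curves (setting $\widetilde F_{t,k}\equiv 1$ for $t>T$ and passing to a new measure $\widetilde{\Pb}$), whereas you keep the model fixed and instead extend the \emph{policy} to $e_k$ after time $T$; your route stays in the original probability space at the mild cost of treating the degenerate case $F_k\equiv 0$ separately.
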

\begin{proof}
    Consider the budget-to-success curves $F_1,\dots,F_K$.
    Consider new (time-dependent) functions with, for $t\in[T]$, $k\in[K]$, and $x\in[0,1]$,
    \[
    \widetilde{F}_{t,k}(x) = 
    \begin{cases}
F_k(x) & ~ ~ \text{if $t \le T$} \\
1 ~ \text{(constant function)} & ~ ~ \text{if $t \ge T+1$}.
    \end{cases}
    \]
    These new $\widetilde{F}_{t,1}, \dots \widetilde{F}_{t,K}$ 
    define a new multi-platform setting, for which we denote probabilities with $\widetilde{\Pb}$.  The behavior of the gains and 
    algorithm choices
    is unchanged from times $t=1$ to $t=T$ between the two settings. Hence
    \[
    \Pb \bsb{ 
E_{N,T}
\cap 
\left\{
\left( 
G_{t_1,k},
\dots,
G_{t_N,k}
\right)
\in A
\right\}
}
=
  \widetilde{\Pb} \bsb{ 
E_{N,T}
\cap 
\left\{
\left( 
G_{t_1,k},
\dots,
G_{t_N,k}
\right)
\in A
\right\}
}. 
    \]
    Consider $N\in[T]$.
Under the new setting, each arm is infinitely explored over times $t \in \N$ and so we can apply \Cref{lemma:observed:gains:iid} (with $t_1,
\dots,t_N$ defined as in this lemma with probability one) yielding
    \begin{align*}
      \widetilde{\Pb} \bsb{ 
    E_{N,T}
    \cap 
    \left\{
    \left( 
    G_{t_1,k},
    \dots,
    G_{t_N,k}
    \right)
    \in A
    \right\}
    }
& \le 
    \widetilde{\Pb} \bsb{ 
    \left( 
    G_{t_1,k},
    \dots,
    G_{t_N,k}
    \right)
    \in A
    }
\\
&=
    \widetilde{\Pb} \bsb{ 
    \left( 
    G_{1,k},
    \dots,
    G_{N,k}
    \right)
    \in A
    }
\\
&=
    \Pb \bsb{ 
    \left( 
    G_{1,k},
    \dots,
    G_{N,k}
    \right)
    \in A
    }. 
\end{align*}
This concludes the proof.
\end{proof}

\begin{Corollary} 
    \label{cor:Hoeffding:gains}
With the notation of \Cref{cor:as:iid}, for any $\delta \in (0,1)$, we have, for $N,T \in \N$, $N \le T$,
\[
\Pb \lsb{
E_{N,T} 
\cap 
\left\{
\left|
\frac{1}{N}
\sum_{i=1}^N 
G_{t_i,k}
-
\mu_k
\right|
>
\sqrt{
\frac{\log(\frac{2}{\delta})}{2N}
}
\right\}
}
\le
\delta.
\]
\end{Corollary}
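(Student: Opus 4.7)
The plan is to reduce the statement to a standard Hoeffding bound by invoking \Cref{cor:as:iid} with a cleverly chosen measurable set $A \subset [0,1]^N$ that encodes the two-sided deviation event for the empirical mean.

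First, I would define
\[
    A \ceq \lcb{ (y_1,\dots,y_N) \in [0,1]^N : \babs{ \tfrac{1}{N} \sum_{i=1}^N y_i - \mu_k } > \sqrt{ \tfrac{\log(2/\delta)}{2N} } },
\]
noting that the event inside the probability on the left-hand side of the statement is exactly $E_{N,T} \cap \{ (G_{t_1,k}, \dots, G_{t_N,k}) \in A \}$ (interpreting the $t$ in the bound as $N$, matching \Cref{cor:as:iid}). Then \Cref{cor:as:iid} immediately gives
\[
    \Pb \bsb{ E_{N,T} \cap \{ (G_{t_1,k}, \dots, G_{t_N,k}) \in A \} }
\le
    \Pb \bsb{ (G_{1,k}, \dots, G_{N,k}) \in A }.
\]

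Next, the right-hand side is the probability that the empirical mean of $N$ genuinely i.i.d.\ $[0,1]$-valued random variables $G_{1,k}, \dots, G_{N,k}$ (each with mean $\mu_k$) deviates from $\mu_k$ by more than $\sqrt{\log(2/\delta)/(2N)}$. Applying Hoeffding's inequality to each of the two tails and summing yields
\[
    \Pb \bsb{ (G_{1,k}, \dots, G_{N,k}) \in A }
\le
    2 \exp \lrb{ -2N \cdot \tfrac{\log(2/\delta)}{2N} }
=
    2 \cdot \tfrac{\delta}{2}
=
    \delta,
\]
which chains with the previous display to give the claim.

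The argument has no real obstacle: the content lies entirely in \Cref{lemma:observed:gains:iid} and \Cref{cor:as:iid}, which handle the subtle measurability issue that the indices $t_1 < \cdots < t_N$ are random and adaptive. Once those results provide the stochastic domination by the i.i.d.\ law of $(G_{1,k}, \dots, G_{N,k})$, all that remains is a one-line application of Hoeffding's inequality with a union bound over the two tails.
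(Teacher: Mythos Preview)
Your proposal is correct and follows exactly the same approach as the paper: define the deviation set $A$, apply \Cref{cor:as:iid} to reduce to the i.i.d.\ case, and finish with Hoeffding's inequality. Your observation that the $t$ in the statement should be read as $N$ is also correct.
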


\begin{proof}
    We apply \Cref{cor:as:iid} to the set 
    \[
    A = \left\{ 
(g_1,\ldots,g_N) \in [0,1]^N :
\left|
\frac{1}{N}
\sum_{i=1}^N 
g_i
-
\mu_k
\right|
>
\sqrt{
\frac{\log(\frac{2}{\delta})}{2N}
}
    \right\}
    \]
    and then we apply Hoeffding's inequality.
\end{proof}


The next crucial lemma shows that the number of times a task has been completed is not much smaller than the total amount of probability invested in the task (once at least a polylogarithmic amount of probability has been invested in the task).

\begin{lemma} \label{lemma:LB:bernoullis}
For any $T \ge 2$, with probability at least $1 - \frac{K}{T}$, the following event holds:
For $t \in [T]$ and $k \in [K]$, if
\[
\sum_{s=1}^t F_k(X_{s,k})
\ge 
100 \log(T)^2
\]
then 
\[
\sum_{s=1}^t
B_{s,k}(X_{s,k}) 
\ge 
\frac{\sum_{s=1}^t F_k(X_{s,k})}{2}.
\]
    
\end{lemma}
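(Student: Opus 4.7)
The key observation is that, for any fixed $k \in [K]$, the sequence $D_{s,k} \coloneqq F_k(X_{s,k}) - B_{s,k}(X_{s,k})$ is a bounded martingale-difference sequence with respect to the filtration $(\cF_s)$ defined in \eqref{eq:Gt}: since $X_{s,k}$ is $\cF_{s-1}$-measurable and $B_{s,k}(X_{s,k})$ is a Bernoulli variable with conditional mean $F_k(X_{s,k})$ given $\cF_{s-1}$, we have $|D_{s,k}| \le 1$ a.s.\ and the conditional variance satisfies $\E[D_{s,k}^2 \mid \cF_{s-1}] = F_k(X_{s,k})\brb{1 - F_k(X_{s,k})} \le F_k(X_{s,k})$. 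Writing $S_t^{(k)} \coloneqq \sum_{s=1}^t D_{s,k}$ and $M_t^{(k)} \coloneqq \sum_{s=1}^t F_k(X_{s,k})$, the desired conclusion is exactly $S_t^{(k)} \le M_t^{(k)}/2$ whenever $M_t^{(k)} \ge 100 \log(T)^2$, so it suffices to control the upper deviations of $S_t^{(k)}$ relative to $M_t^{(k)}$.

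The plan is to apply Freedman's Bernstein-type inequality for martingales, which yields, for any $\lambda, v > 0$,
\[
\Pb\lsb{\exists t \ge 1 : S_t^{(k)} \ge \lambda, \; V_t^{(k)} \le v} \le \exp\lrb{-\frac{\lambda^2}{2(v + \lambda/3)}},
\]
where $V_t^{(k)} \coloneqq \sum_{s=1}^t \E[D_{s,k}^2 \mid \cF_{s-1}] \le M_t^{(k)}$ is the predictable quadratic variation. Since $M_t^{(k)}$ is itself random, I will use a dyadic peeling argument: setting $L_i \coloneqq 100 \log(T)^2 \cdot 2^i$ for $i = 0, 1, \dots, \lceil \log_2 T \rceil$, every realisation $M_t^{(k)} \in [100 \log(T)^2, T]$ falls into some interval $[L_i, L_{i+1}]$. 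On the event $\{\exists t \in [T] : S_t^{(k)} \ge M_t^{(k)}/2,\; M_t^{(k)} \in [L_i, L_{i+1}]\}$, one has $S_t^{(k)} \ge L_i/2$ and $V_t^{(k)} \le 2 L_i$, so applying Freedman with $\lambda = L_i/2$ and $v = 2L_i$ gives a bound of the form $\exp(-c L_i)$ for an absolute constant $c > 0$; since $L_i \ge 100 \log(T)^2$, each such piece contributes at most $T^{-c' \log T}$ for some $c' > 0$.

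The proof then finishes by a union bound: summing the $O(\log T)$ dyadic contributions against the geometric decay in $i$ gives a failure probability of order $\log(T) \cdot T^{-c' \log T}$ per coordinate, and a further union bound over $k \in [K]$ keeps the total probability comfortably below $K/T$ (the constant $100$ in the threshold is chosen precisely to absorb the peeling and union-bound losses).

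The main obstacle I anticipate is not the martingale inequality itself but the uniformity in $t$ combined with the randomness of both $M_t^{(k)}$ and $V_t^{(k)}$; the dyadic-peeling device is the standard way to decouple these, but one must be careful to verify that the constant $100$ in the threshold is large enough to turn the Freedman exponent into a super-polynomial factor in $T$, and to check the edge case where $100 \log(T)^2 > T$ (for small $T$), in which the antecedent of the implication is vacuous and the statement holds trivially.
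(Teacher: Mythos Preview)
Your proposal is correct and rests on the same martingale-difference observation as the paper, but the concentration step is organized differently. The paper applies a self-normalized Bernstein inequality for martingales (Corollary~16 of \cite{cesa2005minimizing}), which already allows the random predictable variance $v_t$ inside the probability; it then simply takes $\delta=1/T^2$, checks that under the threshold $\sum_{s\le t}F_k(X_{s,k})\ge 100(\log T)^2$ the Bernstein deviation term is at most half that sum, and finishes with a plain union bound over $(t,k)\in[T]\times[K]$, giving exactly $K/T$. Your route instead invokes vanilla Freedman with deterministic $(\lambda,v)$ and compensates for the randomness of $M_t^{(k)}$ by a dyadic peeling over its range, which buys you time-uniformity (no union bound over $t$) at the cost of an extra $O(\log T)$ layers to sum. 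Both arguments are standard and produce the same $1-K/T$ guarantee; the paper's version is a bit shorter because the cited inequality absorbs the peeling internally, while yours is self-contained in that it only needs the most classical form of Freedman.
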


\begin{proof}
For $k \in [K]$, the sequence 
\[
    \brb{
     F_k(X_{t,k})
    -
     B_{t,k}(X_{t,k})
    }_{t \in \N}
\]
is a martingale difference sequence, together with the filtration $(\cF_t)_{t \in \N}$, recalling $\cF_t$ from \eqref{eq:Gt} and defining $\cF_0 \ceq \sigma(\{\varnothing\})$. Indeed, for all $t \in \N$,  $F_k(X_{t,k}) - B_{t,k}(X_{t,k})$ is  $\cF_t$-measurable and 
\[
    \E \bsb{ 
    F_k(X_{t,k})
    -
    B_{t,k}(X_{t,k})  \mid \cF_{t-1}
    }
=
    0
\]
since conditionally to $\cF_{t-1}$, $F_k(X_{t,k})$ is fixed and $B_{t,k}(X_{t,k})$ follows a Bernoulli distribution with parameter $F_k(X_{t,k})$.

For the same reason,
\[
\Var \bsb{ 
    F_k(X_{t,k})
    -
    B_{t,k}(X_{t,k})  \mid \cF_{t-1}
    }
=
    F_k(X_{t,k}) \brb{1 - F_k(X_{t,k})}.
\]
Let us define
\[
    v_t 
=
    \sum_{s=1}^t
    \Var
    \bsb{ 
    F_k(X_{s,k})
    -
    B_{s,k}(X_{s,k})  \mid \cF_{s-1}
    }
\le 
    \sum_{s=1}^t F_k( X_{s,k} ).
\]
Hence from Corollary 16 in \cite{cesa2005minimizing} (a version of Bernstein's inequality for martingales), we have, for $\delta > 0$ and $t \in [t]$,
\[
    \Pb
    \lsb{ 
    \sum_{s=1}^t F_k(X_{s,k})
    -
    \sum_{s=1}^t
    B_{s,k}(X_{s,k})
>
    \sqrt{2(v_t + 1)
    \log(t/\delta)}
    +
    \frac{\sqrt{2}}{3}
    \log(t/\delta)
    }
\le 
    \delta.
\]
Let us take $\delta  = \frac{1}{T^2}$ and replace $v_t$ by its upper bound yielding
\[
    \Pb
    \lsb{ 
    \sum_{s=1}^t F_k(X_{s,k})
    -
    \sum_{s=1}^t
    B_{s,k}(X_{s,k})
>
    \sqrt{2\left(\sum_{s=1}^t F_k( X_{s,k} ) + 1\right)
    3 \log(T)}
    +
    \frac{3 \sqrt{2}}{3}
    \log(T)
    }
\le 
    \frac{1}{T^2}.
\]
If $\sum_{s=1}^t F_k(X_{s,k}) \ge 100 \log(T)^2$, we have
\begin{align*}
    \sqrt{2\left(\sum_{s=1}^t F_k(X_{s,k}) + 1\right)
    3 \log(T)}
    +
    \frac{3 \sqrt{2}}{3}
    \log(T)
&\le 
    \left( 3
    +
    \frac{3 \sqrt{2}}{3}
    \right)
    \sqrt{\sum_{s=1}^t F_k(X_{s,k})} 
    \log(T)
\\
&\le 
    5 \log(T)
    \sqrt{\sum_{s=1}^t F_k(X_{s,k})}.
\end{align*}
Hence,
\[
    \Pb
    \left[ 
    \sum_{s=1}^t F_k(X_{s,k})  \ge 100 \log(T)^2
    ~ \mbox{and} ~
    \sum_{s=1}^t F_k(X_{s,k})
    -
    \sum_{s=1}^t
    B_{s,k}(X_{s,k})
>
    5 \log(T)
    \sqrt{\sum_{s=1}^t F_k(X_{s,k})}
    \right]
\le 
    \frac{1}{T^2}.
\]
Then, if
\[
    \sum_{s=1}^t F_k(X_{s,k})
\ge 
    100 \log(T)^2,
\]
we have 
\[
    \frac{
    \sum_{s=1}^t F_k(X_{s,k})
    }{2}
\ge 
    5 \log(T)
    \sqrt{\sum_{s=1}^t F_k(X_{s,k})}\;.
\]
Hence we have
\[
    \Pb
    \left[ 
    \sum_{s=1}^t F_k(X_{s,k})
\ge 
    100 \log(T)^2
    ~
    ~
    \mbox{and}
    ~
    ~
    \sum_{s=1}^t
    B_{s,k}(X_{s,k}) 
    <
    \frac{\sum_{s=1}^t F_k(X_{s,k})}{2}
    \right]
\le 
    \frac{1}{T^2}.
\]
We conclude the proof with a union bound. 
\end{proof}

\section{Polylogarithmic regret with diminishing returns: missing details}
\label{s:appe-speed-ups}



We now introduce two key assumptions that greatly generalize \Cref{ass:speed-up} (in \Cref{s:speed-ups}), but will still allow us to prove a polylogarithmic regret in the diminishing-returns setting.
The first one is a mild regularity assumption on $F_1,\dots, F_K$.
The second one is a concavity assumption weaker than the classic strong concavity assumption.

\begin{assumption} \label{assumption:F}
$ $
\begin{enumerate}
\item For each $k \in [K]$, $F_k$ is 
continuously differentiable on $(0,1]$ with $\displaystyle \lim_{u \to 0^+} F_k'(u) = + \infty$. 
\item There is a function $\alpha: (0,1) \to (0 , \infty)$ such that for each $k \in [K]$ and for each $\epsilon \in (0,1)$, $F_k$ is $\alpha(\epsilon)$-strongly concave on $[\epsilon,1]$.
\end{enumerate}
\end{assumption}

Note that Assumption \ref{assumption:F} is strictly weaker than \Cref{ass:speed-up}, since it is implied by but it does not imply \Cref{ass:speed-up}.
Note also that, under \Cref{assumption:F}:
\begin{itemize}[wide]
    \item There exists a function $M \colon (0,1) \to (0 , \infty)$ such that for each $k \in [K]$ and for each $\epsilon \in (0,1)$, $F_k$ is $M(\epsilon)$-{\lip} on $[\epsilon,1]$. 
    \item There exists a function $D_{\min}\colon (0,1] \to [0,\iop)$ defined, for all $\epsilon \in (0,1]$, by
    $
    D_{\min}(\epsilon) \ceq
    \min_{k \in [K]}
    \inf_{u \in (0,\epsilon] }
    F_k'(u)
    $.
    \item
    There exists a function $D_{\max}\colon (0,1] \to [0,\iop)$ defined, for all $\epsilon \in (0,1]$, by
    $
    D_{\max}(\epsilon) \ceq
    \max_{k \in [K]}
    \sup_{u \in [\epsilon,1] }
    F_k'(u).
    $
    \item 
    There exists a function $D_{\min}^{-1} \colon \bigl[D_{\min}(1) ,  \infty\bigr) \to (0,1]$, defined, for all $y \in \bigl[D_{\min}(1) ,  \infty\bigr)$, by
    $
    D_{\min}^{-1}(y)
    \ceq \inf \bcb{
    \epsilon \in (0,1] :
    D_{\min}(\epsilon)
    \le y
    }
    $
    ---indeed, the set in the $\inf$ before is non-empty (it contains $1$) and 
    $D_{\min}^{-1}(y) \in (0,1]$ because $\displaystyle \lim_{\epsilon\to 0^+}D_{\min}(\epsilon) = \infty $.  
\end{itemize}

For the remainder of this section, we assume fixed such functions $M$, $D_{\min}$, $D_{\max}$, and $D_{\min}^{-1}$.


We now introduce some notation that will be used to state and prove the next theorem, that is our main result of this section.
As in \Cref{s:speed-ups},
for each $m \in [0,\infty)^K$, define $\Phi_m \colon [0,1]^K \to [0,\infty)$, $x \mapsto \sum_{k \in [K]} m_k F_k\brb{x_k}$. 
Also as in \Cref{s:speed-ups},
we use the notation $x^\star (m)$ to denote a maximizer (chosen arbitrarily if there are more than one of them) of $\Phi_m$ 
and we
remark that, for any $t \ge 1$, \Cref{algo:ucbowski} selects allocations
\begin{equation*} 
(X_{t,1},\ldots,X_{t,K})
= 
x^\star 
( \UCB_{t-1}^\delta ),
\end{equation*} 
where $\UCB_{t-1}^\delta$ is defined in the algorithm. 
We can now state our generalized version of the polylogarithmic rate of \Cref{algo:ucbowski} in the general diminishing-returns setting.

\begin{theorem} \label{thm:boost}
Under \Cref{assumption:F}, for any time horizon $T \ge 3$ and mean vector $\mu = (\mu_1, \ldots , \mu_K) \in (0,1]^K$, defining 
$
\mu_{\inf} \ceq \min_{k \in [K]} \mu_k
$, 
$
u_{\inf}
\ceq
D_{\min}^{-1}
\lrb{
\frac{2}{\mu_{\inf}}
D_{\max} \left( \frac{1}{K} \right)
}
$,
and the realized regret
\[
\cR_T
\coloneqq 
\sum_{t=1}^T 
\sum_{k=1}^K 
\bbrb{ 
\mu_k F_k\Brb{ \brb{x^\star(\mu)}_k }
-
\mu_k F_k
(X_{t,k})
} \;,
\]
if we run \Cref{algo:ucbowski} with parameters $T$ and $\delta \in \brb{ 2e^{-50(\log T)^{2}}, 1 }$, 
its realized regret satisfies, with probability at least $1-\frac{K}{T} - \delta $, 
    \begin{equation} \label{eq:realized:reg:polylog}
     \cR_T 
        \le 
         \frac{32 M(u_{\inf})^2 \log(\frac{2}{\delta}) K}{\mu_{\inf}\alpha(u_{\inf}) \left( \min_{k \in [K]} F_k(u_{\inf}) \right)} 
   \log(T)
   +
   3K^2 +
   \frac{4K^2 \left(100 (\log T)^2 + 1 \right)}{ \min_{k \in [K]} F_k 
\left( 
D_{\min}^{-1}
\left( 
\frac{2}{\mu_{\inf}}
D_{\max} \left( \frac{1}{K^2} \right)
\right)
\right) }.
\end{equation}
Hence, by setting $\delta \coloneqq \frac{1}{(KT)^2}$, noticing that $R_T = \E[\cR_T]$ and that the reward function is $[0,K]$-valued, we get, in particular
\begin{equation} \label{eq:final:regret:fast}
    R_T
\le
    \frac{32 M(u_{\inf})^2 \log(\frac{2}{\delta}) K}{\mu_{\inf}\alpha(u_{\inf}) \left( \min_{k \in [K]} F_k(u_{\inf}) \right)} 
   \log(T)
   +
   5 K^2
   +
  \frac{4K^2 \left(100 (\log T)^2 + 1 \right)}{ \min_{k \in [K]} F_k
\left( 
D_{\min}^{-1}
\left( 
\frac{2}{\mu_{\inf}}
D_{\max} \left( \frac{1}{K^2} \right)
\right)
\right) }\;.
\end{equation}
\end{theorem}

Note that in \eqref{eq:realized:reg:polylog}, taking $\delta$ as a negative power of $T$, yields a bound of order $\mathrm{polylog}(T)$ on an event of probability at least that $1-\frac{2K}{T}$.


To prove \Cref{thm:boost}, we will need three technical lemmas, which we state and prove first.
The first lemma is a first step toward obtaining quantitative lower bounds on the optimal budget allocation as well as the budget allocations selected by \Cref{algo:ucbowski}.




\begin{lemma} \label{lemma:Lb:budget}
Let \Cref{assumption:F} hold.
    For any $m \in (0,1]^K$, for any $k, \ell \in [K]$, 
    $k \neq \ell$,
    we have
    \[
\brb{x^\star(m)}_\ell 
\ge D_{\min}^{-1}
\left( 
\frac{ m_k }{m_\ell}
D_{\max} \Brb{\brb{x^\star(m)}_k }
\right).
    \]
\end{lemma}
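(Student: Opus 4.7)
The plan is to derive the claimed lower bound from the first-order (KKT) optimality conditions at the maximizer $x^\star \ceq x^\star(m)$ of the concave function $\Phi_m$ on $\Delta_K$, combined with the monotonicity of each $F_j'$ (non-increasing on $(0,1]$ by concavity). The first and most delicate step is to establish \emph{interiority}: $x^\star_j > 0$ for every $j \in [K]$. If instead some $x^\star_j = 0$, picking any $j' \neq j$ with $x^\star_{j'} > 0$ and considering the perturbation $x^\star + \eta(e_j - e_{j'})$ for $\eta > 0$, the resulting change in $\Phi_m$ is
\[
    m_j \bsb{F_j(\eta) - F_j(0)} - m_{j'} \bsb{F_{j'}(x^\star_{j'}) - F_{j'}(x^\star_{j'} - \eta)}.
\]
Since $m_j > 0$ and $F_j'(0^+) = +\infty$ by \Cref{assumption:F}, the first term dominates the (finite) second term for all sufficiently small $\eta > 0$, making the perturbation strictly better and contradicting the optimality of $x^\star$. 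Hence $x^\star$ lies in the relative interior of $\Delta_K$.

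Once interiority is in hand, the Lagrangian for the single equality constraint $\sum_{j \in [K]} x_j = 1$ yields a scalar $\lambda$ with $m_j F_j'(x^\star_j) = \lambda$ for every $j \in [K]$. Specializing to the indices $k \neq \ell$ of the statement and bounding $F_k'(x^\star_k) \le D_{\max}(x^\star_k)$ (immediate from the definition of $D_{\max}$), I obtain
\[
    F_\ell'(x^\star_\ell) = \frac{m_k}{m_\ell}\, F_k'(x^\star_k) \le \frac{m_k}{m_\ell}\, D_{\max}(x^\star_k) \;=:\; y.
\]

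To convert this estimate into a lower bound on $x^\star_\ell$, I would invert $D_{\min}$. Since $F_\ell'$ is non-increasing on $(0,1]$,
\[
    D_{\min}(x^\star_\ell) = \min_{j \in [K]} \inf_{u \in (0, x^\star_\ell]} F_j'(u) \le F_\ell'(x^\star_\ell) \le y,
\]
which places $x^\star_\ell$ inside the set $\bcb{\epsilon \in (0,1] : D_{\min}(\epsilon) \le y}$ whose infimum is, by definition, $D_{\min}^{-1}(y)$; hence $x^\star_\ell \ge D_{\min}^{-1}(y)$, which is the claimed inequality. A quick sanity check that $y$ lies in the domain of $D_{\min}^{-1}$ is automatic, since $y \ge F_\ell'(1) \ge D_{\min}(1)$ by the same monotonicity.

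The main obstacle will be the interiority argument: without the condition $F_j'(0^+) = +\infty$, some component of $x^\star$ could sit at zero, the KKT equality would degrade to an inequality carrying the multiplier of an active non-negativity constraint, and the clean inversion through $D_{\min}^{-1}$ performed above would no longer yield the stated bound. The remaining steps are essentially bookkeeping with the monotonicity of $F_j'$ and the definitions of $D_{\min}$ and $D_{\max}$.
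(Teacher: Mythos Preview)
Your proof is correct and follows essentially the same approach as the paper's: both establish interiority of $x^\star(m)$ from the condition $F_j'(0^+)=+\infty$, then derive $m_\ell F_\ell'(x^\star_\ell) \le m_k F_k'(x^\star_k)$ (you via KKT stationarity, the paper via the equivalent direct perturbation argument), and finally bound through $D_{\max}$, $D_{\min}$, and invert. Your KKT formulation actually yields the equality $m_\ell F_\ell'(x^\star_\ell) = m_k F_k'(x^\star_k)$, which is slightly sharper than the paper's one-sided perturbation inequality, but the extra strength is not needed.
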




\begin{proof}
Since $\displaystyle \lim_{u \to 0^+} F_k'(u) = + \infty$ and the components of $m$ are non-zero, we can see that all the components of $x^\star(m)$ are non-zero.
Hence decreasing $\brb{x^\star(m)}_k $ by a 
quantity
$\gamma >0$, 
increasing $\brb{x^\star(m)}_\ell $ by the same 
$\gamma >0$,
and taking the limit for $\gamma\to0^+$, 
leads to a modification of $\sum_{r=1}^K 
m_r F_r\Brb{\brb{x^\star(m)}_r}$ satisfying
\begin{align} \label{eq:after:before}
0 \ge 
F_\ell'\Brb{ \brb{x^\star(m)}_\ell } 
m_\ell 
-   
F_k' \Brb{\brb{x^\star(m)}_k } 
m_k.
\end{align}
Hence
\[
F_\ell'\Brb{ \brb{x^\star(m)}_\ell} 
\le 
\frac{m_k}{m_\ell }
F_k'\Brb{ \brb{x^\star(m)}_k }\;,
\]
and thus
\[
D_{\min} \Brb{
 \brb{x^\star(m)}_\ell 
}
\le 
\frac{m_k}{m_\ell }
D_{\max} \Brb{\brb{x^\star(m)}_k  }.
\]
This yields the lemma.
\end{proof}

Let us use the same notation $ N_{T,k}$ and
$\tilde{G}_{\tau,k}$
as in \eqref{eq:intro:N:Gtilde}
in \Cref{s:upper-bound}. 
For $\delta \in (0,1/2]$,
consider a first ``good event'',
as the event in \eqref{eq:good:event:ET-main} in \Cref{s:upper-bound},
\begin{equation} 
\label{eq:good:event:ET}
\cE_{T}^\delta
\ceq 
\underset{k \in [K]}{\bigcap}
\underset{t \in [ N_{T,k}]}{\bigcap}
\left\{ 
\left| 
\frac{1}{t}
\sum_{s=1}^t 
\tilde{G}_{s,k} 
-
\mu_k
\right|
\le 
\sqrt{
\frac{\log\lrb{\frac{2}{\delta}}}{2t}
}
\right\}.
\end{equation}

Then, define a second good event $\cG_{T}$ as the event of Lemma \ref{lemma:LB:bernoullis} (that holds with probability at least $1-\frac{K}{T}$).
The next lemma shows that each arm is explored at least a linear function of $t$, when $t$ is larger than some logarithmic function of $T$, under these good events. 

\begin{lemma}
\label{lemma:lb:cum:budget}
Let \Cref{assumption:F} hold.
Let $T \ge 3$.
    Consider $t$ with 
   \begin{equation}
   \label{eq:log:cond:t}
       t \ge 
       \frac{
       400 K (\log T)^2
       }{\min_{k \in [K]} F_k\lrb{\frac{1}{K^2}} }.
   \end{equation}
    Assume that
    the two good events
     $\cE_{T}^\delta$ in \eqref{eq:good:event:ET}
     and
     $\cG_{T}$ 
    hold. For $\cE_{T}^\delta$ assume that
$\delta \ge 2e^{-50(\log T)^{2}}$.
Then for each arm $\ell \in [K]$ it holds
\begin{equation}
\label{eq:lower:bound:budget}
\sum_{s=1}^t 
F_\ell(X_{s,\ell} )
\ge 
\left(
\frac{t}{4K} 
- \frac{1}{2}
\right)
F_\ell 
\left( 
D_{\min}^{-1}
\left( 
\frac{2}{\mu_{\inf}}
D_{\max} \left( \frac{1}{K^2} \right)
\right)
\right).
\end{equation} 
In addition, if we also have
\begin{equation}
\label{eq:cond:t:instant:budget}
t \ge 
2K +
\frac{4K \left(100 (\log T)^2 + 1 \right)}{ \min_{k \in [K]} F_k 
\left( 
D_{\min}^{-1}
\left( 
\frac{2}{\mu_{\inf}}
D_{\max} \left( \frac{1}{K^2} \right)
\right)
\right) }
\end{equation}
it also holds that, for each $\ell \in [K]$,
\begin{equation}
\label{eq:lower:bound:budget:instant}   
X_{t,\ell} 
\ge 
D_{\min}^{-1}
\left( 
\frac{2}{\mu_{\inf}}
D_{\max} \left( \frac{1}{K} \right)
\right).
\end{equation}
\end{lemma}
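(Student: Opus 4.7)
The plan is to combine the good events $\cE_T^\delta$ (UCB concentration) and $\cG_T$ (budget-to-completion relation) with \Cref{lemma:Lb:budget}. Since $\delta \ge 2e^{-50(\log T)^2}$ gives $\log(2/\delta) \le 50(\log T)^2$, under $\cE_T^\delta$ we have $\UCB_{s,k}^\delta \ge \mu_k \ge \mu_{\inf}$ for all $s,k$, and $N_{s,k} \ge 200(\log T)^2$ is sufficient to force $\UCB_{s,k}^\delta \le 2$; under $\cG_T$ this threshold on $N_{s,k}$ is itself ensured as soon as the cumulative $F_k$-budget crosses $400(\log T)^2$.

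Next I apply \Cref{lemma:Lb:budget} with $m = \UCB_{s-1}^\delta$ (valid because $X_s = x^\star(\UCB_{s-1}^\delta)$) and with $k = k^\star_s \ceq \argmax_k X_{s,k}$. By the simplex constraint, $X_{s, k^\star_s} \ge 1/K \ge 1/K^2$, so $D_{\max}(X_{s, k^\star_s}) \le D_{\max}(1/K^2)$; combining with $\UCB_{s-1,\ell}^\delta \ge \mu_{\inf}$ and the monotonicity of $D_{\min}^{-1}$ yields
\[
X_{s, \ell} \ge D_{\min}^{-1}\Bigl(\frac{\UCB_{s-1, k^\star_s}^\delta}{\mu_{\inf}} D_{\max}(1/K^2)\Bigr).
\]
I call a round $s$ \emph{good} when $\UCB_{s-1, k^\star_s}^\delta \le 2$; on such rounds the right-hand side is at least $u_{\inf,2} \ceq D_{\min}^{-1}(\tfrac{2}{\mu_{\inf}} D_{\max}(1/K^2))$, and crucially this bound is simultaneous for \emph{all} $\ell$.

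To count good rounds I note that a bad $s$ must satisfy $N_{s-1, k^\star_s} < 200(\log T)^2$, hence $s \le \tau_{k^\star_s}$, where $\tau_k$ denotes the first time the cumulative $F_k$-budget crosses $400(\log T)^2$ (by $\cG_T$ and the threshold from Step~1). Since $X_{s,k} \ge 1/K$ on rounds with $k^\star_s = k$, each such round accumulates at least $F_k(1/K^2)$ of $F_k$-budget, so $|\{s \le \tau_k : k^\star_s = k\}| \le (400(\log T)^2 + 1)/F_k(1/K^2)$. Summing across $k$ and combining with condition \eqref{eq:log:cond:t} produces a lower bound of $t/(4K) - 1/2$ on the number of good rounds, so multiplying by $F_\ell(u_{\inf,2})$ gives \eqref{eq:lower:bound:budget}.

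Finally, for \eqref{eq:lower:bound:budget:instant} I bootstrap: under \eqref{eq:cond:t:instant:budget}, the cumulative bound \eqref{eq:lower:bound:budget} applied to every arm $k$ forces $\sum_s F_k(X_{s,k}) \ge 100(\log T)^2$, so $\cG_T$ yields $N_{t-1,k} \ge 50(\log T)^2$ uniformly in $k$; tightening the Step~1 thresholds then gives $\UCB_{t-1,k}^\delta \le 2$ for \emph{every} $k$ simultaneously. Re-applying \Cref{lemma:Lb:budget} at time $t$ with $k = k^\star_t$ and the sharper bound $X_{t, k^\star_t} \ge 1/K$ (rather than only $1/K^2$) produces $X_{t, \ell} \ge D_{\min}^{-1}(\tfrac{2}{\mu_{\inf}} D_{\max}(1/K)) = u_{\inf}$, establishing \eqref{eq:lower:bound:budget:instant}. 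The main technical obstacle is the bad-round accounting in the third paragraph: the naive ``each bad $s$ charges one unit'' count gives an upper bound of order $t$ (useless); the fix is a refined accounting in which each bad round is made to charge $1-1/K$ units via the cluster identity $\sum_{k : X_{s,k} \ge 1/K^2} X_{s,k} > 1 - 1/K$ (valid at any bad $s$), the total being capped by the aggregate $F_k$-budget accumulated before $\tau_k$ so that the $1/(4K)$ factor in \eqref{eq:lower:bound:budget} aligns exactly with \eqref{eq:log:cond:t}.
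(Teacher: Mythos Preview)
Your bad-round accounting has a genuine gap. With the threshold you use, the naive count gives
\[
\#\{\text{bad }s\}\le \sum_{k} \frac{400(\log T)^2+1}{F_k(1/K^2)} \le \frac{K\bigl(400(\log T)^2+1\bigr)}{\min_k F_k(1/K^2)},
\]
and by \eqref{eq:log:cond:t} the right-hand side is essentially $t$, so the resulting lower bound on good rounds can be negative. You recognize this and propose a ``refined accounting'' via the cluster identity $\sum_{k\in A_s} X_{s,k}>1-1/K$, but this does not work: with your definition of ``bad'' (only $\UCB_{s-1,k^\star_s}^\delta>2$), the other arms $k\in A_s$ need not satisfy $s\le\tau_k$, so they cannot be charged. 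Even if you strengthen ``bad'' to require $\UCB_{s-1,k}^\delta>2$ for every $k\in A_s$, the cluster identity is in raw budgets $X_{s,k}$, whereas the cap you invoke is on $F_k$-budgets $\sum_{s\le\tau_k}F_k(X_{s,k})$; there is no conversion that makes the numerics close.

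The straightforward repair is to tighten the threshold: $\UCB_{s-1,k}^\delta>2$ already forces $1+N_{s-1,k}<\log(2/\delta)\le 50(\log T)^2$, hence by $\cG_T$ one gets $\sum_{s'<s}F_k(X_{s',k})<100(\log T)^2$ (not $400$). Redefining $\tau_k$ with threshold $100(\log T)^2$ recovers the factor-of-$4$ slack built into \eqref{eq:log:cond:t}, and then bad rounds $\le t/4+o(t)$, so good rounds $\ge t/2>t/(4K)-1/2$. The paper, however, takes a structurally different route for \eqref{eq:lower:bound:budget} that avoids any bad-round bookkeeping: it fixes a \emph{single} anchor arm $k$ with maximal cumulative budget $\sum_{s\le t}X_{s,k}$, observes that this arm has at least $S_t\ge t/(2K)$ rounds with $X_{s,k}\ge 1/K^2$, and restricts to the \emph{second half} of those rounds. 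For any such $s$, at least $S_t/2$ earlier rounds already had $X_{\cdot,k}\ge 1/K^2$, so $\sum_{s'<s}F_k(X_{s',k})\ge (t/(4K))F_k(1/K^2)\ge 100(\log T)^2$ automatically, triggering $\cG_T$ and Lemma~\ref{lemma:Lb:budget}. This yields exactly $(S_t-1)/2\ge t/(4K)-1/2$ rounds with the per-round bound on $X_{s,\ell}$. Your bootstrap for \eqref{eq:lower:bound:budget:instant} is correct and matches the paper's argument.
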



\begin{proof}
For the time $t$ under consideration, we consider $k$ such that arm $k$ received the maximum budget $\sum_{s=1}^t X_{s,k}$. We also 
consider any other arm $\ell$ than $k$. 
We let 
\[
S_t
=
\# \lcb{ s \in [t] : X_{s,k} \ge \frac{1}{K^2}  }. 
\]
Then $ S_t + \frac{t}{K^2} \ge \sum_{s=1}^t X_{s,k} \ge \frac{t}{K}$ leads to $ S_t \ge \frac{t}{K} - \frac{t}{K^2}$ and thus
\[
S_t \ge \frac{t}{2K}.
\]

In the list $s_1,\ldots,s_{S_t}$ of times $s \in [t]$ for which $ X_{s,k} \ge \frac{1}{K^2}$, take an element $s$ in the second half of it (composed of $r$ elements if $S_t=2r+1$ is odd).

Using that $\cE_{T}^\delta$ in \eqref{eq:good:event:ET} holds we obtain that $\UCB_{s-1,\ell}^\delta \ge \mu_\ell $. Also, 
\begin{equation}
\label{eq:budget:k:lb}
\sum_{s'=1}^{s-1} 
F_k(X_{s',k})
\ge 
\frac{S_t}{2} 
F_k\left( \frac{1}{K^2} \right)
\ge
\frac{t}{4K} 
F_k\left( \frac{1}{K^2} \right).
\end{equation}
From \eqref{eq:log:cond:t}, we thus have
\begin{equation} \label{eq:sum:X:ge}
\sum_{s'=1}^{s-1} 
F_k(X_{s',k})  
\ge 100 (\log T)^2. 
\end{equation}
Hence using that the event of Lemma \ref{lemma:LB:bernoullis} holds,
\[
\sum_{s'=1}^{s-1}
B_{s',k}(X_{s',k}) 
\ge 
\frac{\sum_{s'=1}^{s-1} F_k(X_{s',k})}{2} 
\ge 
50 (\log T)^2.
\]
Let us recall the notation 
$
    N_{t,k}
=
    \sum_{s=1}^t B_{s,k}(X_{s,k})
$
and $\tilde{G}_{\tau,k}$
as in \eqref{eq:intro:N:Gtilde}. 
This yields, with the same understanding that $0/0\coloneqq 0$,
\begin{equation*}
\UCB_{s-1,k}^\delta 
=
\frac{ \sum_{s'=1}^{N_{s-1,k}}
\tilde{G}_{s',k}
}{N_{s-1,k}}
+
\sqrt{
\frac{\log(\frac{2}{\delta})}{1+N_{s-1,k}}}
\le 
1 + \sqrt{
\frac{\log(\frac{2}{\delta})}{50 (\log T)^2}}.
\end{equation*}
From $\delta \ge 2e^{-50(\log T)^{2}}$, 
we obtain, 
\[
\UCB_{s-1,k}^\delta 
\le 2. 
\]
Hence applying \Cref{lemma:Lb:budget} to $m = \UCB_{s-1}$, with $\UCB_{s-1,\ell}^\delta \ge \mu_\ell $ and $\UCB_{s-1,k}^\delta 
\le 2$, we get
\begin{equation}
\label{eq:lower:bound:regret:instant:s}    
X_{s,\ell} 
\ge 
D_{\min}^{-1}
\left( 
\frac{2}{\mu_{\inf}}
D_{\max} \left( \frac{1}{K^2} \right)
\right).
\end{equation}
Since \eqref{eq:lower:bound:regret:instant:s} is true for all $s$ in the second half of the list described above, we obtain 
\begin{align*} \label{eq:linear:budget}
\sum_{s=1}^t 
F_\ell(X_{s,\ell} )
& \ge 
\frac{S_t-1}{2} 
F_\ell 
\left( 
D_{\min}^{-1}
\left( 
\frac{2}{\mu_{\inf}}
D_{\max} \left( \frac{1}{K^2} \right)
\right)
\right) \notag \\
& \ge
\left(
\frac{t}{4K} 
- \frac{1}{2}
\right)
F_\ell 
\left( 
D_{\min}^{-1}
\left( 
\frac{2}{\mu_{\inf}}
D_{\max} \left( \frac{1}{K^2} \right)
\right)
\right).
\end{align*}

This yields \eqref{eq:lower:bound:budget} for all $\ell \ne k$. From \eqref{eq:budget:k:lb}, one can see that \eqref{eq:lower:bound:budget} also holds for $\ell = k$.

Let us now show \eqref{eq:lower:bound:budget:instant}, using Lemma \ref{lemma:Lb:budget}. Consider any $\ell \in [K]$ and $t$ such that \eqref{eq:cond:t:instant:budget} holds. There is $\ell' \in [K]$ such that $X_{t,\ell'} \ge \frac{1}{K}$. We have
\[
\UCB_{t-1,\ell'}^{\delta} 
=
\frac{ \sum_{s=1}^{N_{t-1,\ell'}}
\tilde{G}_{s,\ell'}
}{N_{t-1,\ell'}}
+
\sqrt{
\frac{\log(\frac{2}{\delta})}{1+N_{t-1,\ell'}}}
\le 
1 +\sqrt{
\frac{\log(\frac{2}{\delta})}{1+N_{t-1,\ell'}}}.
\]
From \eqref{eq:lower:bound:budget} and then \eqref{eq:cond:t:instant:budget}, we have
\begin{equation} \label{eq:lb:cum:budg:t}
\sum_{s=1}^{t-1}
F_{\ell'} (X_{s,\ell'})
\ge 
\left( 
\frac{t}{4K} 
- \frac{1}{2}
\right)
F_{\ell'} 
\left( 
D_{\min}^{-1}
\left( 
\frac{2}{\mu_{\inf}}
D_{\max} \left( \frac{1}{K^2} \right)
\right)
\right)
-1
\ge 100 (\log T)^2.
\end{equation}
Hence, using that the event of Lemma \ref{lemma:LB:bernoullis} holds,  
\[
\UCB_{t-1,\ell'}^{\delta} 
\le 1 + 
\sqrt{
\frac{2 \log(\frac{2}{\delta})}{\sum_{s=1}^{t-1} F_{\ell'} (X^{\ell'}_s)}} 
\le 2,
\]
using 
$\delta \ge 2e^{-50(\log T)^{2}}$.
Finally, under the good event $\cE_{T}^\delta$ in \eqref{eq:good:event:ET}, we have $\UCB_{s-1,\ell} \ge \mu_\ell $. Hence from Lemma \ref{lemma:Lb:budget}, \eqref{eq:lower:bound:budget:instant} holds.
\end{proof}


The last crucial lemma establishes the stability of the average reward under reward mean $m$, when the budget allocation goes from the optimal $x^\star(m)$ to the suboptimal $x^\star(\tilde{m})$, computed from a presumed $\tilde{m} \neq m$.

\begin{lemma}
\label{lemma:stability}
Let \Cref{assumption:F} hold.
Let $u_{\inf} \in (0,\frac{1}{K}]$ and $0 < m_{\inf} < m_{\sup} < \infty $. 
Consider mean vectors $m , \tilde{m} \in [m_{\inf}, m_{\sup}]^K$.    
Assume that $x^\star(m) \in (u_{\inf},1]^K$ and $x^\star(\tilde{m}) \in (u_{\inf},1]^K$.
Then 
\[
\Phi_m\brb{x^\star(m)} - \Phi_m\brb{x^\star(\tilde{m})}
\le 
 \frac{2 M(u_{\inf})^2}{m_{\inf}\alpha(u_{\inf})}\lno{m-\tilde{m}}_2^2.
\]
\end{lemma}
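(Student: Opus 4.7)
The plan is to exploit that, restricted to the box $[u_{\inf},1]^K$, the function $\Phi_m$ is uniformly strongly concave, with a constant that does not vanish with the data.  Since the Hessian of $\Phi_m$ at any $x\in[u_{\inf},1]^K$ is the diagonal matrix with entries $m_k F_k''(x_k)$, and since $F_k$ is $\alpha(u_{\inf})$-strongly concave on $[u_{\inf},1]$ by \Cref{assumption:F}, each diagonal entry is at most $-m_k\alpha(u_{\inf})\le -m_{\inf}\alpha(u_{\inf})$.  Hence $\Phi_m$ (and symmetrically $\Phi_{\tilde m}$) is $m_{\inf}\alpha(u_{\inf})$-strongly concave on $[u_{\inf},1]^K$.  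The key geometric observation is that, because $[u_{\inf},1]^K\cap\Delta_K$ is convex and contains both $x^\star(m)$ and $x^\star(\tilde m)$ by hypothesis, the whole segment between them sits in the region of strong concavity, so those two points really are the global maximizers of $\Phi_m$ and $\Phi_{\tilde m}$ over the same convex set where strong concavity holds.

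Next, I would apply the standard strong-concavity inequality at each of the two maximizers to obtain
\[
\Phi_m\brb{x^\star(m)}-\Phi_m\brb{x^\star(\tilde m)}\ge\tfrac{m_{\inf}\alpha(u_{\inf})}{2}\bno{x^\star(m)-x^\star(\tilde m)}_2^2
\]
and the symmetric bound with $m$ and $\tilde m$ swapped.  Adding these two inequalities, the left-hand side collapses to
\[
\sum_{k=1}^K(m_k-\tilde m_k)\Brb{F_k\brb{x^\star_k(m)}-F_k\brb{x^\star_k(\tilde m)}}.
\]
Using Cauchy--Schwarz together with the fact that each $F_k$ is $M(u_{\inf})$-Lipschitz on $[u_{\inf},1]$ (a consequence of \Cref{assumption:F}), this sum is bounded above by $M(u_{\inf})\lno{m-\tilde m}_2\lno{x^\star(m)-x^\star(\tilde m)}_2$.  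Dividing out one factor of $\lno{x^\star(m)-x^\star(\tilde m)}_2$ yields the Lipschitz stability estimate
\[
\bno{x^\star(m)-x^\star(\tilde m)}_2\le\frac{M(u_{\inf})}{m_{\inf}\alpha(u_{\inf})}\bno{m-\tilde m}_2.
\]

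Finally, I would upper bound the quantity of interest by inserting $\pm\Phi_{\tilde m}$ terms:
\[
\Phi_m\brb{x^\star(m)}-\Phi_m\brb{x^\star(\tilde m)}=\brb{\Phi_m-\Phi_{\tilde m}}\brb{x^\star(m)}+\underbrace{\Phi_{\tilde m}\brb{x^\star(m)}-\Phi_{\tilde m}\brb{x^\star(\tilde m)}}_{\le\,0}+\brb{\Phi_{\tilde m}-\Phi_m}\brb{x^\star(\tilde m)}.
\]
The middle term is non-positive by optimality of $x^\star(\tilde m)$, and the remaining two terms combine into $\sum_k(m_k-\tilde m_k)\bsb{F_k(x^\star_k(m))-F_k(x^\star_k(\tilde m))}$, which is again controlled by Cauchy--Schwarz and Lipschitzness, and then by the stability estimate just derived.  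This chain of inequalities delivers the claimed quadratic bound (with a constant at most $\tfrac{M(u_{\inf})^2}{m_{\inf}\alpha(u_{\inf})}$, which is within the stated factor of two).

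The only real subtlety I expect is book-keeping in the last step — in particular, making sure that strong concavity is applied on the convex set $[u_{\inf},1]^K\cap\Delta_K$ where both maximizers lie, rather than on the whole simplex (where concavity may fail to be uniform).  Once that is made explicit, every other step is a routine combination of strong concavity, Lipschitz continuity, and Cauchy--Schwarz.
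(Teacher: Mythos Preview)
Your proof is correct and mirrors the paper's argument: strong concavity of $\Phi_m$ on $[u_{\inf},1]^K\cap\Delta_K$ combined with first-order optimality gives the quadratic lower bound on the gap, the telescoping decomposition with $\pm\Phi_{\tilde m}$ gives the linear upper bound, and Cauchy--Schwarz plus Lipschitzness of the $F_k$ closes the loop to yield both the stability estimate and the final bound. The only differences are cosmetic: your symmetrization (applying strong concavity at \emph{both} maximizers and adding) sharpens the stability constant by a factor of two relative to the paper, and your opening Hessian heuristic tacitly assumes $F_k\in C^2$, which \Cref{assumption:F} does not grant---but your actual argument, like the paper's, uses only the first-order definition of strong concavity, so this is a harmless presentational slip.
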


\begin{proof}
For conciseness, write $\alpha = \alpha(u_{\inf})$ 
and $M = M(u_{\inf})$.
Notice that since $m \in [m_{\inf},m_{\sup}]^K$ we have that $\Phi_m \colon [u_{\inf},1]^K \to [0,K]$ is $m_{\inf} \alpha$-strongly 
concave with respect to the Euclidean norm.
In fact, for each $u,v \in [u_{\inf},1]^K$,
\begin{align*}
    \Phi_m(v)
&=
    \sum_{k \in [K]} m_k F_k\brb{v_k}
\le
    \sum_{k \in [K]} m_k \Brb{ F_k\brb{u_k} + F_k'\brb{u_k} \brb{v_k-u_k} - \frac{\alpha}{2} \brb{v_k-u_k}^2 }
\\
&\le 
    \Phi_m(u) + \sum_{k \in [K]} m_k F_k'\brb{u_k} \brb{v_k-u_k} - \frac{m_{\inf}\alpha}{2} \lno{v-u}^2_2\;.
\end{align*}
Similarly,
$\Phi_{\tilde{m}} \colon [u_{\inf},1]^K \to [0,K]$ is $m_{\inf} \alpha$-strongly 
concave with respect to the Euclidean norm.

For conciseness, write $\Phi^\star = \Phi_{m}$ 
and $\tilde{\Phi} = \Phi_{\tilde{m}}$. Write also $x^\star = x^\star(m)$ and $\tilde{u} = x^\star(\tilde{m}) $.
For $v \in [0,\frac{1}{K})$, write
\[
\Delta_K(v)
= 
\Delta_K \cap [v,1]^K.
\]
Then since it is assumed
$x^\star(m) \in (u_{\inf},1]^K$ and $x^\star(\tilde{m}) \in (u_{\inf},1]^K$, we have 
\[
x^\star \in \argmax_{u \in \Delta_K(u_{\inf})} \Phi^\star(u)
~ ~\text{and} ~ ~ \tilde{u} \in \argmax_{u \in \Delta_K(u_{\inf})} \tilde{\Phi}(u).
\]   

Since $x^{\star}$ is in the relative interior of $\Delta_K$ with respect to $ \{ u \in \bbR^K : \sum_{k=1}^K u_k = 0 \} $ and it maximizes $\Phi^\star$ on $\Delta_K$,
for each $u \in \Delta_K$ we have
\begin{equation} \label{eq:zero:proj:grad}
\big\langle\nabla\Phi^\star(x^\star),u-x^\star\big\rangle = 0.
\end{equation}   

Then, leveraging first the  $m_{\inf}\alpha$-concavity of $\Phi^\star$, and then the $M$-{\lip}ness of $F_1,\dots,F_K$ on $(u_{\inf},1]$, we have:
\begin{align}
&
    \frac{m_{\inf} \alpha}{2} \lno{\tilde{u} - x^\star}^2_2
    \label{eq:first}
\\
&
\qquad
\le
    \Phi^\star(x^\star) + \underbrace{\big\langle \nabla\Phi^\star(x^\star), \tilde{u} - x^\star \big\rangle}_{= 0~\text{by \eqref{eq:zero:proj:grad}}} - \Phi^\star(\tilde{u})
    \nonumber
\\
&
\qquad
=
    \Phi^\star(x^\star) - \Phi^\star(\tilde{u})
    \label{eq:second}
\\
&
\qquad
=
    \Phi^\star(x^\star) - \tilde{\Phi}(x^\star) +
    \underbrace{\tilde{\Phi}(x^\star) - \tilde{\Phi}(\tilde{u})}_{\le 0} +
    \tilde{\Phi}(\tilde{u}) - \Phi^\star(\tilde{u})
    \nonumber
\\
&
\qquad
\le
    \Phi^\star(x^\star) - \tilde{\Phi}(x^\star) +
    \tilde{\Phi}(\tilde{u}) - \Phi^\star(\tilde{u})
    \nonumber
\\
&
\qquad
=
    \sum_{k \in [K]} \brb{m^{\star}_k-\tilde{m}_k} \brb{F_k(x^\star)-F_k(\tilde{u})}
    \nonumber
\\
&
\qquad
\le
    M\lno{m-\tilde{m}}_2 \lno{x^\star-\tilde{u}}_2 .
    \label{eq:third}
\end{align}
From \eqref{eq:first} $\le$ \eqref{eq:third}, we get
\begin{equation}
    \label{eq:fourth}
    \lno{x^\star-\tilde{u}}_2
\le
    \frac{2M}{m_{\inf} \alpha}\lno{m-\tilde{m}}_2.
\end{equation}
Putting \eqref{eq:second} $\le$ \eqref{eq:third} together with \eqref{eq:fourth}, we get
\[
    \Phi^\star(x^\star) - \Phi^\star(\tilde{u})
\le
    M\lno{m-\tilde{m}}_2 \lno{x^\star-\tilde{u}}_2
\le \frac{2M^2}{m_{\inf} \alpha} \lno{m-\tilde{m}}^2_2
\;.
\]
This concludes the proof.
\end{proof}


We now have all the tools to prove our polylogarithmic regret bound under concavity, i.e., to complete the proof of Theorem \ref{thm:boost}.

\begin{proof}[Proof of Theorem \ref{thm:boost}]
The aim is to use \Cref{lemma:stability} with 
\[
m_{\inf} = \mu_{\inf}\ ;,
\qquad
m_{\sup} = 2 \;,
\qquad 
u_{\inf}
=
D_{\min}^{-1}
\lrb{
\frac{2}{\mu_{\inf}}
D_{\max} \left( \frac{1}{K} \right)
}
\;.
\]

For conciseness, write $\alpha = \alpha(u_{\inf})$ and $M = M(u_{\inf})$.

We assume that the good events $\cE_{T}^\delta$ and $\cG_{T}$ hold throughout the proof (which holds with probability at least $1 - \frac{K}{T} - \delta$).

Define
\[
    S(T)
\coloneqq 
    2K
    + 
    \lce{\frac{4K \left(100 (\log T)^2 + 1 \right)}{ \min_{k \in [K]} F_k 
    \left( 
    D_{\min}^{-1}
    \left( 
    \frac{2}{\mu_{\inf}}
    D_{\max} \left( \frac{1}{K^2} \right)
    \right)
    \right)}}
\]
as in \eqref{eq:cond:t:instant:budget}.
Then from \eqref{eq:lower:bound:budget:instant} in \Cref{lemma:lb:cum:budget}, for $t \ge S(T)$, for all arm $k$, $X_{t,k} \ge u_{\inf}$. 

Hence $\cR_T$ can be bounded as
\[
\cR_T 
\le 
K S(T)
+
\sum_{t=S(T)}^T  \Brb{
\Phi_{\mu} \brb{x^\star(\mu)}
-
\Phi_{\mu}\brb{x^\star(\UCB_{t-1})}}.
\]

For all $k \in [K]$, from Lemma \ref{lemma:Lb:budget}, we can see that $\brb{x^{\star}(\mu)}_k \ge u_{\inf}$.
Hence, with \Cref{lemma:stability} we obtain
\begin{align*}
\cR_T 
& \le 
K S(T)
+
\sum_{t=S(T)}^T 
 \frac{2 M^2}{\mu_{\inf}\alpha}\lno{\mu - \UCB_{t-1}}_2^2
 \\ 
 & \le 
K  S(T)
+
 \sum_{t=S(T)}^T 
 \frac{4 M^2}{\mu_{\inf}\alpha}
 \sum_{k=1}^K 
\frac{4 \log(\frac{2}{\delta})}{1+\sum_{s=1}^{t-1}
B_{s,k}(X_{s,k})
}.
\end{align*}
Using that the good event $\cG_{T}$ holds
and using that $\sum_{s=1}^{t-1} F_k(X_{s,k}) \ge 100 \log(T)^2$ as observed in \eqref{eq:lb:cum:budg:t} in the proof of \Cref{lemma:lb:cum:budget},
we get
\begin{align*}
\cR_T 
& \le 
K S(T)
+
 \frac{16 M^2 \log(\frac{2}{\delta})}{\mu_{\inf}\alpha}
  \sum_{t=S(T)}^T
  \sum_{k=1}^K
  \frac{2}{\sum_{s=1}^{t-1}
F_k(X_{s,k})
} 
\\
& \le 
K S(T)
+ 
 \frac{16 M^2 \log(\frac{2}{\delta}) K}{\mu_{\inf} \alpha \left( \min_{k \in [K]} F_k(u_{\inf}) \right)}
  \sum_{t=S(T)}^T
  \frac{2}{ t-1} 
  \\
  & \le 
K   S(T)
+
   \frac{32 M^2 \log(\frac{2}{\delta}) K}{\mu_{\inf} \alpha \left( \min_{k \in [K] } F_k(u_{\inf}) \right)} 
   \log(T).
\end{align*}
This yields
\begin{align*}
    \cR_T 
&\le 
    \frac{32 M(u_{\inf})^2 \log(\frac{2}{\delta}) K}{\mu_{\inf}\alpha(u_{\inf}) \left( \min_{k \in [K]} F_k(u_{\inf}) \right)} 
   \log(T)
   +
   2K^2 + K
  \lce{\frac{4K \left(100 (\log T)^2 + 1 \right)}{ \min_{k \in [K]} F_k 
\left( 
D_{\min}^{-1}
\left( 
\frac{2}{\mu_{\inf}}
D_{\max} \left( \frac{1}{K^2} \right)
\right)
\right) }}
\\
&\le
\frac{32 M(u_{\inf})^2 \log(\frac{2}{\delta}) K}{\mu_{\inf}\alpha(u_{\inf}) \left( \min_{k \in [K]} F_k(u_{\inf}) \right)} 
   \log(T)
   +
   3K^2 +
  \frac{4K^2 \left(100 (\log T)^2 + 1 \right)}{ \min_{k \in [K]} F_k 
\left( 
D_{\min}^{-1}
\left( 
\frac{2}{\mu_{\inf}}
D_{\max} \left( \frac{1}{K^2} \right)
\right)
\right)}
\;,
\end{align*}
which is \eqref{eq:realized:reg:polylog}.
The consequence \eqref{eq:final:regret:fast} is immediate.
\end{proof}

In the next proposition, we make explicit the upper bound on $R_T$ of \Cref{thm:boost}, for the special case of concave power functions $F_1,\dots,F_K$, as in \Cref{t:upper-bound-speed-up}.

\begin{proposition} \label{prop:explicit:constant}
 Consider the case where there are $a_1,\dots,a_K \in (0,1)$ such that, for $k \in [K]$ and $x \in [0,1]$, $F_k(x) = x^{a_k}$. Define $a_{\min} = \min_{k \in [K]} a_k$ and $a_{\max} = \max_{k \in [K]} a_k$. 
 Define $
\mu_{\inf} \ceq \min_{k \in [K]} \mu_k
$ and assume $\mu_{\inf} >0$. 
If we run \Cref{algo:ucbowski} with parameters $T \ge 3$ and $\delta = \frac{1}{(KT)^2}$,
if $\frac{1}{(KT)^2} \ge 2e^{-50(\log T)^{2}}$,
we obtain 
\begin{align*}
R_T \le & 
\frac{32  
K^{1+\frac{(2(1-a_{\min})+a_{\max})(1-a_{\min})}{1-a_{\max}}}
}{
\left(
\frac{a_{\min} \mu_{\inf}}{
2 a_{\max}
}
\right)^{\frac{2(1-a_{\min})+a_{\max}}{1-a_{\max}}}
\mu_{\inf} \min_{k \in [K]} a_k(1-a_k ) } 
\log(2 K^2 T^2)  \log(T)
+ 5 K^2
 \\ 
 & + 
 \frac{
4 \left(100 (\log T)^2 + 1 \right)
K^{2+\frac{2 a_{\max} (1-a_{\min})}{1-a_{\max}}}
}{
  \left(
\frac{a_{\min} \mu_{\inf}}{
2 a_{\max}
}
\right)^{\frac{a_{\max}}{1-a_{\max}}}
}. 
\end{align*}

\end{proposition}

\begin{proof}
As already discussed, \Cref{assumption:F} holds, and we can apply \Cref{thm:boost}.
Let us first bound the individual quantities in \eqref{eq:final:regret:fast} there.    

Since $F_k'(x) = a_k x^{a_k-1}$, the function $F_k$ is $a_k \e^{a_k-1}$-Lipschitz on $[\e,1]$ and thus $\e^{a_k-1}$-Lipschitz on $[\e,1]$. Hence, $M$ can be chosen such that, for $\e \in (0,1]$,
\[
M(\e) \le \e^{a_{\min}-1}.
\]
Also, for $x \in (0,\e]$, $F_k'(x) \ge a_k \e^{a_k-1} \ge a_{\min} \e^{a_{\max} - 1}$. Hence, for $\e \in (0,1]$,
\[
D_{\min}(\e) \ge a_{\min} \e^{a_{\max} - 1}.
\]
Then, for $x \in [\e,1]$, $F_k'(x) \le a_k \e^{a_k-1} \le a_{\max} \e^{a_{\min} - 1}$. Hence, for $\e \in (0,1]$,
\[
D_{\max}(\e) \le a_{\max} \e^{a_{\min} - 1}.
\]
Next, for $\e,y > 0$ such that
$\e < (\frac{a_{\min}}{y})^{\frac{1}{1-a_{\max}}}$, we have $ D_{\min}(\e) \ge a_{\min} \e^{a_{\max}-1} > y$. Hence, for $y \in [D_{\min}(1), \infty)$,
\[
D_{\min}^{-1}(y) 
\ge \left( \frac{a_{\min}}{y} \right)^{\frac{1}{1-a_{\max}}}.
\]

Then,
\begin{align*}
u_{\inf}
& = 
D_{\min}^{-1}
\lrb{
\frac{2}{\mu_{\inf}}
D_{\max} \left( \frac{1}{K} \right)
}
\\
& \ge
\left(
\frac{a_{\min}}{
\frac{2}{\mu_{\inf}}
D_{\max} \left( \frac{1}{K} \right)
}
\right)^{\frac{1}{1-a_{\max}}} 
\\ & \ge 
\left(
\frac{a_{\min}}{
\frac{2}{\mu_{\inf}}
a_{\max} \left( \frac{1}{K} \right)^{a_{\min}-1}
}
\right)^{\frac{1}{1-a_{\max}}} 
\\ 
& =
\left(
\frac{a_{\min} \mu_{\inf}}{
2 a_{\max}
}
\right)^{\frac{1}{1-a_{\max}}}
K^{\frac{a_{\min}-1}{1-a_{\max}}}.
\end{align*}

Similarly,
\[
D_{\min}^{-1}
\lrb{
\frac{2}{\mu_{\inf}}
D_{\max} \left( \frac{1}{K^2} \right)
}
\ge 
\left(
\frac{a_{\min} \mu_{\inf}}{
2 a_{\max}
}
\right)^{\frac{1}{1-a_{\max}}}
K^{\frac{2(a_{\min}-1)}{1-a_{\max}}}.
\]
Also, for $\e \in (0,1]$,
\[
\min_{k \in [K]} 
F_k(\e) 
\ge 
\e^{a_{\max}}. 
\]
Finally, for $k \in [K]$, for $\e \in (0,1]$ and $x \in (\e,1]$, $\frac{\partial^2}{\partial x^2} x^{a_k} = a_k (a_k-1) x^{a_k -2} \le a_k (a_k-1) $ and thus $F_k$ is $a_k (1-a_k)$-strongly concave on $(\e,1]$. Hence, we can choose $\alpha$ such that, for $\e \in (0,1]$,
\[
\alpha(\e) 
\ge 
\min_{k \in [K]}
a_k (1-a_k).
\]

Now, let us consider the regret upper bound in \eqref{eq:final:regret:fast} and use the individual bounds above:
\begin{align*}
     R_T
\le &
    \frac{32 M(u_{\inf})^2 \log(\frac{2}{\delta}) K}{\mu_{\inf}\alpha(u_{\inf}) \left( \min_{k \in [K]} F_k(u_{\inf}) \right)} 
   \log(T)
   +
   5 K^2
   +
  \frac{4K^2 \left(100 (\log T)^2 + 1 \right)}{ \min_{k \in [K]} F_k
\left( 
D_{\min}^{-1}
\left( 
\frac{2}{\mu_{\inf}}
D_{\max} \left( \frac{1}{K^2} \right)
\right)
\right) } 
\\ 
\le &
 \frac{32 u_{\inf}^{2(a_{\min}-1)-a_{\max}} K}{\mu_{\inf} \min_{k \in [K]} a_k(1-a_k ) } 
\log(2 K^2 T^2)  \log(T)
+ 5 K^2
 + 
  \frac{4K^2 \left(100 (\log T)^2 + 1 \right)}{ 
  \left(
\left(
\frac{a_{\min} \mu_{\inf}}{
2 a_{\max}
}
\right)^{\frac{1}{1-a_{\max}}}
K^{\frac{2(a_{\min}-1)}{1-a_{\max}}} 
\right)^{a_{\max}}
}
\\ 
\le &
\frac{32  
\left(
\frac{a_{\min} \mu_{\inf}}{
2 a_{\max}
}
\right)^{\frac{2(a_{\min}-1)-a_{\max}}{1-a_{\max}}}
K^{1+\frac{(2(a_{\min}-1)-a_{\max})(a_{\min}-1)}{1-a_{\max}}}
}{\mu_{\inf} \min_{k \in [K]} a_k(1-a_k ) } 
\log(2 K^2 T^2)  \log(T)
+ 5 K^2
 \\ 
 & + 
4 \left(100 (\log T)^2 + 1 \right)
  \left(
\frac{a_{\min} \mu_{\inf}}{
2 a_{\max}
}
\right)^{\frac{a_{\max}}{a_{\max}-1}}
K^{2+\frac{2 a_{\max} (1-a_{\min})}{1-a_{\max}}}
\\ 
= &
\frac{32  
K^{1+\frac{(2(1-a_{\min})+a_{\max})(1-a_{\min})}{1-a_{\max}}}
}{
\left(
\frac{a_{\min} \mu_{\inf}}{
2 a_{\max}
}
\right)^{\frac{2(1-a_{\min})+a_{\max}}{1-a_{\max}}}
\mu_{\inf} \min_{k \in [K]} a_k(1-a_k ) } 
\log(2 K^2 T^2)  \log(T)
+ 5 K^2
 \\ 
 & + 
 \frac{
4 \left(100 (\log T)^2 + 1 \right)
K^{2+\frac{2 a_{\max} (1-a_{\min})}{1-a_{\max}}}
}{
  \left(
\frac{a_{\min} \mu_{\inf}}{
2 a_{\max}
}
\right)^{\frac{a_{\max}}{1-a_{\max}}}
}.
 \end{align*}
This concludes the proof.
\end{proof}

The next lemma makes explicit the upper bound of \Cref{prop:explicit:constant} for the example of \Cref{t:LB:linear:bandit}. 

\begin{lemma} \label{lemma:explicit:Ktwo}
    Consider the setting of \Cref{prop:explicit:constant} and assume further that $K=2$, $a_1=a_2 = \frac{1}{2}$, $\mu_1 \ge \frac{1}{4}$ and $\mu_2 \ge \frac{1}{4}$. Then
    \[
    R_T \le
     1048576 \sqrt{2} 
\log(8 T^2)  \log(T)
+ 20
  + 256 \left(100 (\log T)^2 + 1 \right).
    \]
\end{lemma}

\begin{proof}
We have $a_{\min} = a_{\max} = \frac{1}{2}$ and $\mu_{\inf} \ge \frac{1}{4}$. This yields, from \Cref{prop:explicit:constant}, 
\begin{align*}
    R_T & \le
    \frac{
    32 
    \cdot
2^{1+\frac{(2(1-\frac{1}{2})+\frac{1}{2})(1-\frac{1}{2})}{1-\frac{1}{2}}}
}{
\left(
\frac{\frac{1}{2} \frac{1}{4}}{
2 \frac{1}{2}
}
\right)^{\frac{2(1-\frac{1}{2})+\frac{1}{2}}{1-\frac{1}{2}}}
\frac{1}{4} \frac{1}{4} } 
\log(8 T^2)  \log(T)
+ 20
  + 
 \frac{
4 \left(100 (\log T)^2 + 1 \right)
2^{2+\frac{2 \frac{1}{2} (1-\frac{1}{2})}{1-\frac{1}{2}}}
}{
  \left(
\frac{\frac{1}{2} \frac{1}{4}}{
2 \frac{1}{2}
}
\right)^{\frac{\frac{1}{2}}{1-\frac{1}{2}}}
}
\\ 
= &
 \frac{
    32 
    \cdot
2^{1+\frac{3}{2}}
}{
\left( \frac{1}{8} \right)^3 \frac{1}{16} } 
\log(8 T^2)  \log(T)
+ 20
  + 
 \frac{
4 \left(100 (\log T)^2 + 1 \right)
2^3
}{
\frac{1}{8}
}
\\ = &
16 \cdot 512 \cdot 32 \cdot 4 \sqrt{2} 
\log(8 T^2)  \log(T)
+ 20
  + 256 \left(100 (\log T)^2 + 1 \right). 
  \\ = &
 1048576 \sqrt{2} 
\log(8 T^2)  \log(T)
+ 20
  + 256 \left(100 (\log T)^2 + 1 \right).
\end{align*}
This concludes the proof.
\end{proof}

\section{Worst-case lower bound: proof of Theorem \ref{t:lower-bound-worst-case}}
\label{s:appe-lower-bound-worst-case}

Recall the notation in \Cref{s:setting}.
Assume, without loss of generality, there are $2K$ tasks and that $T \ge 16$. For each $k \in [2K]$, define $F_k \coloneqq \I_{[\nicefrac{1}{K},1]}$.
Let $\mathcal{B}(q)$ be the Bernoulli distribution with parameter $q$ for  $q \in [0,1]$.
For any $j_1, \dots, j_K \in [2]$, let $\Pb_{j_1 , \ldots , j_K}$ be a probability measure such that, for each $t \in [T]$ and each $i \in [K]$, if $j_i = 1$, the reward random variables in $(G_{t,2i-1})_{t\in[T]}$ have a $\mathcal{B}\left(\frac{1}{2}+ \frac{1}{\sqrt{T}
}\right)$ distribution  and those in $(G_{t,2i})_{t\in[T]}$ have a $\mathcal{B}\left(\frac{1}{2} \right)$ distribution, while, if $j_i = 2$, the reward of the random variables in $(G_{t,2i-1})_{t\in[T]}$ have a $\mathcal{B}\left(\frac{1}{2} \right)$ distribution and those in $(G_{t,2i})_{t\in[T]}$ have a $\mathcal{B}\left(\frac{1}{2}+ \frac{1}{\sqrt{T}
}\right)$ distribution.
Note that, for any $j_1, \dots, j_K \in [2]$, we have that, for any $i \in [K]$, if $j_i = 1$, we have that $\mu_{2i-1} = \nicefrac{1}{2}+ \sqrt{\nicefrac1T}$ and $\mu_{2i} = \nicefrac12 $, while, if $j_i = 2$, it holds that $\mu_{2i-1} = \nicefrac12 $ and $\mu_{2i} = \nicefrac12+ \sqrt{\nicefrac1T}$.
Throughout the proof, for any $j_1 , \ldots , j_K$ under consideration, for the sake of simplicity, we may write $\Pb = \Pb_{j_1, \dots, j_K}$ and we may denote the corresponding expectation with $\E$.
We also refer to the regret up to the time horizon $T$ with $R_{j_1,\dots,j_K,T}$  when the underlying probability measure is $\Pb_{j_1,\dots,j_K}$.

%
%
%

To prove the theorem, we will show that there is $j_1 , \ldots,j_K \in [2]$ such that 
\begin{equation}
\label{e:regret-stuff:two}  
R_{j_1,\dots,j_K,T} 
=
\Omega \brb{ K \sqrt{T} }.
\end{equation}

To prove this, write first $I_{t,k} = F_k(X_{t,k}) \in \{0,1\}$ for $t \in [T]$ and $k \in [2K]$. It is clear that for any algorithm (potentially randomized), we can create another algorithm that satisfies $\sum_{k=1}^{2K} I_{t,k} = K$ for all $t \in [T]$ and that has  larger or equal cumulated reward almost surely. This is achieved by replacing all $X_{t,k} \ge 1/K$ by $1/K$, then by replacing all $X_{t,k} <1/K$ by $0$ and finally by letting $N_t = K \sum_{k=1}^{2K} X_{t,k} \le K$, selecting $K-N_t$ indices in $\{k \in [2K], X_{t,k} = 0  \}$ and setting $X_{t,k}$ to $1/K$ for them. Without loss of generality, we thus consider an algorithm such that $\sum_{k=1}^{2K} I_{t,k} = K$ almost surely, to establish \eqref{e:regret-stuff:two}.

Then, for each full-feedback algorithm, we transform it into another full-feedback algorithm as follows.
For each $t$, if there is $i \in [K]$ such that $I_{t,2i-1} = I_{t,2i} =1$, then
we let $\{u_1 <\dots<u_N \}$ be the set of $i \in [K]$ for which $I_{t,2i-1} = I_{t,2i} =1$ and we let $\{v_1<\ldots<v_N \}$ be the set of $j \in [K]$ for which $I_{t,2j-1} = I_{t,2j} =0$. 
Note that these two sets indeed have the same cardinality $N$ because $\sum_{k=1}^{2K} I_{t,k} = K$.
For each $\ell \in [N]$,
we select $U_\ell$ randomly and uniformly in $\{ 
2u_{\ell}-1 , 2u_{
\ell
} \}$, we select $V_\ell$ randomly and uniformly in $\{ 
2v_\ell-1 , 2v_\ell \}$, we replace $X_{t,U_\ell}$ by $0$ and we replace $X_{t,V_\ell}$ by $1/K$. We now show that for any underlying distribution $\Pb_{j_1 , \ldots , j_K}$, the regret is unchanged between the original and transformed algorithm. 
To show this, fix $j_1,\dots,j_K \in [2]$ and, for $t \in [T]$ consider the update of the expected gain $\E\bsb{  \ban{ F(X_t), \mu } }$ at time $t$. Let $\mathcal{G}_{t-1}$ denote the history up to before $U_1,V_1,\ldots,U_N,V_N$ are sampled. For $\ell \in [N]$, let $(a_\ell,b_\ell) = (2u_\ell-1,2u_\ell)$ if $j_{u_\ell}=1$ and $(a_\ell,b_\ell) = (2u_\ell,2u_\ell-1)$ if $j_{u_\ell}=2$. Let also $(a'_\ell,b'_\ell) = (2v_\ell-1,2v_\ell)$ if $j_{v_\ell}=1$ and $(a'_\ell,b'_\ell) = (2v_\ell,2v_\ell-1)$ if $j_{v_\ell}=2$. Then the update is computed as 
\begin{align*}
    &
\sum_{\ell=1}^N
\E\lsb{ I_{t,V_\ell} \mu_{V_{\ell}}
    -I_{t,U_\ell} \mu_{U_{\ell} }}
=
\sum_{\ell=1}^N
\E\bsb{ 
\E\lsb{  
I_{t,V_\ell} \mu_{V_\ell}
    -I_{t,U_\ell} \mu_{U_\ell} 
    \mid
\mathcal{G}_{t-1}}}
\\ 
& \qquad =
\sum_{\ell=1}^N
\E \bigg[  
0 \times 
\one{U_\ell = a_\ell,V_\ell=a'_{\ell}}
+ 
0 \times
\one{U_\ell = b_\ell,V_\ell=b'_{\ell}}
\\
& 
\qquad\qquad\qquad -
\frac{1}{\sqrt{T}}
\times 
\one{U_\ell = a_\ell,V_\ell=b'_{\ell}}
+
\frac{1}{\sqrt{T}}
\times 
\one{U_\ell = b_\ell,V_\ell=a'_{\ell}}
\biggr]
=0,
\end{align*}
since $a_\ell,b_\ell,a'_{\ell},b'_{\ell}$ are deterministic conditionally to $\mathcal{G}_{t-1}$.

Hence for each $j_1,\ldots,j_K \in [2]$, the expected sum of rewards under distribution $\Pb_{j_1, \dots, j_K}$,
and thus also the regret
$R_{j_1,\dots,j_K,T}$, is unchanged by the transformation of the algorithm.
Hence, it is sufficient to show~\eqref{e:regret-stuff:two} with a full-feedback algorithm satisfying $I_{t,2i-1} + I_{t,2i} = 1$ almost surely for each $t \in [T]$ and $i \in [K]$. We thus consider such an algorithm for the rest of the proof. 

We now prove the following claim: for any $j_1, \dots, j_K \in [2]$, when the underlying probability measure is $\Pb_{j_1 , \ldots , j_K}$ the regret $R_{j_1,\dots,j_K,T} $ of any full-feedback algorithm satisfies
\begin{equation}
\label{e:regret-stuff}
R_{j_1,\dots,j_K,T} 
\ge \frac{\sqrt{T}}{2}
\sum_{i=1}^K 
p_{j_1, \dots, j_K}(i)
\end{equation}
where
\[
p_{j_1, \dots, j_K}(i) \ceq
\begin{cases}
    \Pb_{j_1, \dots, j_K}\lsb{E_i^c} & \text{ if } j_i = 1 \\
    \Pb_{j_1, \dots, j_K}\lsb{E_i} & \text{ if } j_i = 2,
\end{cases}
\]
and where $E_i \ceq \bcb{ \sum_{t=1}^T I_{t,2i-1} \ge \nicefrac{T}{2} }$.

Next, fix $j_1 , \ldots , j_K$. Observe that
\begin{align*}
   R_{j_1,\dots,j_K,T} 
&
=
    \max_{x\in\Delta_K} T \ban{ F(x), \mu } 
    - \E\lsb{ \sum_{t=1}^T \ban{ F(X_t), \mu } }
\\
&
=
    T \lrb{ \frac12 + \sqrt{\frac1T} } K 
    - \sum_{t=1}^T \sum_{i=1}^K \E\lsb{  I_{t,2i-1} \lrb{ \frac12 + \sqrt{\frac1T} \I\{ j_i = 1 \} }
    +
    I_{t,2i} \lrb{ \frac12 + \sqrt{\frac1T} \I\{ j_i = 2 \} }
    }
\\
&
\ge
    K\sqrt{T} 
    - \sum_{t=1}^T \sum_{i=1}^K \E\lsb{  I_{t,2i-1} \sqrt{\frac1T} \I\{ j_i = 1 \}
    +
    I_{t,2i} \sqrt{\frac1T} \I\{ j_i = 2 \}
    }\;.
\end{align*}
Then, since
\begin{align*}
&
\sum_{t=1}^T \sum_{i=1}^K \E\lsb{  I_{t,2i-1} \sqrt{\frac1T} \I\{ j_i = 1 \}
+
I_{t,2i} \sqrt{\frac1T} \I\{ j_i = 2 \}
}
\\
&
\qquad
=
\sum_{t=1}^T
\sum_{i=1}^K
\underbrace{ \E\lsb{  I_{t,2i-1} \sqrt{\frac1T}
+
I_{t,2i} \sqrt{\frac1T}
}}_{ =1 / \sqrt{T} }
-
\sum_{t=1}^T \sum_{i=1}^K \E\lsb{  I_{t,2i-1} \sqrt{\frac1T} \I\{ j_i = 2 \}
+
I_{t,2i} \sqrt{\frac1T} \I\{ j_i = 1 \}
}
\\
&
\qquad
=
K\sqrt{T} 
-
\sum_{t=1}^T \sum_{i=1}^K \E\lsb{  I_{t,2i-1} \sqrt{\frac1T} \I\{ j_i = 2 \}
+
I_{t,2i} \sqrt{\frac1T} \I\{ j_i = 1 \}
}\;,
\end{align*}
we have that
\begin{align}
\label{eq:regret:LB:pthetai}R_{j_1,\dots,j_K,T} 
&
\ge
    \sum_{t=1}^T \sum_{i=1}^K \E\lsb{  I_{t,2i-1} \sqrt{\frac1T} \I\{ j_i = 2 \}
    +
    I_{t,2i} \sqrt{\frac1T} \I\{ j_i = 1 \}
    } \notag
\\
&
=
    \sqrt{\frac1T} \sum_{i=1}^K \I\{ j_i = 2 \} \E\lsb{ \sum_{t=1}^T I_{t,2i-1}
    }
    +
    \sqrt{\frac1T} \sum_{i=1}^K \I\{ j_i = 1 \} \E\lsb{ 
    \sum_{t=1}^T I_{t,2i}
    } \notag
\\
&
\ge
    \sqrt{\frac1T} \sum_{i=1}^K \I\{ j_i = 2 \} \frac{T}{2} \Pb\lsb{ \sum_{t=1}^T I_{t,2i-1}
    \ge \frac{T}{2} }
    +
    \sqrt{\frac1T} \sum_{i=1}^K \I\{ j_i = 1 \} \frac{T}{2}
    \Pb\lsb{ 
    \sum_{t=1}^T I_{t,2i}
    \ge \frac{T}{2}
    } \notag
\\
&
\ge
    \frac{\sqrt{T}}{2}
    \sum_{i=1}^K p_{j_1,\dots,j_K}(i).
\end{align}
To show the last inequality, note that when $j_i =2$, $\Pb\lsb{ \sum_{t=1}^T I_{t,2i-1}
    \ge \frac{T}{2} } = p_{j_1,\dots,j_K}(i)$. When $j_i = 1$, $E_i^c$ and the fact that $I_{t,2i-1} + I_{t,2i} = 1$ for each $t \in [T]$ imply that $\sum_{t=1}^T I_{t,2i} \geq \nicefrac{T}{2}$, so $\Pb\lsb{ 
    \sum_{t=1}^T I_{t,2i}
    \ge \frac{T}{2}
    } \geq \Pb\lsb{ E_i^c }$.

Given that $j_1,\dots,j_K$ were fixed in an arbitrarily manner, \eqref{e:regret-stuff} follows.

Before proceeding to lower bound the right-hand side \eqref{eq:regret:LB:pthetai}, we need the (well-known) auxiliary result that, for $p,q \in [\nicefrac{1}{4},\nicefrac{3}{4}]$, 
\begin{equation}
\label{eq:KL:bern}
\KL \left( 
\mathcal{B}\left(p\right)
,
\mathcal{B}\left(q\right)
\right)
\le 
8 (p-q)^2.
\end{equation}
To see this, consider the function $[\nicefrac{1}{4},\nicefrac{3}{4}] \ni q \mapsto \KL \left( 
\mathcal{B}\left(p\right)
,
\mathcal{B}\left(q \right)
\right).
$
Then this function is convex and has a minimizer at $p$. Its second derivative with respect to $q$ at $q$ is
\[
\left( 
-p
\log 
\left( 
q
\right)
-(1-p)
\log 
\left( 
1-q
\right)
\right)'' 
=
\left( 
-p
\frac{1}{
q
}
+(1-p)
\frac{1}{
1-q
}
\right)' 
=
p
\frac{1}{
q^2
}
+(1-p)
\frac{1}{
\left(1-q \right)^2
}
\in 
[0 , 16 ].
\]
Hence \eqref{eq:KL:bern} holds thanks to a Taylor expansion.

Consider $j_1,\dots,j_K,j'_{1},\dots,j'_K \in [2]$ such that there is $k \in [K]$ with $j_i = j'_i$ for $i \neq k$ and $j_k \neq j'_k$. Then 
\begin{align*}
& \KL (
\Pb_{j_1 , \ldots , j_K}
,
\Pb_{j'_1 , \ldots , j'_K}
)
=
T
\Bigg(
\I\{j_k=1\}
\KL \left( 
\mathcal{B}\left(\frac{1}{2} + \frac{1}{\sqrt{T}
}\right)
,
\mathcal{B}\left(\frac{1}{2}\right)
\right)
\\
& \qquad +
\I\{j_k=2\}
\KL \left( 
\mathcal{B}\left(\frac{1}{2}\right)
,
\mathcal{B}\left(\frac{1}{2} + \frac{1}{\sqrt{T}
}\right)
\right)
\Bigg)
\le 
T
\frac{8}{T}
\le
8. 
\end{align*}
Hence applying Huber-Bretagnole inequality \cite[Theorem 14.2]{lattimore2020bandit}, we obtain 
\begin{align} \label{eq:hubert:bretagnolle}
&
p_{j_1,\ldots,j_K}(k)
    +    p_{j'_1,\ldots,j'_K}(k)
=
\I\{j_k=1\}
\lrb{
\Pb_{j_1,\ldots,j_K} [E_k^c] 
    +
\Pb_{j'_1,\ldots,j'_K} [E_k] 
}
\notag
\\
& \quad 
+
\I\{j_k=2\}
\lrb{
\Pb_{j_1,\ldots,j_K} [E_k] 
    +
\Pb_{j'_1,\ldots,j'_K} [E_k^c] 
}
=
\Omega(1).
\end{align}

Then, using \eqref{eq:regret:LB:pthetai}, and proceeding as in \citet[(24.2)]{lattimore2020bandit}
\begin{align*} 
&
	\frac{1}{2^{K}}
	\sum_{j_1=1}^2 
	\dots
	\sum_{j_K=1}^2
	R_{j_1,\dots,j_K,T} 
		\ge
		    \frac{\sqrt{T}}{2}
	\frac{1}{2^{K}}
\sum_{j_1=1}^2 
\dots
\sum_{j_K=1}^2
\sum_{i=1}^K
p_{j_1,\dots,j_K}(i)	
=
    \frac{\sqrt{T}}{2}
	\frac{1}{2^{K}}
	\sum_{i=1}^K
\sum_{j_1=1}^2 
\dots
\sum_{j_K=1}^2
p_{j_1,\dots,j_K}(i)	
\\ 
&
\qquad
= 
    \frac{\sqrt{T}}{2}
	\frac{1}{2^{K}}
\sum_{i=1}^K
\sum_{j_1=1}^2 
\dots
\sum_{j_{i-1}=1}^2
\sum_{j_{i+1}=1}^2
\dots
\sum_{j_{K}=1}^2
\left(
p_{j_1,\dots,j_{i-1},1,j_{i+1},\ldots,j_K}(i)	
+
p_{j_1,\dots,j_{i-1},2,j_{i+1},\ldots,j_K}(i)	
\right).
	\end{align*}
Hence using \eqref{eq:hubert:bretagnolle},
\[
	\frac{1}{2^{K}}
\sum_{j_1=1}^2 
\dots
\sum_{j_K=1}^2
R_{j_1,\dots,j_K,T} 
=
\Omega\lrb{
    \frac{\sqrt{T}}{2}
\sum_{i=1}^K
\sum_{j_1=1}^2 
\dots
\sum_{j_{i-1}=1}^2
\sum_{j_{i+1}=1}^2
\dots
\sum_{j_{K}=1}^2
    \frac{1}{2^{K}}
}
=
\Omega \brb{ K \sqrt{T} }.
 \]
Hence there is $j_1 , \ldots,j_K$ such that \eqref{e:regret-stuff:two} holds. Hence,
\[
    R^\star_T
= 
    \Omega \brb{ K \sqrt{T} }. 
\]

\section{Worst-case upper bound: proof of Theorem \ref{t:upper-bound-worst-case}}
\label{s:appe-upper-bound-distribution-free}

Recall that, if the learner plays $X_1,X_2,\dots \in \Delta_K$, then, for any $t \in \N$, $k \in [K]$, and $\delta\in(0,1)$, we defined
\begin{equation}
\label{eq:vector:UCB}
    \UCB_{t,k}^\delta
\coloneqq
    \frac{ \sum_{s=1}^{t} B_{s,k}(X_{s,k})G_{s,k}}{\sum_{s=1}^{t} B_{s,k}(X_{s,k})}
    +
    \sqrt{\frac{\log\lrb{\frac{2}{\delta}}}{1+\sum_{s=1}^{t} B_{s,k}(X_{s,k})}}
\end{equation}
with the understanding that $0/0\coloneqq 0$.
Also, for conciseness, we use the same notation 
$
    N_{t,k}
=
    \sum_{s=1}^t B_{s,k}(X_{s,k})
$
and $\tilde{G}_{\tau,k}$
as in
\eqref{eq:intro:N:Gtilde} in \Cref{s:upper-bound}.
We then have, for each $t\in \N$, $k\in[K]$, and $\delta \in (0,1)$,
\begin{equation} 
    \UCB^\delta_{t,k}
=
    \frac{ \sum_{\tau=1}^{N_{t,k}} \tilde{G}_{\tau,k}}{N_{t,k}}
    +
    \sqrt{\frac{\log\lrb{\frac{2}{\delta}}}{1+N_{t,k}}}.
\end{equation}
For $\delta \in (0,1/2]$,
recall the definition of the ``good event'' $\cE_{T}^\delta$ in \eqref{eq:good:event:ET-main} in \Cref{s:upper-bound}.
From \Cref{cor:Hoeffding:gains} and a union bound, we have
\[
\Pb \lsb{\cE_{T}^\delta}
\ge 1- K T \delta.
\]
With the convention that $\sum_{t=1}^0 = 0$, and recalling that $\UCB^\delta_{0,k} = \sqrt{\log \frac{2}{\delta}}$, note that under the good event, for all $t \ge 0$, for all $k \in [K]$,
\begin{equation} \label{eq:UCB:good:bis}
\left| 
\frac{\sum_{\tau=1}^{N_{t,k}}
	\tilde{G}_{\tau,k}
}{N_{t,k}}
-
\mu_k
\right| 
\le 
\sqrt{
	\frac{\log\lrb{\frac{2}{\delta}}}{1+N_{t,k}}}.
\end{equation}
When $N_{t,k} = 0$, this is indeed equivalent to $| \mu_k| \le \sqrt{\log(\frac{2}{\delta})}$ which is true because $\log\lrb{\frac{2}{\delta}} \ge \log(4) >1$. When $N_{t,k} \ge 1$, \eqref{eq:UCB:good:bis} holds because $\sqrt{
	\frac{\log\lrb{\frac{2}{\delta}}}{1+N_{t,k}}} \ge \sqrt{
	\frac{\log\lrb{\frac{2}{\delta}}}{2N_{t,k}}}$ and because the good event holds.

We then recall that, for $t \ge 1$, the allocation $X_t \in \Delta_K$ played by the algorithm at time $t$ is such that 
\begin{equation} \label{eq:UCB}
%
\sup_{x \in \Delta_K} \lan{\UCB^\delta_{t-1} , F(x)} = \lan{\UCB^\delta_{t-1} , F(X_t)}.
\end{equation}
Let $x^\star \in \Delta_K$. 
We have
\begin{align*}
    \sum_{t=1}^T
    \E \Bsb{
    \ban{\mu,F(x^\star)}- \ban{\mu,F(X_t)}
    }
&\le 
    K T 
    \lrb{1 -  \Pb \bsb{ \cE_{T}^\delta }}
+
    \E \left[ 
    \I_{ \cE_{T}^\delta }
    \sum_{t=1}^T
    \Brb{\ban{\mu,F(x^\star)}- \ban{\mu,F(X_t)}}\right] 
\\
&\le
    K^2 T^2 \delta 
+
    \E \left[ 
    \I_{ \cE_{T}^\delta }
    \sum_{t=1}^T
    \Brb{\ban{\mu,F(x^\star)}- \ban{\mu,F(X_t)}}\right].
\end{align*}
    
Next,
\begin{multline*}
\E \left[ 
	\I_{ \cE_{T}^\delta }
	\sum_{t=1}^T
	\Brb{\ban{
	\mu, F(x^{\star})} 
	- 
	\lan{\mu, 
	F(X_{t})}}
	\right] 
	\\
	=
	\E \left[ 
	\I_{ \cE_{T}^\delta }
	\sum_{t=1}^T
	\ban{\UCB^\delta_{t-1}
	,F(x^{\star}) 
	- 
	F(X_{t})}
	\right] 
	+
\E \left[ 
    \I_{ \cE_{T}^\delta }
    \sum_{t=1}^T
    \ban{\mu - \UCB^\delta_{t-1}
    ,F(x^{\star}) 
    - 
    F(X_{t})}
\right].	
\end{multline*}
Above, the first expectation is upper bounded by $0$ by definition of the algorithm.
Furthermore, if for any $y \in \bbR^K$ we define $y_+ \coloneqq \brb{ (y_1)_+,\dots,(y_K)_+}$ and for all $z\in\R$, $z_+ \ceq \max\{z, 0 \}$, recalling that for each $k \in [K]$ and $t \in [T]$, when $\cE_T^\delta$ holds, we have $\UCB^\delta_{t-1,k} \ge\mu_k$, we get 
\begin{align*}
	\E \left[ 
	\I_{ \cE_{T}^\delta }
	\sum_{t=1}^T
	\Brb{\ban{
	\mu, F(x^{\star})} 
	- 
	\ban{\mu, 
	F(X_{t})}}
	\right] 
    &\le 
        \E \left[ 
	\I_{ \cE_{T}^\delta }
	\sum_{t=1}^T
	\ban{\mu - \UCB^\delta_{t-1} 
	,F(x^{\star}) 
	- 
	F(X_{t})}
	\right]	   
    \\
    & = 
        \E \left[ 
	\I_{ \cE_{T}^\delta }
	\sum_{t=1}^T
	\ban{\UCB^\delta_{t-1} -\mu 
	,  
	F(X_{t}) - F(x^{\star})}
	\right]
    \\ 	 
    & \le 
        \E \left[ 
	\I_{ \cE_{T}^\delta }
	\sum_{t=1}^T
	\Ban{\UCB^\delta_{t-1} -\mu 
	,  
	\brb{F(X_{t}) - F(x^{\star})}_+}
	\right].
\end{align*}
Using then \eqref{eq:UCB:good:bis}, we obtain 
\begin{align*}
	&
	\E \lsb{ 
	\I_{ \cE_{T}^\delta }
	\sum_{t=1}^T
	\Ban{\UCB^\delta_{t-1} -\mu 
	,  
	\brb{F(X_{t}) - F(x^{\star})}_+}
	}
	\\ 
	& \qquad \le 
	\E \lsb{ 
\I_{ \cE_{T}^\delta}
\sum_{t=1}^T
\sum_{k=1}^K 
2
\sqrt{
	\frac{\log\lrb{\frac{2}{\delta}}}{1+\sum_{s=1}^{t-1}
B_{s,k}(X_{s,k})
}
}
	\lrb{
F_k(X_{t,k}) 
- 
F_k(x^{\star}_k)
}_+
}
\\ 
&  \qquad  \le
2
\sqrt{\log\lrb{\frac{2}{\delta}}}
	\E \lsb{ 
\sum_{t=1}^T
\sum_{k=1}^K 
	\frac{
			\brb{
		F_k(X_{t,k}) 
		- 
		F_k(x^{\star}_k)
		}_+
	}{
		\sqrt{1+\sum_{s=1}^{t-1}
		B_{s,k}(X_{s,k})}
	}
}
\\ 
& \qquad \le 
2
\sqrt{\log\lrb{\frac{2}{\delta}}}
\E \lsb{ 
\sum_{t=1}^T
\sum_{k=1}^K 
\frac{
	F_k(X_{t,k}) 
}{
	\sqrt{1+\sum_{s=1}^{t-1}
		B_{s,k}(X_{s,k})}
}
}.
\end{align*}
We have, recalling the filtration $\cF_t$ from \eqref{eq:Gt}, and recalling that given $\cF_{t-1}$,
$B_{t,k}(X_{t,k})$ has expectation
 $F_k(X_{t,k})$,
\begin{align*}
\E \left[ 
\frac{
	F_k(X_{t,k}) 
}{
	\sqrt{1+\sum_{s=1}^{t-1}
		B_{s,k}(X_{s,k})}
}
\right]
&
=
\E \left[ 
\frac{
1
}{
	\sqrt{1+\sum_{s=1}^{t-1}
		B_{s,k}(X_{s,k})}
}
\E \bsb{  
B_{t,k}(X_{t,k})
\mid 
\cF_{t-1} 
} 
\right]
\\
&
=
\E \left[ 
\E \left[  
\left.
\frac{
	B_{t,k}(X_{t,k})
}{
	\sqrt{1+\sum_{s=1}^{t-1}
		B_{s,k}(X_{s,k})}
}
\right| 
\cF_{t-1} 
\right] 
\right] 
\\
&=
\E \left[ 
\frac{
	B_{t,k}(X_{t,k})
}{
	\sqrt{1+\sum_{s=1}^{t-1}
		B_{s,k}(X_{s,k})}
}
\right]. 
\end{align*}
Hence, we get 
\[
\E \lsb{ 
	\I_{ \cE_{T}^\delta }
	\sum_{t=1}^T
	\Ban{\UCB^\delta_{t-1} -\mu 
	,  
	\brb{F(X_{t}) - F(x^{\star})}_+}
	}
\le 
2
\sqrt{\log \lrb{\frac{2}{\delta}}}
\E \left[ 
\sum_{t=1}^T
\sum_{k=1}^K 
\frac{
	B_{t,k}(X_{t,k})
}{
	\sqrt{1+\sum_{s=1}^{t-1}
		B_{s,k}(X_{s,k})}
}
\right].
\]
Recall the definition of $N_{T,k}$ above, and note that since $B_{s,k}(X_{s,k}) \in \{0,1\}$ for $k \in [K]$, $s \in [T]$, we have
\[
\sum_{t=1}^T
\frac{
	B_{t,k}(X_{k,t})
}{
	\sqrt{1+\sum_{s=1}^{t-1}
		B_{s,k}(X_{s,k})}
}
= 
\sum_{s=1}^{N_{T,k}}
\frac{1}{\sqrt{1 + (s-1)}}
\le 
2
\sqrt{1 + N_{T,k}}
=
2
\sqrt{1+\sum_{t=1}^{T}
		B_{t,k}(X_{t,k})}.
\]

Hence we obtain 
\begin{align*}
\E \left[ 
	\I_{ \cE_{T}^\delta }
	\sum_{t=1}^T
	\Brb{\ban{
	\mu, F(x^{\star})} 
	- 
	\ban{\mu, 
	F(X_{t})}}
	\right] 
&
\le 
4
\sqrt{\log\lrb{\frac{2}{\delta}}}
\E \left[ 
\sum_{k=1}^K 
	\sqrt{1+\sum_{t=1}^{T}
		B_{t,k}(X_{t,k})}
\right] 
\\ 
\text{(Jensen's inequality)}
~ ~
& \le 
4
\sqrt{K \log\lrb{\frac{2}{\delta}}}
\E \left[ 
\sqrt{
1+\sum_{k=1}^K \sum_{t=1}^{T}
	B_{t,k}(X_{t,k})
}
\right] 
\\ 
\text{(Jensen's inequality)}
~ ~
& \le 
4
\sqrt{K \log\lrb{\frac{2}{\delta}}}
\sqrt{
1+\sum_{k=1}^K \sum_{t=1}^{T}
\E \bsb{ B_{t,k}(X_{t,k})
}
}
\\ 
 & =
4
\sqrt{K \log \lrb{\frac{2}{\delta}}}
\sqrt{
1+\sum_{k=1}^K \sum_{t=1}^{T}
\E \bsb{ F_k(X_{t,k})
}
}
\;.
\end{align*}
Hence
\begin{align*}    
    T \ban{\mu,F(x^\star)} - \sum_{t=1}^T
    \E \Bsb{
    \ban{\mu,F(X_t)}
    }
&=
    \sum_{t=1}^T
    \E \Bsb{
    \ban{\mu,F(x^\star)}- \ban{\mu,F(X_t)}
    }
\\
&\le 
    K^2 T^2 \delta
    +
     4
    \sqrt{K \log \lrb{\frac{2}{\delta}}}
    \sqrt{
    1+\sum_{k=1}^K \sum_{t=1}^{T}
    \E \bsb{ F_k(X_{t,k})
    }
}\;.
\end{align*}

Since $x^\star$ was chosen arbitrarily, we can take the $\sup$ over $x^\star \in \Delta_K$, and the conclusion thus follows by substituting $\delta \coloneqq \frac{1}{(KT)^2}$.

\section{\texorpdfstring{$\sqrt{T}$}{sqrt(T)} bandit feedback counterexample} \label{s:bandit:counter}

This section is devoted to 
provide the full setting of \Cref{t:LB:linear:bandit} and to prove it.

\paragraph{Setting.}
Set $\gamma \coloneqq \frac{1}{6}$ and fix a time horizon $T \ge 4$.
Set $\e \coloneqq \gamma \cdot T^{-1/4}$.
Notice that with our choice of $\gamma$ and $T$, we have that $\e\le \frac{1}{8}$, a fact that we will use several times in what follows.
Define the two instances
\[
    \mu^{+} \coloneqq \lrb{\tfrac12+\e,\ \tfrac12-\e},
\qquad
    \mu^{-} \coloneqq \lrb{\tfrac12-\e,\ \tfrac12+\e}.
\]
For each $\sigma\in\lcb{+,-}$, consider two i.i.d.\ sequences $\brb{Y_t^{\sigma}(1)}_{t \in \N}$ and $\lrb{Y_t^{\sigma}(2)}_{t\in\N}$, with
\[
    Y_t^{\sigma}(1) \sim \mathrm{Bernoulli}\lrb{\tfrac12+\sigma\e}\;,
\qquad
    Y_t^{\sigma}(2) \sim \mathrm{Bernoulli}\lrb{\tfrac12-\sigma\e}\;,
\]
which we further build to be independent of each other.

Independently of $\brb{Y_t^{\sigma}(1),Y_t^{\sigma}(2)}_{t \in \N}$, let $\lrb{B_{t,1}(p),B_{t,2}(q)}_{p,q\in[0,1],t\in\N}$ be another independent family such that for all $p,q\in[0,1]$ and for all $t \in \N$,
$
    B_{t,1}(p)\sim\mathrm{Bernoulli}(\sqrt{p})
$
and
$B_{t,2}(q)\sim\mathrm{Bernoulli}(\sqrt{q}).
$
Since we want to allow the learner to use randomized algorithms, we also assume that they can have sequential access to an i.i.d.\ sequence of $[0,1]$-uniforms $(U_t)_{t \in \N}$ used as random seeds, generated independently of the previous families of random variables.
Recall that our budget-to-success curves are defined by $F_1(b) \coloneqq \sqrt{b} \eqqcolon F_2(b)$, for each $b \in [0,1]$, and hence, each budget allocation $x \in \Delta_2$ satisfies
\[
    \Brb{F_1\brb{x_1}}^2 + \Brb{F_2\brb{x_2}}^2 = x_1 + x_2 = 1\;,
\]
so there exists $\vartheta \in [0, \pi/2]$ such that $F_1\brb{x_1} = \cos\vartheta$ and $F_2\brb{x_2} = \sin\vartheta$.
Vice versa, for each $\vartheta \in [0, \pi/2]$, if we set $x_1 = (\cos \vartheta)^2$ and $x_2 = (\sin \vartheta)^2$ we have that $x \in \Delta_2$ and $F_1\brb{x_1} = \cos \vartheta$ and $F_2\brb{x_2} = \sin \vartheta$.
Hence, we see that there is a 1-to-1 correspondence between choices of $\vartheta \in [0,\pi/2]$ and $x \in \Delta_2$ with a correspondence given by $\vartheta \mapsto \brb{(\cos \vartheta)^2,(\sin \vartheta)^2}$.

It follows that, when the learner interacts with the environment generated by $\sigma\in\lcb{+,-}$, the learner choosing budget allocations is equivalent to choosing angles $\vartheta_t^{\sigma}\in\bsb{0,\tfrac{\pi}{2}}$ and allocate
\[
    X_t^{\sigma}
\coloneqq
   \brb{ (\cos \vartheta_t^{\sigma})^2,(\sin \vartheta_t^{\sigma })^2}\;.
\]
We assume that the learner has access only to bandit feedback, i.e., at the end of each time $t$, they observe only the reward associated to the action they played:
\[
    O_t^{\sigma}
    \coloneqq
    Y_t^{\sigma}(1) \cdot B_{t,1}\brb{\cos^2 \lrb{\vartheta_t^{\sigma}}}
    +
    Y_t^{\sigma}(2) \cdot B_{t,2}\brb{\sin^2 \lrb{\vartheta_t^{\sigma}}}
     \in \{0,1,2\}.
\]
For each natural number $t \ge 2$, define the history space $\cH_t \coloneqq \lrb{[0,1]\times\{0,1,2\}}^{t-1} \times [0,1]$ and denote by $H_t^\sigma \coloneqq (U_1,O_1^\sigma,\dots,U_{t-1},O_{t-1}^\sigma,U_t)$ the history observed by the learner just before receiving the feedback at time $t$ when the underlying scenario is $\sigma \in \lcb{+,-}$.
Define $H_1^\sigma \coloneqq U_1$.

Hence, the angles at time $t$ are generated via the relation $\vartheta_t^\sigma \coloneqq \alpha_t(H_t^\sigma)$ for some deterministic function $\alpha_t \colon \cH_t \to \lsb{0,\tfrac\pi 2}$ (the algorithm).

\paragraph{Regret.}
For $\sigma\in\lcb{+,-}$, the regret of the algorithm up to time $T$ when the underlying instance is determined by $\sigma$ is
\[
    R_T^{\sigma}
    \coloneqq
    T\cdot \max_{\vartheta\in[0,\pi/2]}\ \langle \mu^{\sigma}, \lrb{\cos\vartheta,\sin\vartheta}\rangle
    -
    \E\lsb{\sum_{t=1}^T \langle \mu^{\sigma}, (\cos\vartheta_t^{\sigma},\sin\vartheta_t^{\sigma})\rangle}\;.
\]

\paragraph{Per-round law.}
Now, under instance $\sigma \in \lcb{+,-}$ and for each $\vartheta \in \lsb{0,\tfrac \pi 2}$, define
\[
    p_{\sigma }(\vartheta) \coloneqq \lrb{\tfrac12+\sigma\e} \cdot \cos \vartheta,
    \qquad
    q_{\sigma}(\vartheta) \coloneqq \lrb{\tfrac12-\sigma\e} \cdot \sin \vartheta.
\]
Then, conditionally to $\vartheta_t^\sigma$,
$Y^{\sigma}_t(1)B_{t,1}(\cos^2\vartheta_t^\sigma)\sim \mathrm{Bernoulli}(p_{\sigma}(\vartheta_t^\sigma))$ and
$Y^{\sigma}_t(2)B_{t,2}(\sin^2 \vartheta_t^{\sigma})\sim \mathrm{Bernoulli}(q_{\sigma}(\vartheta_t^{\sigma}))$,
independently, hence $O^{\sigma}_t$ is a sum of two independent Bernoulli random variables of parameters $p_{\sigma}(\vartheta_t^{\sigma})$ and 
$q_{\sigma}(\vartheta_t^{\sigma})$. Therefore
\begin{align*}  
    \Pb \lsb{O^{\sigma}_t=2 \mid \vartheta^\sigma_t = \vartheta}
&=
    p_{\sigma}(\vartheta)q_{\sigma}(\vartheta)
    = \lrb{\tfrac{1}{4}-\e^2} \cos\vartheta \sin \vartheta \eqqcolon \eta(\vartheta)\;,
\\
    \Pb\lsb{O^{\sigma}_t=1\mid \vartheta^\sigma_t = \vartheta}
&=
    p_{\sigma}(\vartheta)+q_{\sigma}(\vartheta)-2p_{\sigma}(\vartheta )q_{\sigma}(\vartheta)\;,
\\
    \Pb\lsb{O^{\sigma}_t=0\mid \vartheta^\sigma_t = \vartheta}
&=
    1-p_{\sigma}(\vartheta)-q_{\sigma}( \vartheta)+p_{\sigma}(\vartheta)q_{\sigma}( \vartheta).
\end{align*}

\paragraph{One-step KL expansion.}
For $\vartheta \in [0, \tfrac \pi 2]$, define $P_{t,\vartheta}^{\sigma} \coloneqq \Pb_{O^{\sigma}_t \mid \vartheta_t^\sigma = \vartheta}$ as the law of
$O^{\sigma}_t$ conditioned to $\vartheta_t^{\sigma} = \vartheta$.
Then
\begin{align*}
    \KL \lrb{P_{t,\vartheta}^{+},P_{t,\vartheta}^{-}}
&=
    \sum_{k\in\{0,1,2\}}
    \Pb \lsb{O^{+}_t=k \mid \vartheta_t^{+}=\vartheta}
    \log\frac{\Pb \lsb{O^{+}_t=k \mid  \vartheta_t^{+}=\vartheta}}{\Pb \lsb{O^{-}_t=k \mid  \vartheta_t^{-}=\vartheta}}.
\end{align*}

\begin{lemma}[Per-round KL upper bound.]
We have that
\[
    \KL\lrb{P_{t,\vartheta}^{+},P_{t,\vartheta}^{-}}
\le
    84\cdot \e^2 \cdot \labs{\vartheta - \tfrac\pi 4}^2\;.
\]
\end{lemma}

\begin{proof}
Fix $\vartheta\in\lsb{0,\tfrac{\pi}{2}}$.
For brevity, write
\begin{align*}
a&\coloneqq \cos\vartheta\;,\qquad b\coloneqq \sin\vartheta\;, \qquad
p_\sigma \coloneqq p_\sigma(\vartheta)\;, \qquad q_\sigma \coloneqq q_\sigma(\vartheta)\;,\qquad \eta \coloneq\eta(\vartheta)\;.
\end{align*}

Set, for $k\in\{0,1,2\}$,
\[
r_k \coloneqq P_{t,\vartheta}^{+}\lsb{\{k\}}\;,\qquad s_k \coloneqq P_{t,\vartheta}^{-}\lsb{\{k\}}\;.
\]
Then $r_2=s_2=\eta$ and $r_0+r_1=s_0+s_1=1-\eta$.
Define:
\[
\alpha \coloneqq \frac{r_1}{1-\eta},\qquad \beta \coloneqq \frac{s_1}{1-\eta}.
\]

\medskip
\noindent\textbf{Step 1: Reducing to Bernoulli KL.}
By direct expansion,
\begin{align*}
\KL\lrb{P_{t,\vartheta}^{+},P_{t,\vartheta}^{-}}
&=\sum_{k=0}^2 r_k \log\frac{r_k}{s_k}\\
&= r_2\log\frac{r_2}{s_2} + r_1\log\frac{r_1}{s_1}+ r_0\log\frac{r_0}{s_0}\\
&= r_1\log\frac{r_1}{s_1}+ r_0\log\frac{r_0}{s_0}
\qquad(\text{since }r_2=s_2)\\
&=(1-\eta)\lrb{\alpha\log\frac{\alpha}{\beta}+(1-\alpha)\log\frac{1-\alpha}{1-\beta}}\\
&=(1-\eta)\,\mathrm{kl}(\alpha\|\beta),
\end{align*}
where $\mathrm{kl}$ denotes the Bernoulli KL divergence.

\medskip
\noindent\textbf{Step 2: bound $\mathrm{kl}(\alpha\|\beta)$ by a $\chi^2$-type inequality.}
Recall that, for Bernoulli laws, for all $\alpha,\beta\in(0,1)$, from \cite[Lemma 2.7]{Tsybakov2008}, 
\begin{equation}\label{eq:kl-chi2}
    \mathrm{kl}(\alpha\|\beta)
\le
    \chi^2(\alpha\|\beta)
=
    \frac{(\alpha-\beta)^2}{\beta(1-\beta)}\;.
\end{equation}
Thus,
\begin{equation}\label{eq:kl-reduced}
    \KL\lrb{P_{t,\vartheta}^{+},P_{t,\vartheta}^{-}}
\le
    (1-\eta)\cdot \frac{(\alpha-\beta)^2}{\beta(1-\beta)}\;.
\end{equation}

\medskip
\noindent\textbf{Step 3: compute $\alpha-\beta$ explicitly.}
Since $r_1=p_{+}+q_{+}-2\eta$ and $s_1=p_{-}+q_{-}-2\eta$,
\[
r_1-s_1=(p_{+}+q_{+})-(p_{-}+q_{-}).
\]
But
\[
p_{+}-p_{-} = 2\e a,\qquad q_{+}-q_{-} = -2\e b,
\]
hence
\[
r_1-s_1 = 2\e(a-b).
\]
Therefore,
\begin{equation}\label{eq:alpha-beta}
\alpha-\beta = \frac{r_1-s_1}{1-\eta} = \frac{2\e(a-b)}{1-\eta}.
\end{equation}

Plugging \eqref{eq:alpha-beta} into \eqref{eq:kl-reduced} yields
\begin{equation}\label{eq:KL-pre}
\KL\lrb{P_{t,\vartheta}^{+},P_{t,\vartheta}^{-}}
\le \frac{4\e^2(a-b)^2}{1-\eta}\cdot \frac{1}{\beta(1-\beta)}.
\end{equation}

\medskip
\noindent\textbf{Step 4: uniform lower bounds on $1-\eta$ and on $\beta(1-\beta)$.}

\emph{Lower bound on $1-\eta$.}
We have $ab\le \frac12$ because $a^2+b^2=1$ and $a,b\ge 0$ (maximum at $a=b=1/\sqrt2$).
Thus, recalling $\e \le \tfrac{1}{8}$, we have
\[
    \eta
=
    \lrb{\tfrac14-\e^2}ab
\le
    \tfrac14\cdot \tfrac12
=
    \tfrac18,
\qquad\text{hence}\qquad
    1-\eta
\ge
    \tfrac78\;.
\]
Therefore
\begin{equation}\label{eq:1minuseta}
\frac{1}{1-\eta}\le \frac{8}{7}.
\end{equation}

\emph{Lower bound on $\beta(1-\beta)$.}
Recall $\beta=s_1/(1-\eta)$ and
\[
    s_1
=
    p_{-}+q_{-}-2\eta\;.
\]
First, we lower bound $s_1$.
We have
\[
    p_-+q_- - \lrb{\tfrac12-\e}
=
    \lrb{\tfrac12-\e}(a-1)+\lrb{\tfrac12+\e}b
=
    -\lrb{\tfrac12-\e}(1-a)+\lrb{\tfrac12+\e}b\;.
\]
Using $1-a= 1- \cos \vartheta = 2\lrb{\sin(\vartheta/2)}^2$ and $b = \sin \vartheta =2\sin(\vartheta/2)\cos(\vartheta/2)$, this becomes
\[
    p_-+q_- - \lrb{\tfrac12-\e}
=
    2\sin\lrb{\tfrac{\vartheta}{2}}
    \Brb{
        \lrb{\tfrac12+\e}\cos\lrb{\tfrac{\vartheta}{2}}
        -
        \lrb{\tfrac12-\e}\sin\lrb{\tfrac{\vartheta}{2}}
    }\;.
\]
Since $\vartheta/2\in[0,\pi/4]$ we have $\cos(\vartheta/2)\ge \sin(\vartheta/2)$, and thus
\[
    \lrb{\tfrac12+\e}\cos\lrb{\tfrac{\vartheta}{2}}
    -
    \lrb{\tfrac12-\e}\sin\lrb{\tfrac{\vartheta}{2}}
=
    \lrb{\tfrac12-\e} \cdot\lrb{\cos\lrb{\tfrac{\vartheta}{2}}-\sin\lrb{\tfrac{\vartheta}{2}}}
    +2\e\cdot\cos\lrb{\tfrac{\vartheta}{2}}
\ge 0\;.
\]
Therefore
\[
    p_-+q_- \ge \tfrac12-\e\;,
\]
and hence
\[
    s_1
=
    (p_-+q_-)-2\eta
\ge
    \lrb{\tfrac12-\e}-2\cdot\tfrac18
\ge
    \tfrac14-\e
\ge
    \tfrac18\;.
\]
Second, we upper bound $s_1$.
Since $s_1=p_-+q_- -2\eta$ and $\eta=\lrb{\tfrac14-\e^2}ab$, we can rewrite
\begin{align*}
    s_1
=
    \lrb{\tfrac12-\e}a+\lrb{\tfrac12+\e}b-\lrb{\tfrac12-2\e^2}ab
=
    \lrb{\tfrac12+\e}(a+b-ab)
    +\e a\brb{\lrb{1+2\e}b-2}\;.
\end{align*}
First, notice that
\[
    \lrb{1+2\e}b-2 \le (1+2\e)-2 = -1+2\e \le -\tfrac34\;,
\]
so the correction term  $\e a(\lrb{1+2\e}b-2)$ is non-positive and, since for $a,b\in[0,1]$ we have $a+b-ab=1-(1-a)(1-b)\le 1$, recalling that $\e\le 1/8$, we get
\[
    s_1 \le \lrb{\tfrac12+\e}(a+b-ab)\le \tfrac12+\e \le \tfrac58\;.
\]

Putting the upper and lower bounds on $s_1$ together, we get
\[
    \frac57
=
    \frac{5/8}{7/8}
\ge
    \frac{s_1}{7/8}
\ge
    \frac{s_1}{1-\eta}
=
    \beta
=
    \frac{s_1}{1-\eta}
\ge
    s_1
\ge
    \tfrac18 \;.
\]
Thus $\beta\in[1/8,5/7]$, and since $\beta(1-\beta)$ is concave on $[0,1]$, its minimum on this interval is
attained at an endpoint:
\[
    \beta(1-\beta)
\ge
    \min\lrb{\tfrac18\cdot\tfrac78,\ \tfrac57\cdot\tfrac27}
=
    \min\lrb{\tfrac{7}{64},\ \tfrac{10}{49}}
=
    \tfrac{7}{64}\;.
\]
Therefore
\begin{equation}\label{eq:beta-lb}
\frac{1}{\beta(1-\beta)}\le \frac{64}{7}.
\end{equation}

\medskip
\noindent\textbf{Step 5: conclude the explicit constant.}
Combining \eqref{eq:KL-pre}, \eqref{eq:1minuseta}, and \eqref{eq:beta-lb} gives
\[
    \KL \lrb{P_{t,\vartheta}^{+},P_{t,\vartheta}^{-}}
\le
    4\e^2(a-b)^2\cdot \frac{8}{7}\cdot \frac{64}{7}
=
    \frac{2048}{49}\cdot\e^2 \cdot (a-b)^2
\le
    42 \cdot\e^2 \cdot (\cos\vartheta-\sin\vartheta)^2\;.
\]

\medskip
\noindent\textbf{Step 6: relate $(\cos\vartheta-\sin\vartheta)^2$ to $|\vartheta-\pi/4|^2$.}
Note that
\[
    \cos\vartheta-\sin\vartheta
=
    \sqrt{2} \cdot \sin \lrb{\frac{\pi}{4}-\vartheta},
\]
and since $\labs{\sin x} \le |x|$, we get
\[
    (\cos\vartheta-\sin\vartheta)^2
\le
    2 \cdot \labs{\vartheta-\frac{\pi}{4}}^2\;.
\]
We conclude that
\[
    \KL\lrb{P_{t,\vartheta}^{+},P_{t,\vartheta}^{-}}
\le
    42 \cdot \e^2 \cdot(\cos\vartheta-\sin\vartheta)^2
\le
    84 \cdot \e^2 \cdot \labs{\vartheta-\frac{\pi}{4}}^2\;.
\]
This concludes the proof.
\end{proof}

\begin{lemma}[KL chain rule]\label{lem:KL-chain}
It holds that
\begin{align}
    \KL \lrb{\Pb_{(H_T^+,O_T^+)},\Pb_{(H_T^-,O_T^-)}}
\le
    84 \cdot \e^2 \cdot \sum_{t=1}^T \E \lsb{ \labs{\vartheta_t^+-\tfrac\pi 4}^2}.
    \label{eq:KL-chain-transcript}
\end{align}
\end{lemma}
\begin{proof}
\begin{align*}
&
    \KL \lrb{\Pb_{(H_T^+,O_T^+)},\Pb_{(H_T^-,O_T^-)}}
\\
&\quad=
    \KL \lrb{\Pb_{H_{T}^+},\Pb_{H_{T}^-}}
    +
    \int_{\cH_{T}} \KL\lrb{\Pb_{O_T^+\mid H_T^+ = h},\Pb_{O_T^-\mid H_T^- = h}}  \dif \Pb_{H_T^+}(h)
\\
&\quad=
    \KL \lrb{\Pb_{(H_{T-1}^+,O_{T-1}^+,U_T)},\Pb_{(H_{T-1}^-,O_{T-1}^-,U_T)}}
    +
    \E\lsb{ \lsb{\KL\lrb{P_{T,\vartheta}^{+},P_{T,\vartheta}^{-}}}_{\vartheta = \alpha_T(H_T^+)} }
\\
&\quad=
    \KL \lrb{\Pb_{(H_{T-1}^+,O_{T-1}^+)},\Pb_{(H_{T-1}^-,O_{T-1}^-)}}
    +
    \E\lsb{ \lsb{\KL\lrb{P_{T,\vartheta}^{+},P_{T,\vartheta}^{-}}}_{\vartheta = \alpha_T(H_T^+)} }
\\
&\quad\le
    \KL \lrb{\Pb_{(H_{T-1}^+,O_{T-1}^+)},\Pb_{(H_{T-1}^-,O_{T-1}^-)}}
    +
    84\cdot \e^2 \cdot \E\lsb{ \labs{\vartheta^+_T - \tfrac\pi 4}^2 }
\\
&\quad\le
\cdots
\le
    \KL \lrb{\Pb_{(H_1^+,O_1^+)},\Pb_{(H_1^-,O_{1}^-)}} + 84 \cdot \e^2 \cdot \sum_{t=2}^T \E\lsb{ \labs{\vartheta^+_t - \tfrac\pi 4}^2 }
\\
&\quad=
    \KL \lrb{\Pb_{(U_1,O_1^+)},\Pb_{(U_1,O_{1}^-)}} + 84 \cdot \e^2 \cdot \sum_{t=2}^T \E\lsb{ \labs{\vartheta^+_t - \tfrac\pi 4}^2 }
\\
&\quad=
    \KL \lrb{\Pb_{U_1},\Pb_{U_1}} + \int_{[0,1]} \KL \lrb{ \Pb_{O_1^+ \mid U_1 = u } , \Pb_{O_1^- \mid U_1 = u }} \dif\Pb_{U_1}(u) + 84 \cdot \e^2 \cdot \sum_{t=2}^T \E\lsb{ \labs{\vartheta^+_t - \tfrac\pi 4}^2 }
\\
&\quad=
    0+\E\lsb{ \lsb{\KL\lrb{ P^+_{1,\vartheta} ,P^-_{1,\vartheta} }}_{\vartheta = \alpha_1(U_1)} } + 84 \cdot \e^2 \cdot \sum_{t=2}^T \E\lsb{ \labs{\vartheta^+_t - \tfrac\pi 4}^2 }
\\
&\quad\le
    84 \cdot \e^2 \cdot \sum_{t=1}^T \E\lsb{ \labs{\vartheta^+_t - \tfrac\pi 4}^2 }\;.
\end{align*}
    This concludes the proof.
\end{proof}

We are now ready to prove \Cref{t:LB:linear:bandit}, leveraging \Cref{lem:KL-chain}. 

\begin{proof}[Proof of \Cref{t:LB:linear:bandit}]
\paragraph{Step 1: optimal angle and a one-sided gap at $\pi/4$.}
For $\sigma\in\{+,-\}$ and $\vartheta \in \lsb{0,\tfrac \pi 2}$, define
\[
    f_\sigma(\vartheta)
\coloneqq
    \langle \mu^\sigma,(\cos\vartheta,\sin\vartheta)\rangle\;.
\]
The maximizer $\vartheta^\star_\sigma$ is attained at the unit vector pointing in the direction of $\mu^\sigma$, i.e.,
\[
(\cos\vartheta_\sigma^\star,\sin\vartheta_\sigma^\star)=\frac{\mu^\sigma}{\|\mu^\sigma\|_2}.
\]
Equivalently, using $\tan\vartheta_\sigma^\star=\mu^\sigma_2/\mu^\sigma_1$, we get
\begin{equation} \label{eq:with:tan}
    \vartheta_\sigma^\star
=
    \arctan\lrb{\frac{\frac12-\sigma\e}{\frac12+\sigma\e}}   
=
    \frac{\pi}{4}-\sigma\cdot\arctan(2\e)\;.
\end{equation}
Moreover,
\[
    \max_{\vartheta\in[0,\pi/2]} f_\sigma(\vartheta)=\|\mu^\sigma\|_2
=
    \frac{\sqrt2}{2}\cdot\sqrt{1+4\e^2}\;,
\]
while, at $\vartheta=\frac{\pi}{4}$, we have
\[
    f_\sigma\lrb{\tfrac{\pi}{4}}
=
    \lrb{\tfrac12+\sigma\e}\cdot\tfrac{\sqrt2}{2} + \lrb{\tfrac12-\sigma\e}\cdot\tfrac{\sqrt2}{2}
=
    \tfrac{\sqrt2}{2}\;.
\]
We define the gap (which is independent of $\sigma \in \{+,-\}$)
\[
    g(\e)
\coloneqq
    \lno{\mu^\sigma}_2-f_\sigma\lrb{\tfrac{\pi}{4}}
=
    \frac{\sqrt2}{2} \cdot\lrb{\sqrt{1+4\e^2}-1}
=
    \frac{\sqrt2}{2}\cdot \frac{4\e^2}{\sqrt{1+4\e^2}+1}\;,
\]
and, since $\sqrt{1+4\e^2}\le 1+2\e^2\le 1+\frac{1}{32}=\frac{33}{32}$ (because $\e \le \frac18$), we have
\[
    g(\e)
\ge
    \frac{\sqrt2}{2}\cdot \frac{4\e^2}{\frac{33}{32}+1}
=
    \frac{\sqrt2}{2}\cdot \frac{128}{65}\cdot \e^2
=
    \frac{64\sqrt2}{65}\cdot \e^2\;.
\]

\paragraph{Step 2: playing on the wrong side costs at least $g(\e)$.}
For $\vartheta\in\lsb{\frac{\pi}{4},\frac{\pi}{2}}$,
\[
    f_+'(\vartheta)
=
    -\lrb{\tfrac12+\e}\sin\vartheta+\lrb{\tfrac12-\e}\cos\vartheta
\le
    -\lrb{\tfrac12+\e}\cos\vartheta+\lrb{\tfrac12-\e}\cos\vartheta
=
    -2\e\cos\vartheta<0\;,
\]
so $f_+$ is decreasing on $\lsb{\frac{\pi}{4},\frac{\pi}{2}}$, hence
\[
    \vartheta
\ge
    \frac{\pi}{4}
\quad\Longrightarrow\quad
    \lno{\mu^+}_2-f_+(\vartheta)
\ge
    \lno{\mu^+}_2-f_+\lrb{\tfrac{\pi}{4}}
=
    g(\e).
\]
Similarly, $f_-$ is increasing on $\lsb{0,\frac{\pi}{4}}$, hence
\[
    \vartheta
\le
    \frac{\pi}{4}
\quad\Longrightarrow\quad
    \lno{\mu^-}_2-f_-(\vartheta)
\ge
    g(\e)\;.
\]
Therefore, if we consider the instantaneous pointwise regret
\[
    \Delta_t^\sigma
\coloneqq
    \lno{\mu^\sigma}_2-\langle \mu^\sigma,(\cos\vartheta_t^\sigma,\sin\vartheta_t^\sigma)\rangle
\]
we have that
\[
    \Delta_t^+
\ge
    g(\e) \cdot\I\lcb{\vartheta_t^+\ge \tfrac{\pi}{4}}\;,
\qquad
    \Delta_t^-
\ge
    g(\e)\cdot \I\lcb{\vartheta_t^-< \tfrac{\pi}{4}}.
\]

\paragraph{Step 3: change-of-measure inequality.}
Consider the events
\[
    A_t
\coloneqq
    \lcb{(u_1,k_1,\dots,u_{T-1},k_{T-1},u_T,k_T) \in \cH_T \times\lcb{0,1,2} \mid \alpha_t(u_1,k_1,\dots,u_{t-1},k_{t-1},u_t) < \frac{\pi}{4}}\;,
\]for $t = 1,2,\dots,T$.
Recall that for any two probability measures $\mu$ and $\nu$ over some sigma algebra $\cG$ and any event $G \in \cG$, we have that 
\[
    \labs{\mu[G]-\nu[G]}
\le
    \lno{\mu-\nu}_{\textrm{TV}}\;,
\]
and hence, if we define the two push-forward probability measures on the sample space $\cH_T \times \lcb{0,1,2}$ as $\Pb_+ \coloneqq \Pb_{(H_T^+,O_T^+)}$ and $\Pb_-\coloneqq \Pb_{(H_T^-,O_T^-)}$, we have that
\[
    \Pb_+\lsb{A_t^c}+\Pb_-\lsb{A_t}
=
    1-\lrb{\Pb_+\lsb{A_t}-\Pb_-\lsb{A_t}}
\ge
    1-\lno{\Pb_{(H_T^+,O_T^+)}-\Pb_{(H_T^-,O_T^-)}}_{\textrm{TV}}\;.
\]
Combining with Step 2, summing over $t$, using Pinsker inequality \citep{Tsybakov2008}, and Lemma~\ref{lem:KL-chain}, we get
\begin{align}
    R_T^+ + R_T^-
&=
    \sum_{t=1}^T \E[\Delta_t^+]+\sum_{t=1}^T \E[\Delta_t^-]
\ge
    g(\e) \cdot\sum_{t=1}^T\lrb{\Pb_+\lsb{A_t^c}+\Pb_-\lsb{A_t}}
\nonumber\\
&\ge
    \frac{64\sqrt2}{65}\cdot \e^2 \cdot\sum_{t=1}^T\lrb{\Pb_+\lsb{A_t^c}+\Pb_-\lsb{A_t}}
\nonumber\\
&\ge
    \frac{64\sqrt2}{65}\cdot \e^2 \cdot T \cdot \lrb{1-\lno{\Pb_{(H_T^+,O_T^+)},\Pb_{(H_T^-,O_T^-)}}_{\textrm{TV}}}
\nonumber\\
&\ge
    \frac{64\sqrt2}{65}\cdot \e^2 \cdot T \cdot \lrb{1-\sqrt{\tfrac12 \cdot  \KL\lrb{\Pb_{(H_T^+,O_T^+)},\Pb_{(H_T^-,O_T^-)}}}}
\nonumber\\
&\ge
    \frac{64\sqrt2}{65}\cdot \e^2 \cdot T \cdot \lrb{1-\sqrt{42 \cdot \e^2 \cdot \sum_{t=1}^T \E \lsb{\labs{\vartheta_t^+-\tfrac{\pi}{4}}^2}  }}\;.
\label{eq:sum-regret-vs-tv}
\end{align}

\paragraph{Step 4: Angles to Regret.}
Next, let's relate $\sum_{t=1}^T\E\bsb{ |\vartheta_t^+-\frac{\pi}{4}|^2}$ to $R_T^+$.

For $\vartheta \in \lsb{0, \tfrac{\pi}{2}}$, let
$
    z(\vartheta)
=
    (\cos\vartheta,\sin\vartheta)
$
and
$
    z_+^\star
=
    z(\vartheta_+^\star)
=
    \mu^+/\lno{\mu^+}_2$.
Then
\[
    \Delta_t^+
=
    \lno{\mu^+}_2 \cdot \brb{1-\langle z(\vartheta_t^+),z_+^\star\rangle}
=
    \lno{\mu^+}_2 \cdot \lrb{1-\cos(\vartheta_t^+-\vartheta_+^\star)}
=
    2\cdot \lno{\mu^+}_2 \cdot \lrb{\sin \lrb{\tfrac{\vartheta_t^+-\vartheta_+^\star}{2}}}^2.
\]
Since
$\labs{\sin\lrb{\tfrac{x}{2}}}\ge \tfrac{|x|}{\pi}$ for $x\in\lsb{-\frac{\pi}{2},\frac{\pi}{2}}$, leveraging that $|\vartheta_t^+-\vartheta_+^\star|\le \tfrac{\pi}{2}$, we have
\[
    \Delta_t^+
\ge
    2\cdot \lno{\mu^+}_2 \cdot \lrb{\frac{\labs{\vartheta_t^+-\vartheta_+^\star}}{\pi}}^2
\quad\Longrightarrow\quad
    \labs{\vartheta_t^+-\vartheta_+^\star}^2
\le
    \frac{\pi^2}{2\cdot \lno{\mu^+}_2}\cdot \Delta_t^+\;.
\]
Using $\|\mu^+\|_2\ge \frac{\sqrt{2}}{2}$, we then get
\[
    \labs{\vartheta_t^+-\vartheta_+^\star}^2
\le
    \frac{\pi^2\sqrt2}{2}\cdot\Delta_t^+\;.
\]
Also, since $\vartheta_+^\star=\frac{\pi}{4}-\arctan(2\e)$,
from \eqref{eq:with:tan},
we get
\[
    \labs{\vartheta_+^\star-\tfrac{\pi}{4}}
=
    \arctan(2\e)
\le
    2\e\;.
\]
Therefore,
\[
    \labs{\vartheta_t^+-\tfrac{\pi}{4}}^2
\le
    2\labs{\vartheta_t^+-\vartheta_+^\star}^2 + 2\labs{\vartheta_+^\star-\tfrac{\pi}{4}}^2
\le
    \pi^2 \sqrt2\cdot \Delta_t^+ + 8\cdot \e^2.
\]
Summing over $t$ and taking expectations yields
\[
    \sum_{t=1}^T \E\lsb{\left|\vartheta_t^+-\tfrac{\pi}{4}\right|^2}
\le
    \pi^2\sqrt2 \cdot R_T^+ + 8 \cdot \e^2\cdot T.
\]
Plugging back into \eqref{eq:sum-regret-vs-tv}, and recalling that $\e = \gamma \cdot T^{-1/4}$ and that $\gamma = \tfrac{1}{6}$, we get
\begin{align}
    R_T^+ + R_T^-
&\ge
    \frac{64\sqrt2}{65}\cdot \e^2 \cdot T \cdot \lrb{1-\sqrt{42 \cdot \e^2 \cdot \lrb{\pi^2\sqrt2 \cdot R_T^+ + 8 \cdot \e^2\cdot T}  }}
\nonumber
\\
&=
    \frac{64\sqrt2}{65}\cdot \gamma^2  \cdot \lrb{1-\gamma \cdot\sqrt{42  \cdot  \lrb{\pi^2\sqrt2  \cdot \frac{R_T^+}{\sqrt{T}} + 8 \cdot\gamma^2}  }}\cdot\sqrt{T}
\nonumber
\\
&=
    \frac{16\sqrt2}{585}  \cdot \lrb{1-\frac{1}{6} \cdot\sqrt{42  \cdot  \lrb{\pi^2\sqrt2  \cdot \frac{R_T^+}{\sqrt{T}} + \frac{2}{9}}  }}\cdot\sqrt{T}.
\label{eq:kl-vs-regret}
\end{align}

\paragraph{Step 5: conclusion.}
Now, if $R_T^+ \ge \frac{1}{140} \cdot \sqrt{T}$ then the conclusion follows.
Otherwise, $R_T^+ < \frac{1}{140} \cdot \sqrt{T}$.
Then, from \eqref{eq:kl-vs-regret}, we have that
\[
    R_T^-
\ge
    \lrb{\frac{16\sqrt2}{585} \cdot \lrb{1-\frac{1}{6} \cdot\sqrt{42  \cdot  \lrb{\pi^2\sqrt2  \cdot \frac{1}{140} + \frac{2}{9}}  }} }\cdot\sqrt{T}
\ge
    \frac{1}{140} \cdot \sqrt{T}\;.
\]
This concludes the proof.
\end{proof}

\end{document}